\documentclass[11pt,letterpaper]{article}

\usepackage{graphicx}
\usepackage{rotating}
\usepackage{algorithm}
\usepackage{color}
\usepackage{amsfonts,amsmath,amssymb,amsthm}
\usepackage{bm}
\usepackage{tikz}
\usepackage{xcolor}
\usepackage{ifthen}
\usepackage{pgf}
\usepackage{setspace}
\usepackage[authoryear,longnamesfirst,round]{natbib}
\usetikzlibrary{arrows,shapes}
\usepackage{paralist}
\usepackage{xr}
\externaldocument{BT_nonstationary_april_2018_appendix}


\newcommand{\wh}{\widehat} 		

\newcommand{\beq}{\begin{equation}}
\newcommand{\eeq}{\end{equation}}
\newcommand{\bea}{\begin{eqnarray}}
\newcommand{\eea}{\end{eqnarray}}
\newcommand{\ba}{\begin{array}}
\newcommand{\ea}{\end{array}}
\newcommand{\bi}{\begin{itemize}}
\newcommand{\ei}{\end{itemize}}
\newcommand{\ben}{\begin{enumerate}}
\newcommand{\een}{\end{enumerate}}

\renewcommand{\r}{\right}
\renewcommand{\l}{\left}

\theoremstyle{plain}
\newtheorem{theorem}{Theorem}
\newtheorem{lemma}{Lemma}
\newtheorem{assumption}{Assumption}

\newcommand{\blind}{0}

\addtolength{\oddsidemargin}{-.5in}%
\addtolength{\evensidemargin}{-.5in}%
\addtolength{\textwidth}{1in}%
\addtolength{\textheight}{1.3in}%
\addtolength{\topmargin}{-.8in}%

\begin{document}

\def\spacingset#1{\renewcommand{\baselinestretch}%
{#1}\small\normalsize} \spacingset{1}


\if0\blind
{
  \title{\bf Determining the dimension of factor structures in non-stationary large datasets}
  \author{Matteo Barigozzi
  \thanks{
    We wish to thank Giuseppe Cavaliere and Laura Coroneo for helpful comments. We are grateful to the participants to the 1st Italian Workshop on Econometrics and Empirical Economics (Milan, 25-26 January, 2018); the Financial Econometrics Seminar at CREST (Paris, 15 February, 2018); and the Economics Seminar Series at Aarhus University (Aarhus, 1 March, 2018); the Workshop on Big Data in Financial Markets at Cambridge University (Cambridge, 24-25 May, 2018); the Workshop on Macroeconomic and Financial Time Series at Lancaster University (Lancaster, 31 May-1 June, 2018). 
}\hspace{.2cm}\\
    Department of Statistics, LSE\\
     \\
    Lorenzo Trapani \\
    School of Economics, University of Nottingham}
  \maketitle
} \fi

\if1\blind
{
  \bigskip
  \bigskip
  \bigskip
  \begin{center}
    {\LARGE\bf \bf Determining the dimension of factor structures in non-stationary large datasets}
\end{center}
  \medskip
} \fi

\bigskip
\begin{abstract}

We propose a procedure to
determine the dimension of the common factor space in a large,
possibly non-stationary, dataset. Our procedure is designed to
determine whether there are (and how many) common factors (i) with
linear trends, (ii) with stochastic trends, (iii) with no trends, i.e. stationary. 
Our analysis is based on the fact
that the largest eigenvalues of a suitably scaled covariance matrix of
the data (corresponding to the common factor part) diverge, as the dimension $N$ of the dataset diverges, whilst the
others stay bounded. Therefore, we propose a class of randomised test
statistics for the null that the $p$-th eigenvalue diverges, based directly on the estimated eigenvalue. The tests only
requires minimal assumptions on the data, and no restrictions on the relative rates of divergence of $N$ and $T$ are imposed.
Monte Carlo evidence shows that our procedure
has very good finite sample properties, clearly dominating competing
approaches when no common factors are present.
We illustrate our methodology through an application to US bond yields
with different maturities observed over the last 30 years. A common
linear trend and two common stochastic trends are found and identified
as the classical level, slope and curvature factors.
\end{abstract}

\noindent%
{\it Keywords:}  Common factors;  Unit roots; Common trends; Randomised tests.
\vfill

\newpage
\section{Introduction and main ideas \label{intro}}

In this paper, we propose a methodology to estimate the dimension of the
space spanned by the common (non-stationary) factors in the large approximate
factor model%
\begin{equation}
X_{t}=\Lambda \mathcal{F}_{t}+u_{t},  \label{model}
\end{equation}%
where $\mathcal{F}_{t}$ is the $r\times 1$ vector of common factors and $\Lambda $ is an $N\times r$ matrix of factor loadings. We will also make use of the scalar version of (\ref%
{model})%
\begin{equation}
X_{i,t}=\lambda _{i}^{\prime }\mathcal{F}_{t}+u_{i,t},  \label{modelscalar}
\end{equation}%
with $1\leq i\leq N$ and $1\leq t\leq T$. Although the relevant assumptions
are detailed in the remainder of the paper, in (\ref{model}) we are
assuming that there are three possible categories of common factors in the
vector $\mathcal{F}_{t}$: factors with a linear trend and an additional,
either an $I\left( 1\right) $ or $I\left( 0\right) $, zero mean component;
pure, zero mean $I\left( 1\right) $ factors with no trends; and, finally,
stationary common factors. Each group may well have dimension zero, e.g. factors with linear trends may not be present,
etc. We also assume, throughout the paper, that the idiosyncratic terms $u_{i,t}$ are $I\left(
0\right) $ for each $i$. Based on the classification above, we develop a technique to
estimate the number of common factors which have linear trends, and,
separately, the ones which are zero mean $I\left( 1\right) $. In particular, we use the eigenvalues of
the second moment matrix of the data, checking whether they
diverge to infinity as $\min \left( N,T\right) \rightarrow \infty $, due to the presence of common factors, or whether they are bounded. In order to construct tests for the asymptotic behaviour of the eigenvalues, (i) we derive bounds on their divergence rates, and (ii) based on those bounds we propose a randomisation procedure which produces a statistic for which we are able to derive the asymptotic behaviour under the null and the alternative hypotheses.

Determining the presence (or not), and the number of non-stationary factors in \eqref{model} can be useful in a variety of applications. 
First, we can assess the presence of unit roots in a large panel $X_{t}$ - see e.g. \citet{mp04}, \citet{baing04}, \citet{bai2006estimation}, \citet{bai2009panel}, \citet{kapetanios2011panels}, and \citet{pesaran2013panel}. Second, 
it is easy to see that $I\left( 1\right) $\ common factors in (%
\ref{model}) entail the presence of common trends and therefore cross-unit cointegration 
among the components of $X_{t}$ - see e.g. \citet{EP1994}, \citet{SW1988}, \citet{gengenbach2009}, and \citet{zyr18}. 
Indeed, if in \eqref{model} there are, say, $r$ common $I(1)$ factors (common trends), then there are $(N-r)$ cointegration relations - see also \citet{onatski2016} for an alternative approach to cointegration in large VARs. Empirical applications considering common $I(1)$ factors include: \citet{bai04} on employment
fluctuations across 60 industries in the US, \citet{moon2007} on a panel of interest rates at different maturities in the US\ and
Canada, and \citet{engel2015}, who use
(\ref{model}) as part of a strategy to develop a forecasting technique applied to a panel of bilateral US dollar
rates against 17 OECD countries. Panel models with linear trends have also been employed in the context of modelling macro-econocmic data - see \citet{maciejowska2010} - and have also proven helpful in
modelling US temperature data - see \citet{chenwu}. In all these applications, the first step of the
analysis would be the determination of the number of non-stationary and stationary common factors.

Starting at least from \citet{chamberlainrothschild83}, the literature has
developed a plethora of contributions to determine the number of common
factors in a large panel. Most methodologies focus on the case of
stationary datasets, and existing approaches can be broadly
grouped into two categories. Several studies rely on
setting a threshold for the eigenvalues of the covariance or of other second moment matrices of the $%
X_{i,t}$s - see \citet{baing02}, \citet{hallinliska07}, and \citet{ABC10}. In addition to this, the literature
has explored the possibility of using ratios of adjacent eigenvalues -
see \citet{onatski09,onatski10}, \citet{lam2012}, and \citet{ahnhorenstein13}.
Although the two approaches have different merits, the rationale
underpinning them is the same: in the presence of, say, $r$ common factors,
the largest $r$ eigenvalues of the covariance matrix of the $X_{i,t}$s\
diverge to infinity as $\min \left( N,T\right) \rightarrow \infty $, whilst
all the remaining eigenvalues stay bounded.

Fewer contributions are available to deal directly with factor models for
non-stationary data. In particular, developing an inferential theory for $\Lambda $
and $\mathcal{F}_{t}$ in a model similar to (\ref{model}) has been paid
significant attention by the statistical literature: examples include \citet{baing04}, \citet{bai04}, \citet{penaponcela}, \citet{zhangpangao}, and \citet{zyr18}. 
More specifically, \citet{bai04} extends the results by \citet{penaponcela} to the large dimensional setting, i.e. letting $N\to\infty$, and develops the inferential theory and a criterion, from which it is
possible to estimate the number of common stationary and non-stationary factors. \citet{zyr18} propose a method based on the ratio of eigenvalues of a transformation of the long-run covariance matrix to find the number of $I(d)$ factors for $d\ge 0$, but imposing the constraint $\frac N{T^\kappa}\to c\in(0,\infty)$, for $\kappa\in\left(0,\frac 12\right)$, as $\min\left( N,T\right) \to \infty$. Note that none of these two approaches considers the case of common linear trends.
To this end, \citet{maciejowska2010} develops an inferential theory for
estimated common factors and loadings in a set-up like (\ref{model}), where also linear trends are allowed, 
but no criteria for the determination of the number of common factors are
proposed. Moreover, none of these contributions deals with the case in which there are no common (non-stationary or stationary) factors: thus, these approaches cannot detect whether $X_t$ is stationary, i.e. it has no common trends, or not. Finally, \citet{baing04} propose a method to assess the presence or not of stochastic trends in large panels. However, in their setup it is assumed that at least one stationary factor is always present. Moreover, in order to assess the presence of non-stationary factors their procedure requires to estimate first the number of factors and the factors themselves using differenced data and then to test for unit roots or cointegration in the cumulated estimated factors. Although asymptotically valid, such an approach might suffer of efficiency loss due to its two-steps.

\bigskip
Our paper fills the gaps mentioned above.  Although the main arguments are laid out in the
remainder of the paper, here we present a heuristic preview of how the procedure
works. 

To begin with, in the presence of linear trends, it can be expected that the
sample second moment matrix of $X_{t}$ will diverge as fast as $T^{3}$.
Also, due to the well known eigenvalue separation property of large factor
models, it can be expected that the eigenvalues corresponding to common
factors should diverge as fast as $N$ (see \citealp{maciejowska2010}). This suggests considering the
eigenvalues of $T^{-3}\sum_{t}X_{t}X_{t}^{\prime }$ (denoted as, say, $\nu
_{1}^{\left( p\right) }$) to decide between%
\begin{equation*}
\left\{ 
\begin{tabular}{l}
$H_{0,1}^{(p)}:\nu _{1}^{\left( p\right) }\rightarrow \infty $, \\ 
$H_{A,1}^{(p)}:\nu _{1}^{\left( p\right) }<\infty $,%
\end{tabular}%
\right.
\end{equation*}%
as $\min \left( N,T\right) \rightarrow \infty $; the test can be carried out
for $p=1,2,...$, stopping as soon as the null is rejected. Similarly,
considering the zero mean, $I\left( 1\right) $ common factors, the Functional Central Limit Theorem (FCLT)
suggests that the second moment matrix of $X_{t}$ will diverge as fast as $%
T^{2}$, again with the eigenvalues corresponding to the common factors
diverging as fast as $N$ (see \citealp{bai04}). Thus, one could study the eigenvalues of $%
T^{-2}\sum_{t}X_{t}X_{t}^{\prime }$ (denoted as, say, $\nu _{2}^{\left(
p\right) }$), and decide between%
\begin{equation*}
\left\{ 
\begin{tabular}{l}
$H_{0,2}^{(p)}:\nu _{2}^{\left( p\right) }\rightarrow \infty $, \\ 
$H_{A,2}^{(p)}:\nu _{2}^{\left( p\right) }<\infty $,%
\end{tabular}%
\right.
\end{equation*}%
as $\min \left( N,T\right) \rightarrow \infty $, carrying out the test as
above. The output of these two steps is an estimate of the number of common
factors which have a linear trend and of those which are genuinely zero mean 
$I\left( 1\right) $ processes, respectively. Note that, in both steps, if we reject the null-hypothesis when $p=1$, we are in fact saying that there are no common factors. This approach could be complemented
by using $%
T^{-1}\sum_{t}\Delta X_{t}\Delta X_{t}^{\prime }$ and determining the number
of total common factors as suggested in \citet{trapani17}, which would provide an indirect estimate of the number of common stationary
factors. 

From a technical point of view, the implementation of the algorithm
described above presents one difficulty: we are unable to construct test
statistics which converge to a distributional limit under the null
hypotheses, and the best result we can obtain are rates. Thus, we base our
tests on randomising the test statistic. This approach builds on an idea of %
\citet{pearson50}, and it has been exploited in numerous contexts - see e.g. \citet{corradi2006}, \citet{bandi2014} and \citet{trapani17}. A major advantage of this procedure is that only (strong)
rates are needed, and these can be derived under quite general assumptions.
In particular, we derive our rates (and, thus, we are able to apply our
test) under no restrictions on the relative rates of divergence of $N$ and $%
T $ as they pass to infinity, which can be compared with the standard
restriction that as $\min \left (N,T\right) \rightarrow \infty $, $\frac{N%
}{T}\rightarrow c\in \left( 0,\infty \right) $, often assumed in random
matrix theory (see also \citealp{onatski2016}, where a similar restriction is needed); this entails that our procedure can be applied to virtually any dataset, being
particularly useful when either dimension is much bigger than the other. Also, our
theory requires milder restrictions on the finiteness of moments than other
contributions in the literature - see e.g. \citet{bai04} - and allows for
arbitrary levels of (weak) cross-correlation among the idiosyncratic errors $%
u_{i,t}$. 

The remainder of the paper is organised as follows. In Section \ref{theory},
we spell out the main assumptions and (in Section \ref{results}) we study
the strong rates of convergence of the eigenvalues of various rescalings of
the second moment matrix of $X_{t}$. The testing algorithm is presented in
Section \ref{test}. Numerical evidence from simulations is in Section \ref%
{numerics}, where we also report an empirical illustration. Finally, Section %
\ref{conclusions} concludes. Proofs and technical results are in appendix.

NOTATION. We define the Euclidean norm of a vector $a=\left[ a_{1},...,a_{n}%
\right] $ as $\left\Vert a\right\Vert =\left( \sum_{i=1}^{n}a_{i}^{2}\right)
^{1/2}$; \textquotedblleft a.s.\textquotedblright\ stands for
\textquotedblleft almost surely\textquotedblright , with orders of magnitude
for an a.s. convergent sequence (say $s_{T}$) being denoted as $O_{a.s.}\left(
T^{\varsigma }\right) $ and $o_{a.s.}\left( T^{\varsigma }\right) $ when, for some $%
\epsilon >0$ and $\tilde{T}<\infty $, $P\left[ \left\vert T^{-\varsigma
}s_{T}\right\vert <\epsilon \text{ for all }T\geq \tilde{T}\right] =1$ and $%
T^{-\varsigma }s_{T}\rightarrow 0$ a.s., respectively; $I_{A}\left( x\right) 
$ is the indicator function of a set $A$; finally, $C_{0}$, $C_{1}$, etc...
denote positive, finite constants whose value may differ from line to line.
Other relevant notation is introduced later on in the paper.

\section{Theory\label{theory}}

In this section, we \textit{(a)} lay out our main model - equation (\ref%
{model}) - in more precise terms and spell out the relevant assumptions
(Section \ref{assumption}), and \textit{(b)} present the main results on the
eigenvalues of various rescaled versions of the sample second moment matrix
of $X_{t}$ (Section \ref{results}).

\subsection{Model and assumptions\label{assumption}}

Recall the scalar version of our model (\ref{modelscalar}):
\begin{equation}
X_{i,t}=\lambda _{i}^{\prime }\mathcal{F}_{t}+u_{i,t}, \quad 1\le i\le N,  \label{modelscalar2}
\end{equation}
where $\lambda _{i}$ and $\mathcal{F}_{t}$ are $r\times 1$ vectors.
We begin with a representation result which, essentially, states that the
number of common factors with a linear trend (and, possibly, further
components which may be $I\left( 0\right) $ or $I\left( 1\right) $) can be
either zero - no common factors with linear trends - or $1$. This result is
originally due to \citet{maciejowska2010}, and we report it hereafter, as a
lemma, for convenience. We assume that%
\begin{equation}
\mathcal{F}_{t}=A\left( d_{1}t\right) +B\psi _{t},  \label{f-tilde}
\end{equation}%
where $A$ is a non-zero $r\times 1$ vector, $B$ an $r\times r$ matrix, and, more 
importantly, in (\ref{f-tilde}) $d_{1}$ is a dummy variable, which has the purpose
 to entertain the possibility that there are linear trends or not, according as $d_1=1$ or $0$, respectively. As far
as the $r$-dimensional vector $\psi _{t}$ is concerned, its components are
allowed to be a mixture of $I\left( 0\right) $ and $I\left( 1\right) $ processes, with
no linear trends.

We consider the following assumption, which ensures that the $\mathcal{F}_{t}$s are fully identified.

\begin{assumption}
\label{maciejowska}It holds that: (i) $A$ is non-zero; (ii) $rank\left(
B\right) =r$; (iii) the vector $\psi _{t}$ can be rearranged and partitioned
as $\left[ \psi _{at}^{\prime },\psi _{bt}^{\prime }\right] ^{\prime }$,
where $\psi _{at}\sim I\left( 1\right) $ has dimension $r_{2}+d_{2}$ and $%
\psi _{bt}\sim I\left( 0\right) $ has dimension $r_{3}+\left( 1-d_{2}\right) 
$, where $d_{2}$ is a dummy variable.
\end{assumption}

By part \textit{(ii)} of Assumption \ref{maciejowska}, $B$ has full rank,
which ensures the identification of the vector $\mathcal{F}_{t}$
irrespective of whether there is a trend or not. When there are trends, that
is when $d_{1}=1$, part \textit{(i)} of the assumption ensures that they do
have an impact on $\mathcal{F}_{t}$. Finally, by part \textit{(iii)} there
could be both $I\left( 1\right) $ and $I\left( 0\right) $ factors in the
vector $\psi _{t}$, sorted in no particular order.

\begin{lemma}
\label{maciej}Under Assumption \ref{maciejowska}, model (\ref{modelscalar2}) can be
equivalently represented as%
\begin{equation}
X_{i,t}=\lambda _{i}^{\left( 1\right) }f_{t}^{\left( 1\right) }+\lambda
_{i}^{\left( 2\right) \prime }f_{t}^{\left( 2\right) }+\lambda _{i}^{\left(
3\right) \prime }f_{t}^{\left( 3\right) }+u_{i,t}, \quad 1\le i\le N,  \label{model-scalar}
\end{equation}%
where $\lambda _{i}^{\left( 1\right) }$ and  $f_{t}^{\left( 1\right) }$ are $r_1\times 1$ with $0\le r_1\le 1$, $\lambda
_{i}^{\left( 2\right)}$ and $f_{t}^{\left( 2\right) }$ are $r_2\times 1$ vectors with $0\le r_2\le \min(N,T)$, $\lambda
_{i}^{\left( 3\right)}$ and $f_{t}^{\left( 3\right) }$ are $r_3\times 1$ vectors with $0\le r_3\le \min(N,T)$, and such that $r=r_1+r_2+r_3$ and $\lambda_{i}'=(\lambda_{i}^{(1)'}\lambda_{i}^{(2)'}\lambda_{i}^{(3)'})$ for all $i$.\\ 
Moreover, the common non-stationary factors are defined by the following equations%
\begin{eqnarray}
f_{t}^{\left( 1\right) } &=&d_{1}t+d_{2}f_{t}^{\left( 1\right) \dag }+\left(
1-d_{2}\right) g_{t},  \label{factors-a} \\
f_{t}^{\left( 1\right) \dag } &=&f_{0}^{\left( 1\right) \dag
}+\sum_{j=1}^{t}e_{t}^{\left( 1\right) },  \label{factors-a-1} \\
f_{t}^{\left( 2\right) } &=&f_{0}^{\left( 2\right)
}+\sum_{j=1}^{t}e_{t}^{\left( 2\right) }, \label{factors-b}
\end{eqnarray}%
where in (\ref{factors-a})-(\ref{factors-b}): $f_{t}^{\left( 1\right)\dag }$, $g_t$ and $e_t^{(1)}$ are $r_{1}\times 1$ vectors, $e_t^{(2)}$ is an $r_{2}\times 1$
vector, $e_{t}^{\left( 1\right) }$, $e_{t}^{\left( 2\right) }$, $g_{t}$ and $f_{t}^{\left( 3\right) }$ are $I\left( 0\right) $, and $d_1$ and $d_2$ are dummy variables.
\end{lemma}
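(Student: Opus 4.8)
The plan is to treat this as a purely linear-algebraic statement: the idiosyncratic term $u_{i,t}$ plays no role and is carried through unchanged, so the whole content is that $\lambda_i'\mathcal{F}_t$ can be rewritten after a change of basis in factor space. Concretely, I would construct an invertible $r\times r$ matrix $R$ (depending only on $A$, $B$, $d_1$, $d_2$) such that $R\mathcal{F}_t=(f_t^{(1)\prime},f_t^{(2)\prime},f_t^{(3)\prime})'$ has precisely the block structure of (\ref{factors-a})--(\ref{factors-b}); then, setting $(\lambda_i^{(1)\prime},\lambda_i^{(2)\prime},\lambda_i^{(3)\prime})=\lambda_i'R^{-1}$ partitioned conformably and writing $\lambda_i'\mathcal{F}_t=(\lambda_i'R^{-1})(R\mathcal{F}_t)$ gives (\ref{model-scalar}). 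This is essentially the argument of \citet{maciejowska2010}, restated for convenience. When $d_1=0$ (so $r_1=0$, no linear trend) the claim is immediate: $\mathcal{F}_t=B\psi_t$ with $B$ invertible by Assumption~\ref{maciejowska}(ii), so $R=B^{-1}$ gives $R\mathcal{F}_t=\psi_t$; by Assumption~\ref{maciejowska}(iii) its $I(1)$ sub-vector becomes $f_t^{(2)}=f_0^{(2)}+\sum_{j=1}^t e_j^{(2)}$ with $e_t^{(2)}=\Delta\psi_{at}$ an $I(0)$ process, and its $I(0)$ sub-vector becomes $f_t^{(3)}$.

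When $d_1=1$, set $a=B^{-1}A$, which is non-zero by Assumption~\ref{maciejowska}(i); then $B^{-1}\mathcal{F}_t=at+\psi_t$. Because $B$ has full rank, no non-zero linear combination of the coordinates of $\mathcal{F}_t$ can equal the deterministic term $t$ alone, which is exactly why the trend-bearing factor must take the composite form (\ref{factors-a}). I would then build a second invertible matrix $Q$, take $R=QB^{-1}$, and specify the rows of $Q$ as follows: its first row $q_1$ satisfies $q_1'a=1$ while every other row is orthogonal to $a$, so that the trend appears only in the first coordinate of $R\mathcal{F}_t$, namely $t+q_1'\psi_t$; depending on whether $q_1'\psi_t$ is $I(1)$ or $I(0)$ — the former being forced exactly when $a$ lies in the coordinate subspace on which $\psi_t$ is $I(1)$ — one sets $d_2=1$ or $d_2=0$ and reads off $f_t^{(1)}=t+f_t^{(1)\dag}$ or $f_t^{(1)}=t+g_t$; the next $r_2$ rows are chosen orthogonal to $a$ and outside the $I(0)$-coordinate subspace of $\psi_t$, yielding the genuinely $I(1)$ vector $f_t^{(2)}$; the last $r_3$ rows are chosen orthogonal to $a$ and in the $I(0)$-coordinate subspace, yielding the $I(0)$ vector $f_t^{(3)}$. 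A dimension count using $\dim\psi_{at}=r_2+d_2$ and $\dim\psi_{bt}=r_3+1-d_2$ shows these rows can be chosen linearly independent, so $Q$ is invertible. It then only remains to put each non-stationary block in the stated ``initial value plus partial sum of an $I(0)$ sequence'' form, using that a linear combination of $I(1)$ processes is $I(1)$ (hence has $I(0)$ first differences) while a linear combination of $I(0)$ processes is $I(0)$; this delivers (\ref{factors-a-1}), (\ref{factors-b}) and the stated properties of $e_t^{(1)}$, $e_t^{(2)}$, $g_t$ and $f_t^{(3)}$.

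The main obstacle is the construction of $Q$ in the $d_1=1$ case: the deterministic trend must be concentrated in a single coordinate, the $I(1)$ and $I(0)$ subspaces must remain cleanly separated after the rotation, and the prescribed block dimensions must be matched, all at once. This forces a short case split on $d_2$ and on whether $a$ has a non-zero component in the $I(0)$-coordinate subspace of $\psi_t$, together with a little care for the degenerate configurations $r_1=0$, $r_2=0$ or $r_3=0$; the remaining steps are routine bookkeeping.
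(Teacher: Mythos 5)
Your argument is essentially the paper's own: both proofs reduce the lemma to a full-rank linear reparametrisation of the factor space that concentrates the deterministic trend in a single coordinate, the paper by factorising $(A|B)=P\left(I_{r}|E\right)$ and reading off $P^{-1}\mathcal{F}_{t}$, you by constructing $R=QB^{-1}$ with $Qa$ equal to the first standard basis vector, where $a=B^{-1}A$. If anything, your version is more explicit than the paper's about the one delicate point --- choosing the remaining rows of $Q$ so that the $I(1)$ and $I(0)$ blocks separate while staying orthogonal to $a$, and letting $d_{2}$ (and hence the split of $r-1$ into $r_{2}$ and $r_{3}$) be dictated by whether $a$ has a non-zero component in the $I(0)$ coordinate subspace --- which the paper's proof leaves implicit with a reference to \citet{maciejowska2010}.
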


Lemma \ref{maciej} states that the number of linear trends is either zero or
one: if an identified $k$-dimensional vector of common factors has linear
trends, this is tantamount to an identified $k$-dimensional vector of common
factors where only the first factor has a linear trend. When $r_{1}=1$ and $d_1=1$, we show in Theorem \ref{eigenvalues} below, that it does not matter whether the
remainder $d_{2}f_{t}^{\left( 1\right) \dag }+\left( 1-d_{2}\right) g_{t}$
is $I\left( 1\right) $ or $I\left( 0\right) $: the trend component is the
one that dominates. When $r_{1}=0$, there are no linear trends in the factor
structure; in this case, $f_{t}^{\left( 1\right) }$ can be $I\left( 1\right) 
$ or $I\left( 0\right) $, according as $d_{2}=1$ or $0$.

Let us denote as $r^*$ the number of non-stationary factors, and as $r$ the total number of factors. Then, based on (\ref{factors-a})-(\ref{factors-b}), the numbers of common factors in $X_{i,t}$ are summarised in the table below. 
\begin{center}
{
\begin{tabular}{l|l}
\hline
\textit{Factor type} & \textit{Number} \\ \hline
With linear trend & $r_{1}d_{1}$ \\ 
Zero mean, $I\left( 1\right) $ & $r_{2}+r_{1}\left( 1-d_{1}\right) d_{2}$ \\ 
Zero mean, $I\left( 0\right) $ & $r_{3}+r_{1}\left( 1-d_{1}\right) \left(
1-d_{2}\right) $ \\ \hline
Total non-stationary & $r^{\ast }=r_{1}d_{1}+r_{2}+r_{1}\left( 1-d_{1}\right)
d_{2}$ \\ 
Total number of common factors & $r=r^{\ast }+r_{3}+r_{1}\left(
1-d_{1}\right) \left( 1-d_{2}\right)=r_{1}+r_{2}+r_{3}$\\
\hline
\end{tabular}
}
\end{center}
\medskip

Recall that we allow for the possibility of having any of the numbers $r_1$, $r_2$, $r_3$, $r^*$, or even $r$, to be equal to zero. On the other hand, if there is no linear trend ($d_1=0$), we have at most $r_1+r_2$ zero-mean $I(1)$ factors and $r_1+r_3$ zero-mean $I(0)$ factors, while if there is a linear trend ($d_1=0$), we have at most $r_2$ zero-mean $I(1)$ factors and $r_3$ zero-mean $I(0)$ factors.

We now spell out the main assumptions. Consider the vector of zero-mean $I(1)$ factors: $f_{t}^{\ast }$, where $f_{t}^{\ast
}=\left[ f_{t}^{\left( 1\right) \dag },f_{t}^{\left( 2\right) \prime }\right]
^{\prime }$, and consider the $I(0)$ vector $e_{t}$, where $e_{t}=\left[ e_{t}^{\left( 1\right)
},e_{t}^{\left( 2\right) \prime }\right] ^{\prime }$. Both $f_{t}^{\ast }$ and $e_{t}$ are  $[r_2+r_1(1-d_1)d_2]\times 1$ vectors.\footnote{Note that if $d_1=1$ or $d_2=0$ then these vectors have dimension $r_2$ and are given by  $f_{t}^{\ast }=f_{t}^{\left( 2\right) }$ and $e_{t}=e_{t}^{\left(2\right) }$; on the other hand if $d_1=0$ and $d_2=1$ then the vectors have dimension $r_1+r_2$, thus become scalars if $r_2=0$ and $r_1=1$.}

We define the long-run
covariance matrix associated with $f_{t}^{\ast }$ as%
\begin{equation}
\Sigma _{\Delta f^*}=\lim_{T\rightarrow \infty }Var\left(
T^{-1/2}\sum_{t=1}^{T}e_{t}\right) .  \label{sigma-df}
\end{equation}

\begin{assumption}
\label{as-1}Let $\kappa >0$. It holds that (i) $E\left\Vert e_{t}\right\Vert
^{4+\kappa }<\infty $ for all $t$; (ii) $E\left\vert f_{0}^{*
}\right\vert ^{4+\kappa }<\infty $; (iii) $\Sigma _{\Delta f^*}$ is positive
definite; (iv) there exists, on a suitably enlarged probability space, an $%
\left( r_{2}+d_{2}\right) $-dimensional standard Wiener process $W\left(
t\right) $ such that, for some $\epsilon >0$,%
\begin{equation*}
\sup_{1\leq j\leq t}\left\Vert f_{j}^{\ast }-\Sigma _{\Delta f^*}^{1/2}W\left(
j\right) \right\Vert =O_{a.s.}\left( t^{1/2-\epsilon }\right);
\end{equation*}
 (v) $E\left\Vert \sum_{t=1}^{T}e_{t}\right\Vert
^{2+\kappa }\leq C_{0}\left( \sum_{t=1}^{T}E\left\Vert e_{t}\right\Vert
^{2}\right) ^{\frac{2+\kappa }{2}}$; (vi) $E\left\Vert
\sum_{t=1}^{T}f_{t}^{\ast }f_{t}^{\ast \prime }\right\Vert ^{2}$ $\leq $ $%
C_{0}T^{4}$.
\end{assumption}

Assumption \ref{as-1} poses some restrictions on the common $I(1)$ factors.
Parts \textit{(i)} and \textit{(ii)} require the existence of at least the
second moment of the innovation $e_{t}$ and of the initial condition $%
f_{0}^{\ast }$ respectively. Part \textit{(iii) }is a standard requirement, which rules out that the common, zero mean $I(1)$ factors are cointegrated: in essence, this ensures that the number of $I(1)$ common factors is
genuinely $r_{2}+d_{2}$. Part \textit{(iv)} states that a strong
approximation exists for the partial sums process $f_{t}^*$. Although this is
a high-level assumption, we prefer to write it in this form as opposed to
spelling out more primitive assumptions, since this makes the set-up more
general. Part \textit{(v) }is a Burkholder-type inequality (see
e.g. \citealp{linbai}, p. 108). Finally, part \textit{(vi) }can be verified e.g. under independence and finite fourth moments.

An important implication is that $e_{t}$ is allowed to be (weakly) dependent
over time. Considering part \textit{(iv)} in particular, starting from the
seminal paper by \citet{berkes1979}, the literature has developed several
refinements of the Strong Invariance Principle (SIP) for random vectors. In
particular, \citet{liu2009} derive the SIP for stationary
causal processes, a wide class which includes e.g.
conditional heteroskedasticity models, Volterra series, and data generated by
dynamical systems - see \citet{wu07}. Thus, part \textit{(iv)} of the assumption accommodates
for a wide variety of commonly considered DGPs. 

\begin{assumption}
\label{as-2}It holds that: (i) (a) $\max_{1\leq i\leq N,1\leq t\leq
T}E\left\vert u_{i,t}\right\vert ^{4}<\infty $; (b) $\max_{1\leq t\leq
T}E\left\Vert f_{t}^{\left( 3\right) }\right\Vert ^{4}$ $<$ $\infty $; and
(c) $\max_{1\leq t\leq T}E\left\vert g_{t}\right\vert ^{4}$ $<$ $\infty $;
(ii) (a) $\max_{1\leq i\leq N}E\left\Vert \sum_{t=1}^{T}f_{t}^{\ast
}u_{i,t}\right\Vert ^{2}$ $\leq $ $C_{0}T^{2}$; (b) $E\left\Vert
\sum_{t=1}^{T}f_{t}^{\ast }f_{t}^{\left( 3\right) \prime }\right\Vert ^{2}$ $%
\leq $ $C_{0}T^{2}$; and (c) $E\left\Vert \sum_{t=1}^{T}f_{t}^{\ast
}g_{t}\right\Vert ^{2}$ $\leq $ $C_{0}T^{2}$; (iii) $E\left\Vert
\sum_{t=1}^{T}tf_{t}^{\ast }\right\Vert ^{2}$ $\leq $ $C_{0}T^{5}$; (iv) (a) 
$\max_{1\leq i\leq N}E\left\vert \sum_{t=1}^{T}tu_{i,t}\right\vert ^{2}$ $%
\leq $ $C_{0}T^{3}$; (b) $E\left\Vert \sum_{t=1}^{T}tf_{t}^{\left( 3\right)
}\right\Vert ^{2}$ $\leq $ $C_{0}T^{3}$; and (c) $E\left\vert
\sum_{t=1}^{T}tg_{t}\right\vert ^{2}$ $\leq $ $C_{0}T^{3}$; (v) $E\left\Vert
\sum_{t=1}^{T}f_{t}^{\ast }f_{t}^{\ast \prime }\right\Vert ^{2}$ $\leq $ $%
C_{0}T^{4}$. 
\end{assumption}

Assumption \ref{as-2} deals with the idiosyncratic terms $u_{i,t}$ and the stationary factors. Part \textit{(i)} requires the existence of the $4$-th moments,
which is a milder assumption than the customary $8$-th moment existence
requirement - see \citet{bai04}. Part \textit{(ii)} could be shown from more
primitive assumptions; indeed, a prototypical assumption would require $%
e_{t} $ and $u_{i,t}$ to be independent of each other and \textit{i.i.d.}
over time - in such a case, explicit calculations would yield part \textit{%
(ii)(a)}. Parts \textit{(iii)} and \textit{(iv)} could again be shown from more
primitive assumptions; for example, part \textit{(iv)(a)} would automatically
follow if $Eu_{i,t}^{2}<\infty $ and $u_{i,t}$ is \textit{i.i.d.} across
time. Similarly, it could be verified that part \textit{(iii)} holds
whenever $E\left\Vert e_{t}\right\Vert ^{2}<\infty $ and $e_{t}$ is \textit{%
i.i.d.} across $t$.

We now spell out the assumptions for the $N\times r$ loadings matrix $\Lambda =[\lambda
_{1}|...|\lambda _{N}]^{\prime }$.

\begin{assumption}
\label{as-3}The loadings $\Lambda $ are non-stochastic with (i) $\max_{1\leq
i\leq N}\left\Vert \lambda _{i}\right\Vert <\infty $; (ii) $%
\lim_{N\rightarrow \infty }\frac{\Lambda ^{\prime }\Lambda }{N}\rightarrow
\Sigma _{\Lambda }$, where the matrix $\Sigma _{\Lambda }$ is positive
definite.
\end{assumption}

Assumption \ref{as-3} is standard in this literature - see e.g. \citet{bai04}. One consequence of
part \textit{(ii)} and Lemma \ref{maciej} is that every diagonal block of $\Sigma _{\Lambda }$, defined by the loadings of $f_t^{(1)}$, $f_t^{(2)}$ or $f_t^{(3)}$, is
also positive definite. Note that the assumption requires the loadings to be
non-stochastic; however, this could be relaxed to the case of random loadings, with no changes to the main
arguments in the paper.

Another, important consequence of Assumption \ref{as-3} is that the common
factors belonging in each category are \textquotedblleft
strong\textquotedblright\ or \textquotedblleft pervasive\textquotedblright .
We postpone a discussion of this aspect, and of the possibility of extending this
set-up, until Section \ref{weak}.

\subsection{Asymptotic behavior of eigenvalues\label{results}}

We base inference on the two matrices%
\begin{eqnarray}
\Sigma _{1} &=&\frac{1}{T^{3}}\sum_{t=1}^{T}X_{t}X_{t}^{\prime },
\label{sig-1} \\
\Sigma _{2} &=&\frac{1}{T^{2}}\sum_{t=1}^{T}X_{t}X_{t}^{\prime }.
\label{sig-22}
\end{eqnarray}%
We denote the $p$-th largest eigenvalues of $\Sigma _{1}$\ and $\Sigma _{2}$%
\ as $\nu _{1}^{\left( p\right) }$and $\nu _{2}^{\left( p\right) }$
respectively. The the asymptotic behaviour of those eigenvalues is studied in the following Theorem.

\begin{theorem}
\label{eigenvalues} Under Assumptions \ref{as-1}-\ref{as-3}, it holds that,
for every positive, bounded constants $C_{p}$, there is a slow varying sequence \begin{equation*}
l_{N,T}=\left( \ln N\right) ^{1+\epsilon }\left( \ln T\right) ^{\frac{3}{2}+\epsilon },
\end{equation*}
with $\epsilon >0$, and some random $N_{0}$
and $T_{0}$ such that, for all $N\geq N_{0}$ and $T\geq T_{0}$,
\begin{eqnarray}
\nu _{1}^{\left( p\right) } &\geq &C_{p}N,\quad\quad\quad\quad\quad\quad\; \text{ for }p\leq r_{1},
\label{lambda-x-trend} \\
\nu _{1}^{\left( p\right) } &=&O_{a.s.}\left( \frac{N}{\sqrt{T}}l_{N,T}\right),\quad  \text{ for }p>r_{1},  \label{lambda-x-notrend}
\end{eqnarray}
and
\begin{eqnarray}
\nu _{2}^{\left( p\right) } &\geq &C_{p}\frac{N}{\ln \ln T}, \quad\quad\quad\quad\; \text{ for }1\leq
p\leq r_{2}+\max \left\{ r_{1},d_{2}\right\} ,  \label{lambda-x-large} \\
\nu _{2}^{\left( p\right) } &=&O_{a.s.}\left( \frac{N}{\sqrt{T}}%
l_{N,T}\right),\quad \text{ for }p>r_{2}+\max \left\{ r_{1},d_{2}\right\} .
\label{lambda-x-small}
\end{eqnarray}
\end{theorem}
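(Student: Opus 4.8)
The plan is to split the rescaled sample second-moment matrix into a ``signal'' block generated by the common factors and a ``perturbation'' block collecting the factor$\times$idiosyncratic cross terms together with the purely idiosyncratic term, and then to combine (i) lower bounds on the leading eigenvalues of the signal block with (ii) uniform almost-sure upper bounds on everything else, glued together by Weyl's inequality. Writing $F=[\mathcal{F}_1,\dots,\mathcal{F}_T]$ (an $r\times T$ matrix) and $U=[u_1,\dots,u_T]$ (an $N\times T$ matrix), for $\Sigma_1$ ($a=3$) and $\Sigma_2$ ($a=2$) we have $\frac{1}{T^a}\sum_t X_tX_t'=\frac{1}{T^a}\Lambda FF'\Lambda'+\frac{1}{T^a}\bigl(\Lambda FU'+UF'\Lambda'\bigr)+\frac{1}{T^a}UU'$, so by Weyl's inequality it suffices to locate the eigenvalues of $\frac{1}{T^a}\Lambda FF'\Lambda'$ and to bound the spectral norms of the last two terms.

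For the signal block I would first invoke Assumption~\ref{as-3}(ii): since $\Lambda'\Lambda/N\to\Sigma_\Lambda$ with $\Sigma_\Lambda$ (and, as noted in the text, each of its factor-type diagonal blocks) positive definite, the non-zero eigenvalues of $\frac{1}{T^a}\Lambda FF'\Lambda'$ are, up to the factor $N$ and a fixed positive-definite conjugation, those of $\frac{1}{T^a}FF'=\frac{1}{T^a}\sum_t\mathcal{F}_t\mathcal{F}_t'$. Using the representation \eqref{model-scalar}--\eqref{factors-b}, the blocks of $\sum_t\mathcal{F}_t\mathcal{F}_t'$ have distinct orders: the linear-trend block equals $d_1^2\sum_t t^2\asymp T^3$ (deterministic); the zero-mean $I(1)$ block $\sum_t f_t^*f_t^{*\prime}$ is $O_{a.s.}(T^2 l_{N,T})$ from above by Assumption~\ref{as-1}(vi), and is bounded below (see the next paragraph); the trend$\times I(1)$ cross-block $\sum_t t f_t^*$ is $O_{a.s.}(T^{5/2}l_{N,T})$ by Assumption~\ref{as-2}(iii); and every block involving $f_t^{(3)}$ or $g_t$ (including its cross-term with the trend) is $O_{a.s.}(T^{3/2}l_{N,T})$ by Assumptions~\ref{as-1}--\ref{as-2}. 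Hence, after division by $T^3$ only the trend block survives, producing exactly one eigenvalue of $\Sigma_1$ of order $N$ whenever $r_1=1$ — this gives \eqref{lambda-x-trend} — while after division by $T^2$ the trend block is promoted to order $NT$ and the $I(1)$ block stays at order $N\lambda_{\min}\!\bigl(T^{-2}\sum_t f_t^*f_t^{*\prime}\bigr)$, which accounts for the $r_2+\max\{r_1,d_2\}$ divergent eigenvalues of $\Sigma_2$ in \eqref{lambda-x-large} once the lower bound below is established; that $d_2$ does not matter when $d_1=1$ is exactly the dominance of the trend noted after Lemma~\ref{maciej}.

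The crux is the almost-sure, uniform-in-$T$ lower bound $\lambda_{\min}\!\bigl(T^{-2}\sum_{t\le T}f_t^*f_t^{*\prime}\bigr)\ge c/\ln\ln T$ for all $T\ge T_0$, which is also where the $\ln\ln T$ loss originates: the weak limit $\int_0^1 W(s)W(s)'\,\ud s$ is a.s.\ positive definite, but along the sequence in $T$ its smallest eigenvalue admits no deterministic positive lower bound. Using the strong approximation of Assumption~\ref{as-1}(iv) to replace $f_t^*$ by $\Sigma_{\Delta f^*}^{1/2}W(t)$ up to an $O_{a.s.}(t^{1/2-\epsilon})$ remainder (absorbing the remainder via the law of the iterated logarithm for $W$), the problem reduces to $\lambda_{\min}(T^{-2}\sum_{t\le T}W(t)W(t)')$; I would bound this below by applying the small-ball estimate $\Prob\!\bigl(\int_0^1\|W(s)\|^2\,\ud s\le\epsilon\bigr)=\exp\!\bigl(-c\,\epsilon^{-1}+o(\epsilon^{-1})\bigr)$, in a form uniform over directions, with $\epsilon=\gamma/\ln\ln T$, and then a Borel--Cantelli argument along a geometric subsequence $T_k=\lfloor\rho^k\rfloor$ together with monotonicity to fill the gaps; this forces $\gamma$ small but fixed and yields the stated rate. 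For the perturbation block I would upgrade the $L^2$ bounds of Assumptions~\ref{as-1}(v)--(vi) and \ref{as-2}(ii)--(v) to strong (almost-sure) rates by Markov's inequality, Borel--Cantelli along subsequences, a maximal inequality, and a union bound over $i\le N$; this is precisely where the slowly varying factor $l_{N,T}=(\ln N)^{1+\epsilon}(\ln T)^{3/2+\epsilon}$ is generated, and it is carried out with \emph{no} relation imposed between $N$ and $T$. Combining $\|\Lambda\|\asymp\sqrt N$ (Assumption~\ref{as-3}) with $\|\Lambda FU'\|\le\|\Lambda\|\,\|FU'\|_F$, the bounds $\max_i\E\|\sum_t f_t^*u_{i,t}\|^2\le C_0T^2$ and $\max_i\E|\sum_t t\,u_{i,t}|^2\le C_0T^3$, and $\mathrm{tr}(UU')=\sum_{i,t}u_{i,t}^2=O_{a.s.}(NT\,l_{N,T})$, one gets that $\frac{1}{T^a}(\Lambda FU'+UF'\Lambda')$ and $\frac{1}{T^a}UU'$ have spectral norm $O_{a.s.}(NT^{-1/2}l_{N,T})$; together with the sub-dominant factor blocks and Weyl's inequality this delivers \eqref{lambda-x-notrend} and \eqref{lambda-x-small}.

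The main obstacle is the step just described: passing from the classical weak-convergence statement for $T^{-2}\sum_t f_t^*f_t^{*\prime}$ to an almost-sure lower bound on its smallest eigenvalue that holds for \emph{all} large $T$ at once. This is what makes the strong approximation in Assumption~\ref{as-1}(iv) indispensable, what dictates the $(\ln\ln T)^{-1}$ correction in \eqref{lambda-x-large} (and hence precludes a cleaner $\asymp N$ statement for $\Sigma_2$), and what most of the technical work will concern. A secondary difficulty is keeping the $L^2$-to-almost-sure upgrades uniform in $i\le N$ and in $T$ simultaneously while refusing any relation between $N$ and $T$; by contrast, tracking the dimension of $f_t^*$ and the trend's contribution so as to match the count in \eqref{lambda-x-large}, and checking that the trend dominates the remainder of $f_t^{(1)}$ when $d_1=1$, is routine bookkeeping.
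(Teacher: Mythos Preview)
Your proposal is correct and shares the paper's skeleton (Weyl's inequality on a signal/perturbation split, strong approximation for the $I(1)$ block, Borel--Cantelli upgrades of the $L^2$ moment bounds), but differs from the paper in two places worth flagging. First, the paper does not keep all $r$ factors in one signal block and then argue about which sub-blocks of $FF'$ survive the normalisation; instead it uses two tailored decompositions --- for $\Sigma_1$ it writes $X_t=\Lambda^{(1)}\tilde f_t^{(1)}+u_t^{(1)}$ with \emph{everything but the trend factor} absorbed into $u_t^{(1)}$, and for $\Sigma_2$ it writes $X_t=\Lambda^{(1,2)}f_t^{(1,2)}+u_t^{(2)}$ with only the stationary pieces in $u_t^{(2)}$ --- so that the sub-dominant factor blocks are handled directly inside the remainder lemmas (Lemmas~\ref{remainder-1} and~\ref{remainder}) rather than via a separate eigenvalue argument on $FF'$; both routes work, but the paper's makes the upper bounds on the small eigenvalues a bit cleaner. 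Second, for the key lower bound $\lambda_{\min}\bigl(T^{-2}\sum_t f_t^*f_t^{*\prime}\bigr)\ge c/\ln\ln T$ the paper does not go through small-ball probabilities and geometric subsequences as you propose: after the strong-approximation reduction it simply cites equation~(4.6) of \citet{donsker1977}, which gives $\liminf_T(\ln\ln T)\,T^{-2}\sum_{t\le T}B(t)^2=\tfrac14$ a.s.\ for scalar Brownian motion and hence, applied direction by direction, Lemma~\ref{LIL}. Your small-ball/Borel--Cantelli route is a more self-contained derivation of the same Chung-type LIL fact; the paper's citation is shorter. On the perturbation side your ``Markov $+$ Borel--Cantelli along subsequences $+$ maximal inequality $+$ union bound over $i\le N$'' is exactly what the paper does, packaged as a multi-index Borel--Cantelli lemma (Lemma~\ref{borelcantelli}) together with Serfling's maximal inequality, and this is precisely where the slowly varying $l_{N,T}$ comes from.
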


\noindent
Theorem \ref{eigenvalues} is a separation result for the eigenvalues
corresponding to common factors in $\Sigma _{1}$ and $\Sigma _{2}$ and is our first contribution.

Equations (\ref{lambda-x-trend}) and (\ref{lambda-x-notrend}) refer to the eigenvalues of $\Sigma _{1}$. The results state that the first $r_{1}$ eigenvalues diverge to infinity at a rate $N$; conversely, the remaining eigenvalues have a
smaller magnitude. We pose no restrictions on the relative rate of
divergence between $N$ and $T$ as they pass to infinity. Thus, the magnitude
of $\nu _{1}^{\left( p\right) }$, when $p>r_{1}$, may be very large; it is
however smaller - by a factor $T^{-1/2}$ - compared to that of $\nu
_{1}^{\left( p\right) }$ when $p\leq r_{1}$. In the definition of $\Sigma
_{1}$, there is a denominator given by $T^{3}$: intuitively, this is due to
the fact that the presence of a drift in the common factor $f_{t}^{\left(
1\right) }$ creates a linear trend. Norming by $T^{3}$ is needed in
order to make the trend component converge.

Equations (\ref{lambda-x-large}) and (\ref{lambda-x-small}) refer to the eigenvalues of $\Sigma_{2}$. This matrix is normalised by $T^{2}$: the main idea is that we wish to separate the eigenvalues corresponding to non-stationary factors from the
other ones. The partial sums of $f_{t}^{\ast }f_{t}^{\ast \prime }$ should
grow at least as fast as $T^{2}$ by the CLT in functional spaces; the result
in (\ref{lambda-x-large}) follows from this intuition, although, since we
need an a.s. rate, it is based on the Law of the Iterated Logarithm (see %
\citealp{donsker1977}). Similarly to $\Sigma _{1}$, the remaining
eigenvalues may also diverge, but this will happen at a slower rate. Equation (\ref{lambda-x-small}) illustrates the separation result, through the $T^{-1/2}$ term. Following the proof of the theorem, it could be readily shown that, if the idiosyncratic components $u_{i,t}$ were $I(1)$, the upper bound for $\nu
_{2}^{\left( p\right) }$ when $p>r_{2}+\max \left\{r_{1},d_{2} \right\}$ would be $O_{a.s.}\left(Nl_{N,T} \right)$ - in essence, in this case a separation result could not be shown, whence the need to assume that the $u_{i,t}$s are $I(0)$. On the other hand, one could envisage a situation where only a fraction of the $u_{i,t}$s are $I(1)$ - say $O\left( N^{\alpha_{0}}\right)$, with $\alpha_{0}<1$. In such a case, by adapting the proof of Lemma \ref{remainder} it can be shown that the upper bound in (\ref{lambda-x-small}) would become $O_{a.s.}\left(N^{\alpha_{0}}l_{N,T} \right)+O_{a.s.}\left(\frac{N}{\sqrt{T}}l_{N,T} \right)$, and thus a separation result would obtain. 

Note that Theorem \ref{eigenvalues} provides only rates: no distributional
results are available. When data are stationary, \citet{wang16} derive an
asymptotic distribution for the estimates of the diverging eigenvalues of the sample covariance matrix. We
do not know, however, if this can also be done for the $\nu _{1}^{\left(
p\right) }$s and the $\nu _{2}^{\left( p\right) }$s. Hence, in what follows
we will rely only on rates.

Finally, in order to construct the relevant test statistics, we will also
make use of the first differenced version of (\ref{modelscalar2}):%
\begin{equation}
\Delta X_{i,t}=\lambda_i' \Delta \mathcal F_{t}+\Delta u_{i,t}, \quad 1\le i\le N.  \label{first-diff}
\end{equation}

\begin{assumption}
\label{as-4}It holds that: (i) $E\left( \Delta \mathcal F_{j,t}\Delta u_{i,t}\right)
=0$ for $1\leq j\leq r$ and $1\leq i\leq N$; (ii) $\max_{1\leq i\leq N,1\leq
t\leq T}E\left\vert \Delta X_{i,t}\right\vert ^{4}\leq C_{0}$; (iii) $%
E\max_{1\leq \widetilde{t}\leq T}\left\vert \sum_{t=1}^{\widetilde{t}}\Delta
X_{h,t}\Delta X_{j,t}-E\left( \Delta X_{h,t}\Delta X_{j,t}\right)
\right\vert ^{2}\leq C_{0}$; (iv) (a) $T^{-1}\sum_{t=1}^{T}E\left( \Delta
\mathcal F_{t}\Delta \mathcal F_{t}^{\prime }\right) $ is a positive definite matrix; (b) the
largest eigenvalue of $T^{-1}\sum_{t=1}^{T}E\left( \Delta u_{t}\Delta
u_{t}^{\prime }\right) $ is finite; (c) $T^{-1}\sum_{t=1}^{T}E\left( \Delta
u_{t}\Delta u_{t}^{\prime }\right) $ is a positive definite matrix. 
\end{assumption}

Assumption \ref{as-4} is the same as Assumptions 1-3 in \citet{trapani17},
and we refer to that paper for examples in which the assumption is
satisfied. Essentially, these are the same examples that hold for $e_t$ in Assumption \ref{as-2}.


\section{Estimating the number of common factors\label{test}}

We now present the algorithm to estimate the dimension of the factor space.
We begin by determining the presence or not of a common linear trend by estimating $r_{1}$ based on $\nu _{1}^{\left( p\right) }$, and then we determine the presence or not of zero-mean $I(1)$ common factors by estimating $r^*$, based on $\nu _{2}^{\left( p\right) }$.

\subsection{Preliminary definitions}
Consider the notation $\beta =\frac{\ln N}{\ln T}$, and define%
\begin{equation}
\delta \left\{ 
\begin{tabular}{ll}
$>0$ & $\text{when }\beta <\frac{1}{2}$, \\ 
$>1-\frac{1}{2\beta }$ & $\text{when }\beta \geq \frac{1}{2}$.%
\end{tabular}%
\right.   \label{delta}
\end{equation}%
The role played by $\delta$ is the following. In view of Theorem \ref%
{eigenvalues}, the largest eigenvalues are (modulo some slowly varying
functions) proportional to $N$; the others, to $NT^{-1/2}$. When
premultiplying eigenvalues by $N^{-\delta }$, the former will be
proportional to $N^{1-\delta }$, thereby still diverging; the latter will be
proportional to $N^{1-\delta }T^{-1/2}$, which, by construction,
will drift to zero. Note that \eqref{delta} provides a general rule to set $\delta$, and we discuss specific choices in Section \ref{numerics}.

In order to construct our test statistics, we make use the eigenvalues of the matrix 
\begin{equation}\label{sig-3}
\Sigma_3 =\frac 1 T \sum_{t=1}^T\Delta X_t\Delta X_t',
\end{equation}
 which with the
same notation as before, are denoted as $\nu _{3}^{\left( p\right) }$ in decreasing order. In particular, when running our procedure for the $p$-th largest eigenvalues of $\Sigma_1$ or $\Sigma_2$, we will extensively use the quantities
\begin{equation} 
{\overline{\nu }}_{3,p}(k)=\frac{1}{4\left( N- k+1 \right) 
}\sum_{h=k}^{N}\nu _{3}^{\left( h\right) },  \label{v3-rescale}
\end{equation}
for different values of $k$. Essentially, ${\overline{\nu }}_{3,p}(k)$ is the average of all (or some) eigenvalues of $\Sigma_3$ and will be employed in order to rescale the estimated eigenvalues, so as to render all our test statistics scale invariant. In the numerical analysis of Section \ref{numerics}, we consider rescaling schemes with $k=1$, $k=p$, or $k=(p+1)$ and we discuss the impact of these choices on our results. For simplicity in the rest of this section we do not make explicit the dependence of \eqref{v3-rescale} on $k$. Finally, note the division by $4$ in \eqref{v3-rescale}, which is done, heuristically, since it is possible that $\Delta X_{i,t}$ could inflate the variance by over-differencing, and the factor $4$ represents the
largest inflation factor possible. 
%
%
%

\subsection{Determining the presence of factors with linear trends}

Consider first $\Sigma_1$ defined in \eqref{sig-1}, and its
eigenvalues $\nu _{1}^{\left( p\right) }$. Based on (\ref{lambda-x-trend})-(%
\ref{lambda-x-notrend}), the first $r_{1}$ eigenvalues of $\Sigma_1$ should diverge to positive
infinity, as $\min\left( N,T\right) \rightarrow \infty $, at a faster rate than
the $(N-r_1)$ remaining ones. Thus, the cornerstone of the algorithm to determine $%
r_{1}$ is based on checking whether $\nu _{1}^{\left( p\right) }$ diverges
sufficiently fast. In particular, as suggested by Theorem \ref{eigenvalues}, we want to construct a test for%
\begin{equation}
\left\{ 
\begin{tabular}{l}
$H_{0,1}^{\left( p\right) }:\nu _{1}^{\left( p\right) }\geq C_{p}N$, \\ 
$H_{A,1}^{\left( p\right) }:\nu _{1}^{\left( p\right) }\leq C_{p}\frac{N}{\sqrt{T}}l_{N,T}$,
\end{tabular}
\right.   \label{test1}
\end{equation}
for some positive bounded constant $C_p$. Thus, given $r_1$ we have that $H_{0,1}^{\left( p\right) }$ holds true for $p\le r_1$, while $H_{A,1}^{\left( p\right) }$ holds true for $p> r_1$.

Consider the following transformation of $\nu
_{1}^{\left( p\right) }$ 
\begin{equation}
\phi _{1}^{\left( p\right) }=\exp \left\{ N^{-\delta }\frac{\nu _{1}^{\left(
p\right) }}{{\overline{\nu }}_{3,p}}\right\} .  \label{phi-1}
\end{equation}
Then, based on \eqref{test1}, equations (\ref{lambda-x-trend}) and (\ref{lambda-x-notrend}), and given the definition \eqref{delta} of $\delta$, we have that 
\begin{equation*}
\begin{tabular}{ll}
$\lim_{\min \left( N,T\right) \rightarrow \infty }\phi _{1}^{\left( p\right)}=\infty$, &$\text{under } H_{0,1}^{(p)} \text{ i.e. for }p\leq r_{1},$ \\ 
$\lim_{\min \left( N,T\right) \rightarrow \infty }\phi _{1}^{\left( p\right)}=1,$ & $\text{under } H_{A,1}^{(p)}\text{ i.e. for }p>r_{1}.$
\end{tabular}
\end{equation*}
In principle, we could then use $\phi _{1}^{\left( p\right) }$ to test $H_{0,1}^{\left( p\right) }$. However, since $\phi _{1}^{\left( p\right) }$ either diverges to infinity or not, it does not have any randomness. Therefore, we propose to use the following randomisation
algorithm - note that other randomisations schemes would also be possible,
in principle; the one we propose, however, has been often considered in this
type of literature (see e.g. \citealp{corradi2006}, and \citealp{trapani17}).

\begin{description}
\item \textit{\underline{Step A1.1.}} Generate an \textit{i.i.d.} sample $\left\{ \xi _{1,j}^{\left( p\right) }\right\} _{j=1}^{R_{1}}$ from a common
distribution $G_{1}$, independently across $p$.

\item \textit{\underline{Step A1.2.}} For any $u$ drawn from a distribution $F_{1}\left( u\right) $, define, for  $1\le j\le R_1$,
$$
\zeta _{1,j}^{\left( p\right) }\left( u\right) =I\left[ \phi _{1}^{\left(p\right) }\times \xi _{1,j}^{\left( p\right) }\leq u\right]
$$

\item \textit{\underline{Step A1.3.}} Compute%
\begin{equation*}
\vartheta _{1}^{\left( p\right) }\left( u\right) =\frac{1}{\sqrt{R_{1}}}%
\sum_{j=1}^{R_{1}}\frac{\zeta _{1,j}^{\left( p\right) }\left( u\right)
-G_{1}\left( 0\right) }{\sqrt{G_{1}\left( 0\right) \left[ 1-G_{1}\left(
0\right) \right] }}.
\end{equation*}

\item \textit{\underline{Step A1.4.}} Compute%
\begin{equation*}
\Theta _{1}^{\left( p\right) }=\int_{-\infty }^{+\infty }\left\vert
\vartheta _{1}^{\left( p\right) }\left( u\right) \right\vert
^{2}dF_{1}\left( u\right) .
\end{equation*}
\end{description}

The intuition for considering this approach is the following. Under the null, we know that $\phi _{1}^{\left(
p\right) }$ diverges; thus, we can expect $\zeta _{1,j}^{\left( p\right)
}\left( u\right) $ to be an \textit{i.i.d.} Bernoulli sequence with expected
value exactly equal to $G_{1}\left( 0\right) $, and variance $G_{1}\left(
0\right) \left[ 1-G_{1}\left( 0\right) \right] $. 
In such case, a CLT should ensure that $\vartheta _{1}^{\left( p\right) }\left( u\right) $
follows a Normal distribution, and consequently $\Theta _{1}^{\left(
p\right) }$ should be expected to follow a Chi-squared distribution. By the
same token, under the alternative $\phi _{1}^{\left( p\right) }$ is finite,
and therefore $\zeta _{1,j}^{\left( p\right) }\left( u\right) $ should be an 
\textit{i.i.d.} Bernoulli sequence with expected value different from $%
G_{1}\left( 0\right) $; thus, $\vartheta _{1}^{\left( p\right) }\left(
u\right) $ should diverge as fast as $\sqrt{R_{1}}$ by the LLN, and
consequently $\Theta _{1}^{\left( p\right) }$ should also diverge at a rate $%
R_{1}$. The random variable $\Theta _{1}^{\left( p\right) }$ is then the statistic that we are going to use.

In order to derive the asymptotic behavior of $\Theta _{1}^{\left( p\right) }$, we need some regularity
conditions on the distributions $G_{1}(\cdot)$ and $F_{1}(\cdot)$ - see Section \ref{numerics} for a choice of these functions and of $R_1$.

\begin{assumption}
\label{fg-1}It holds that: (i) (a) $G_{1}(\cdot)$ has a bounded density function; (b) $G_{1}\left( 0\right) \neq 0$ and $G_{1}\left( 0\right) \neq 1$; (ii) $\int_{-\infty}^{\infty}
u^{2}dF_{1}\left( u\right) <\infty $. 
\end{assumption}

Let $P^{\ast }$ denote the
conditional probability with respect to \{$X_{i,t}, 1 \leq t \leq T, 1 \leq i \leq N$\}%
; we use the notation \textquotedblleft $\overset{D^{\ast }}{\rightarrow }$%
\textquotedblright\ and \textquotedblleft $\overset{P^{\ast }}{\rightarrow }$%
\textquotedblright\ to define, respectively, conditional convergence in
distribution and in probability according to $P^{\ast }$. 
It holds that

\begin{theorem}
\label{test-1}Consider $H_{0,1}^{\left( p\right) }$ and $H_{A,1}^{\left( p\right) }$ defined in (\ref{test1}). Under Assumptions \ref{as-1}-\ref{fg-1}, if
\begin{equation}
\lim_{\min \left(N,R_{1}\right) \rightarrow \infty }\sqrt{R_{1}}\exp
\left\{ -N^{1-\delta }\right\} =0,  \label{r1}
\end{equation}
then, for almost all realisations of $\left\{ e_{t},u_{i,t},1\leq i\leq N,1\leq
t\leq T\right\} $ and for all $p$, as $\min \left( N,T,R_{1}\right) \rightarrow \infty$, under $H_{0,1}^{\left( p\right) }$ it holds that
\begin{eqnarray}
&&\Theta _{1}^{\left( p\right) }\overset{D^{\ast }}{\rightarrow }\chi
_{1}^{2}, \label{null-1} 
\end{eqnarray}
and under $H_{A,1}^{\left( p\right) }$ it holds that
\begin{eqnarray}
&&\frac{1}{R_{1}}\, \frac{\int_{-\infty }^{\infty }\left[ G_{1}\left( u\right)
-G_{1}\left( 0\right) \right] ^{2}dF_{1}\left( u\right) }{G_{1}\left(
0\right) \left[ 1-G_{1}\left( 0\right) \right] }\,\Theta _{1}^{\left( p\right)
}\overset{P^{\ast }}{\rightarrow }1.
\label{alt-1}
\end{eqnarray}
 
\end{theorem}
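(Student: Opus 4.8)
The plan is to condition on a fixed realisation of the data $\{e_t,u_{i,t}\}$ and treat $\phi_1^{(p)}$ as a deterministic (though data-dependent) quantity, so that the only randomness left is in the auxiliary i.i.d. sample $\{\xi_{1,j}^{(p)}\}$. Under $H_{0,1}^{(p)}$, Theorem~\ref{eigenvalues} together with the definition \eqref{delta} of $\delta$ and the rescaling by $\overline\nu_{3,p}$ (which, by Assumption~\ref{as-4}, is bounded away from $0$ and $\infty$ a.s.) gives $N^{-\delta}\nu_1^{(p)}/\overline\nu_{3,p}\ge C_p N^{1-\delta}$ eventually, hence $\phi_1^{(p)}\to\infty$ a.s. The first step is therefore to show that for almost every realisation, $\zeta_{1,j}^{(p)}(u)=I[\phi_1^{(p)}\xi_{1,j}^{(p)}\le u]$ is, for each fixed $u$, an i.i.d.\ Bernoulli array whose success probability converges to $G_1(0)$. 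Indeed $P^\ast[\phi_1^{(p)}\xi_{1,j}^{(p)}\le u]=G_1(u/\phi_1^{(p)})\to G_1(0)$ by continuity of $G_1$ at $0$ (Assumption~\ref{fg-1}(i)(a)); the rate of this convergence is controlled by $\sqrt{R_1}\,|G_1(u/\phi_1^{(p)})-G_1(0)|\le \sqrt{R_1}\,\|G_1'\|_\infty\, |u|/\phi_1^{(p)}$, and since $\phi_1^{(p)}\ge\exp\{C_pN^{1-\delta}\}$, condition \eqref{r1} forces this bias term to vanish uniformly in $N$.

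Next I would establish the CLT for $\vartheta_1^{(p)}(u)$ pointwise in $u$: conditionally on the data it is a normalised sum of i.i.d.\ bounded variables, so Lindeberg's CLT (or simply the Berry--Esseen bound, using that the bias is $o(R_1^{-1/2})$) gives $\vartheta_1^{(p)}(u)\overset{D^\ast}{\to}N(0,1)$. To pass from pointwise convergence to convergence of the integral functional $\Theta_1^{(p)}=\int|\vartheta_1^{(p)}(u)|^2\,dF_1(u)$, I would view $\vartheta_1^{(p)}(\cdot)$ as a random element of $L^2(F_1)$ and prove weak convergence there to a Gaussian process $\vartheta_\infty(\cdot)$ with covariance kernel $E[\vartheta_\infty(u)\vartheta_\infty(u')]=\min(u,u')$-type structure --- more precisely, since the limiting Bernoulli parameter is the constant $G_1(0)$ for every $u$, the limit process is degenerate: $\vartheta_\infty(u)=Z$ for a single $N(0,1)$ variable $Z$, independent of $u$. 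This is because $\mathrm{Cov}^\ast(\zeta_{1,j}^{(p)}(u),\zeta_{1,j}^{(p)}(u'))\to G_1(0)[1-G_1(0)]$ for all $u,u'$ (the events $\{\phi_1^{(p)}\xi\le u\}$ and $\{\phi_1^{(p)}\xi\le u'\}$ both shrink to $\{\xi\le 0\}$). Hence by the continuous mapping theorem $\Theta_1^{(p)}\overset{D^\ast}{\to}\int Z^2\,dF_1(u)=Z^2\sim\chi_1^2$, using $\int dF_1=1$. Tightness in $L^2(F_1)$ follows from the $L^2$ bound $E^\ast\!\int|\vartheta_1^{(p)}(u)|^2dF_1(u)=\int \mathrm{Var}^\ast(\vartheta_1^{(p)}(u))dF_1(u)=O(1)$ by Assumption~\ref{fg-1}(ii).

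For the alternative $H_{A,1}^{(p)}$, Theorem~\ref{eigenvalues} gives $\nu_1^{(p)}=O_{a.s.}(NT^{-1/2}l_{N,T})$, so $N^{-\delta}\nu_1^{(p)}/\overline\nu_{3,p}=O_{a.s.}(N^{1-\delta}T^{-1/2}l_{N,T})\to 0$ by \eqref{delta}, whence $\phi_1^{(p)}\to 1$ a.s. Then $P^\ast[\phi_1^{(p)}\xi_{1,j}^{(p)}\le u]=G_1(u/\phi_1^{(p)})\to G_1(u)\ne G_1(0)$ in general, so $R_1^{-1/2}\vartheta_1^{(p)}(u)\to [G_1(u)-G_1(0)]/\sqrt{G_1(0)[1-G_1(0)]}$ a.s.-$P^\ast$ by the LLN for i.i.d.\ bounded arrays, uniformly enough (in $L^2(F_1)$) to integrate; squaring and integrating against $F_1$ yields $R_1^{-1}\Theta_1^{(p)}\to \int[G_1(u)-G_1(0)]^2dF_1(u)/\{G_1(0)[1-G_1(0)]\}$ in $P^\ast$-probability, which is \eqref{alt-1} after rearrangement.

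\textbf{Main obstacle.} The delicate point is the uniformity required to go from pointwise-in-$u$ statements to the integral functional, combined with making the ``for almost all realisations'' quantifier precise: one needs a single null set outside of which $\phi_1^{(p)}$ has the right divergence/convergence rate \emph{and} the conditional CLT/LLN holds simultaneously for all $p$. I would handle this by a Fubini-type argument --- fix the auxiliary randomness first, invoke the a.s.\ rates from Theorem~\ref{eigenvalues} on a full-measure data set, and then on that set run the conditional limit theorems --- together with a dominated-convergence step in $L^2(F_1)$ justified by the uniform $L^2$ bound from Assumption~\ref{fg-1}(ii); controlling the bias term $\sqrt{R_1}(G_1(u/\phi_1^{(p)})-G_1(0))$ uniformly in $u$ under only a \emph{bounded} density for $G_1$ is where condition \eqref{r1} is used in an essential way.
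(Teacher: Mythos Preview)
Your strategy is sound and would work, but the paper's proof is considerably more elementary and sidesteps precisely the uniformity issue you flag as the main obstacle. Instead of invoking weak convergence in $L^{2}(F_{1})$, the paper decomposes the summand directly as
\[
\zeta_{1,j}^{(p)}(u)-G_{1}(0)
=\bigl[I(\xi_{1,j}^{(p)}\le 0)-G_{1}(0)\bigr]
+d_{u}\bigl[G_{1}(u/\phi_{1}^{(p)})-G_{1}(0)\bigr]
+\bigl[I(0\le|\xi_{1,j}^{(p)}|\le u/\phi_{1}^{(p)})-d_{u}(G_{1}(u/\phi_{1}^{(p)})-G_{1}(0))\bigr],
\]
and observes that the first term \emph{does not depend on $u$}. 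Under $H_{0,1}^{(p)}$ the second and third pieces are shown to vanish in $L^{2}(F_{1})$ after summing and normalising (the bias via the bounded density of $G_{1}$ and condition \eqref{r1}; the fluctuation term via a direct variance computation and Markov). What remains is $\Theta_{1}^{(p)}=\int \bigl(\tfrac{1}{\sqrt{R_{1}}}\sum_{j}[I(\xi_{1,j}^{(p)}\le 0)-G_{1}(0)]/\sqrt{G_{1}(0)(1-G_{1}(0))}\bigr)^{2}dF_{1}(u)+o_{P^{\ast}}(1)$, and since the integrand is constant in $u$ the integral is just the square of a single standardised Bernoulli sum, whence $\chi_{1}^{2}$ by the scalar CLT. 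This is exactly the degeneracy of the limit process you identified, but exploited \emph{before} taking limits rather than after, so no tightness argument is needed at all.

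One caveat on your route: the claim that tightness in $L^{2}(F_{1})$ ``follows from the $L^{2}$ bound $E^{\ast}\int|\vartheta_{1}^{(p)}(u)|^{2}dF_{1}(u)=O(1)$'' is not sufficient as stated---bounded second moment does not imply tightness in an infinite-dimensional Hilbert space. You would need either a uniform-integrability/compactness argument or, more simply, to fall back on the paper's decomposition and show directly that $\vartheta_{1}^{(p)}(\cdot)-Z_{R_{1}}\to 0$ in $L^{2}(F_{1})$, which is what the paper does. For the alternative, your LLN-plus-integration argument and the paper's first-moment computation are essentially equivalent.
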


The determination of $r_{1}$ follows from an algorithm which is based on a single step.

\begin{description}
\item \textit{\underline{Step T1.1.}} Set $p=1$ and run the test for $H_{0,1}^{(1)}:\nu_{1}^{\left( 1\right) }=\infty 
$ based on $\Theta _{1}^{\left( 1\right) }$. If the null is rejected, set $%
\widehat{r}_{1}=0$\ and stop, otherwise set $\widehat{r}_{1}=1$.
\end{description}

The output of this step is $\widehat{r}_{1}$, which is an estimate of $r_{1}$. As discussed above, $r_{1}$ can be either $0$ or $1$, whence the test
being stopped at $p=2$. The procedure based on the single \textit{Step T1.1} can therefore be viewed as a test for the presence of a common factor with a linear trend.

As can be expected, in order to ensure
that $\widehat{r}_{1}$ is consistent, a pivotal role is played by the level
of the test, $\alpha _{1}:=P^*( \Theta _{1}^{\left( p\right) }>c_{\alpha ,1})$, through the relevant critical value
denoted as $c_{\alpha ,1}$.

\begin{lemma}
\label{test-1-algo} Under the assumptions of Theorem \ref{test-1}, as $\min \left( N,T,R_{1}\right) \rightarrow \infty $, if $c_{\alpha
,1}\rightarrow \infty $ with $c_{\alpha ,1}=o\left( R_{1}\right) $, then it
holds that $P^*\left( \widehat{r}_{1}=r_{1}\right) =1$, for almost all
realisations of $\left\{ e_{t},u_{i,t},1\leq i\leq N,1\leq t\leq T\right\} $. 
\end{lemma}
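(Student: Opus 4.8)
The plan is to combine the two convergence results of Theorem \ref{test-1} — the $\chi^2_1$ limit under $H_{0,1}^{(p)}$ and the divergence at rate $R_1$ under $H_{A,1}^{(p)}$ — with the single-step structure of \textit{Step T1.1} to control the two ways in which $\widehat r_1$ can differ from $r_1$. Since $r_1\in\{0,1\}$, there are only two cases to examine: (a) $r_1=0$, so the true hypothesis at $p=1$ is $H_{A,1}^{(1)}$ and we want $P^*(\widehat r_1=0)=P^*(\Theta_1^{(1)}>c_{\alpha,1})\to1$; and (b) $r_1=1$, so the true hypothesis at $p=1$ is $H_{0,1}^{(1)}$ and we want $P^*(\widehat r_1=1)=P^*(\Theta_1^{(1)}\le c_{\alpha,1})\to1$.

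For case (b), under $H_{0,1}^{(1)}$ Theorem \ref{test-1} gives $\Theta_1^{(1)}\overset{D^*}{\to}\chi^2_1$ for almost all realisations of $\{e_t,u_{i,t}\}$. Since $c_{\alpha,1}\to\infty$, for any fixed level the event $\{\Theta_1^{(1)}>c_{\alpha,1}\}$ has $P^*$-probability tending to $1-\chi^2_1\text{-cdf}(c_{\alpha,1})\to0$ as $\min(N,T,R_1)\to\infty$; hence $P^*(\Theta_1^{(1)}\le c_{\alpha,1})\to1$, which is exactly $P^*(\widehat r_1=1)\to1$. The only mild point here is interchanging the limit in the statistic with the growing threshold, which is handled by a standard $\varepsilon$-argument using the continuity of the $\chi^2_1$ cdf and the fact that $c_{\alpha,1}\to\infty$ monotonically (or along any subsequence).

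For case (a), under $H_{A,1}^{(1)}$ Theorem \ref{test-1} gives $R_1^{-1}\,\kappa\,\Theta_1^{(1)}\overset{P^*}{\to}1$, where $\kappa=\int[G_1(u)-G_1(0)]^2dF_1(u)\big/\{G_1(0)[1-G_1(0)]\}$ is a strictly positive constant by Assumption \ref{fg-1} (the density of $G_1$ is bounded and non-degenerate, so $G_1(u)\neq G_1(0)$ on a set of positive $F_1$-measure — this is the place where the regularity on $F_1$ and $G_1$ is used). Therefore $\Theta_1^{(1)}=R_1\kappa^{-1}(1+o_{P^*}(1))$, which diverges at rate $R_1$. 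Since by hypothesis $c_{\alpha,1}=o(R_1)$, we have $\Theta_1^{(1)}/c_{\alpha,1}\to\infty$ in $P^*$-probability, hence $P^*(\Theta_1^{(1)}>c_{\alpha,1})\to1$, i.e. $P^*(\widehat r_1=0)\to1$. Combining (a) and (b), in either case $P^*(\widehat r_1=r_1)\to1$, and since these statements hold for almost all realisations of $\{e_t,u_{i,t},1\le i\le N,1\le t\le T\}$, the claim follows.

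The main obstacle is not technical depth but bookkeeping: one must be careful that the "for almost all realisations" qualifier from Theorem \ref{test-1} is preserved (it is, since we only take countably many limiting operations and a finite case split), and that the rate condition $c_{\alpha,1}=o(R_1)$ is genuinely what is needed to beat the $O_{P^*}(R_1)$ divergence under the alternative while $c_{\alpha,1}\to\infty$ is what kills the $\chi^2_1$-tail under the null — the two conditions pull in opposite directions and the lemma's hypotheses are exactly the compromise. No new estimates beyond Theorem \ref{test-1} are required.
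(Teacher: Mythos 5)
Your proof is correct and follows essentially the same route as the paper's: split on $r_1\in\{0,1\}$, use the $\chi^2_1$ limit under the null together with $c_{\alpha,1}\to\infty$ to drive the Type I error to zero, and use the $O_{P^*}(R_1)$ divergence under the alternative together with $c_{\alpha,1}=o(R_1)$ to drive the Type II error to zero. The only cosmetic difference is that the paper makes the null-side step quantitative via a Bernstein-type tail bound, $P^*\left(\Theta_1^{(p)}>c_{\alpha,1}\right)\leq 2\exp\left(-\tfrac{1}{2}c_{\alpha,1}\right)+o_{P^*}(1)$, where you instead use the standard monotonicity-plus-continuity argument for a growing threshold; both are valid, and your explicit remark on where positivity of $\int[G_1(u)-G_1(0)]^2\,dF_1(u)$ is needed is a point the paper leaves implicit.
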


Requiring that $c_{\alpha ,1}\rightarrow \infty $ is necessary in order to
have asymptotically zero Type I error probability, which ensures the
consistency result in the lemma; an immediate implication of $c_{\alpha
,1}\rightarrow \infty $ is that the level of the test is such that
\begin{equation}\label{size-1}
\lim_{\min \left( N,T,R_{1}\right) \rightarrow \infty }P^*\left( \Theta
_{1}^{\left( p\right) }>c_{\alpha ,1}\right) =0.
\end{equation}
The fact that $c_{\alpha ,1}$ diverges has also an interesting consequence on the
interpretation of the outcome of our testing procedure. It is well-known
that randomised tests will yield different results for different researchers
when applied to the same data, since the added randomness does not vanish
asymptotically. However, this is not the case with our procedure, since, when $c_{\alpha ,1}\rightarrow \infty $, \eqref{size-1} holds under $H_{0,1}^{(p)}$. 
Further, we show in the proof that having $c_{\alpha ,1}=o\left( R_{1}\right) $ affords that the probability of a Type II error is asymptotically zero, thus ensuring consistency. 

Looking at this from a different angle, the results in Lemma \ref{test-1-algo} are guaranteed by letting the level of the test $\alpha_1\to 0$ as $\min \left( N,T,R_{1}\right) \rightarrow \infty $ and we refer to Section \ref{numerics} for the choice of $\alpha_1$.
%
%

\subsection{Determining the number of non-stationary common
factors}

Consider the matrix $\Sigma_2$ defined in \eqref{sig-22} and its eigenvalues $\nu _{2}^{\left( p\right) }$. Based on Theorem \ref%
{eigenvalues}, the $r^{\ast }$ largest eigenvalues of $\Sigma_2$ should diverge to positive infinity, as $\min\left( N,T\right) \rightarrow \infty $, at a faster rate than the $(N-r^*)$ remaining ones. Therefore, we can construct a the test for
\begin{equation}
\left\{ 
\begin{tabular}{l}
$H_{0,2}^{\left( p\right) }:\nu _{2}^{\left( p\right) }\geq C_{p}\frac{N}{\ln\ln T}$, \\ 
$H_{A,2}^{\left( p\right) }:\nu _{2}^{\left( p\right) }\leq C_{p}\frac{N}{\sqrt{T}}l_{N,T}$,
\end{tabular}
\right.   \label{test2}
\end{equation}
for some positive bounded constant $C_p$. Thus, given $r^*$ we have that $H_{0,2}^{\left( p\right) }$ holds true for $p\le r^*$, while $H_{A,2}^{\left( p\right) }$ holds true for $p> r^*$.

We exploit this fact, as in the above, by considering the following
transformation of $\nu _{2}^{\left( p\right) }$ 
\begin{equation}
\phi _{2}^{\left( p\right) }=\exp \left\{ N^{-\delta }\left( \ln \ln
T\right) \frac{\nu _{2}^{\left( p\right) }}{{\overline{\nu }}_{3,p}}%
\right\} ,  \label{phi-2}
\end{equation}%
which is very similar to (\ref{phi-1}) except for the presence of the logarithmic term, which is a consequence of (\ref{lambda-x-large}).
%
Then, based on \eqref{test1}, equations (\ref{lambda-x-large}) and (\ref{lambda-x-small}), and given the definition \eqref{delta} of $\delta$, we have
that
\begin{equation*}
\begin{tabular}{ll}
$\lim_{\min \left( N,T\right) \rightarrow \infty }\phi _{2}^{\left( p\right)}=\infty $, &$\text{under } H_{0,2}^{(p)} \text{ i.e. for }p\leq r^*,$ \\  
$\lim_{\min \left( N,T\right) \rightarrow \infty }\phi _{2}^{\left( p\right)}=1,$ & $\text{under } H_{A,2}^{(p)}\text{ i.e. for }p>r^*.$
\end{tabular}
\end{equation*}
We consider the following randomisation procedure.

\begin{description}
\item \textit{\underline{Step A2.1}} Generate an \textit{i.i.d.} sample $\left\{ \xi _{2,j}^{\left( p\right) }\right\} _{j=1}^{R_{2}}$ from a common
distribution $G_{2}$, independently across $p$ and of $\left\{ \xi_{2,j}^{\left( p^{\prime }\right) }\right\} _{j=1}^{R_{2}}$ for all $p^{\prime } \neq p$.

\item \textit{\underline{Step A2.2}} For any $u$ drawn from a distribution $F_{2}\left( u\right) $, define, for  $1\le j\le R_2$,
\begin{equation*}
\zeta _{2,j}^{\left( p\right) }\left( u\right) =I\left[ \phi _{2}^{\left(
p\right) }\times \xi _{2,j}^{\left( p\right) }\leq u\right] .
\end{equation*}

\item \textit{\underline{Step A2.3.}} Compute%
\begin{equation*}
\vartheta _{2}^{\left( p\right) }\left( u\right) =\frac{1}{\sqrt{R_{2}}}%
\sum_{j=1}^{R_{2}}\frac{\zeta _{2,j}^{\left( p\right) }\left( u\right)
-G_{2}\left( 0\right) }{\sqrt{G_{2}\left( 0\right) \left[ 1-G_{2}\left(
0\right) \right] }}.
\end{equation*}

\item \textit{\underline{Step A2.4.}} Compute%
\begin{equation*}
\Theta _{2}^{\left( p\right) }=\int_{-\infty }^{+\infty }\left\vert
\vartheta _{2}^{\left( p\right) }\left( u\right) \right\vert
^{2}dF_{2}\left( u\right) .
\end{equation*}
\end{description}

The same comments as in the previous algorithm apply: in
essence, the procedure exploits the fact that under the null and the
alternative, $\phi _{2}^{\left( p\right) }$ diverges or drifts to zero
respectively: the former feature ensures (asymptotic) normality of $%
\vartheta _{2}^{\left( p\right) }\left( u\right) $, whereas the latter
entails that $\vartheta _{2}^{\left( p\right) }\left( u\right) $ diverges
under the alternative.

\begin{assumption}
\label{fg-2}It holds that: (i) (a) $G_{2}$ has a bounded density function; (b) $G_{2}\left( 0\right) \neq 0$ and $G_{2}\left( 0\right) \neq 1$; (ii) (a) $\int_{-\infty}^{\infty}
u^{2}dF_{2}\left( u\right) <\infty $.
\end{assumption}

It holds that

\begin{theorem}
\label{test-2} Consider $H_{0,2}^{\left( p\right) }$ and $H_{A,2}^{\left( p\right) }$ defined in (\ref{test2}). Under Assumptions \ref{as-1}-\ref{as-4}
and \ref{fg-2}, if
\begin{equation}
\lim_{\min \left( N,R_{2}\right ) \rightarrow \infty }\sqrt{R_{2}}\exp
\left\{ -N^{1-\delta }\right\} =0,  \label{r2}
\end{equation}%
then, for almost all realisations of $\left\{ e_{t},u_{i,t},1\leq i\leq N,1\leq t\leq T\right\} $ and for all $p$, as $\min \left( N,T,R_{2}\right) \rightarrow \infty$, under $H_{0,2}^{\left( p\right) }$ it holds that
\begin{eqnarray}
&&\Theta _{2}^{\left( p\right) }\overset{D^{\ast }}{\rightarrow }\chi
_{1}^{2}, \label{null-2} 
\end{eqnarray}
and under $H_{A,2}^{\left( p\right) }$ it holds that
\begin{eqnarray}
&&\frac{1}{R_{2}}\,\frac{\int_{-\infty }^{\infty }\left( G_{2}\left( u\right)
-G_{2}\left( 0\right) \right) ^{2}dF_{1}\left( u\right) }{G_{2}\left(
0\right) \left( 1-G_{2}\left( 0\right) \right) }\,\Theta _{2}^{\left( p\right)
}\overset{P^{\ast }}{\rightarrow }1\text{ under }H_{1}^{\left( 2\right) }.
\label{alt-2}
\end{eqnarray}
\end{theorem}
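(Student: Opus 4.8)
The plan is to mirror exactly the argument used for Theorem \ref{test-1}, so the first step is to establish the analogue of the key deterministic input: from Theorem \ref{eigenvalues}, equations (\ref{lambda-x-large})--(\ref{lambda-x-small}), together with the definition (\ref{delta}) of $\delta$ and the fact that ${\overline{\nu}}_{3,p}$ is bounded away from $0$ and $\infty$ almost surely (this last fact follows from Assumption \ref{as-4}(iv), via the arguments in \citet{trapani17}, since $\nu_3^{(h)}$ are eigenvalues of the sample covariance of the differenced data), I would deduce that for almost every realisation of $\{e_t,u_{i,t}\}$ we have $\phi_2^{(p)}\to\infty$ at rate at least $\exp\{N^{1-\delta}\}$ under $H_{0,2}^{(p)}$, while $\phi_2^{(p)}\to 1$ under $H_{A,2}^{(p)}$. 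The extra $\ln\ln T$ factor in (\ref{phi-2}) exactly cancels the $\ln\ln T$ in the lower bound (\ref{lambda-x-large}), which is why it is inserted; under the alternative, $\ln\ln T$ multiplied by $N^{1-\delta}T^{-1/2}l_{N,T}\to 0$ still goes to $0$ since the $\ln\ln T$ is dominated.

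Next I would condition on such a realisation and treat $\phi_2^{(p)}$ as a fixed (diverging or convergent) sequence, so that the only randomness left is the i.i.d. auxiliary sample $\{\xi_{2,j}^{(p)}\}$. Under $H_{0,2}^{(p)}$: since $\phi_2^{(p)}\to\infty$, for any fixed $u$ the event $\{\phi_2^{(p)}\xi_{2,j}^{(p)}\le u\}$ becomes, for $N$ large, the event $\{\xi_{2,j}^{(p)}\le u/\phi_2^{(p)}\}$, whose probability converges to $\Prob(\xi_{2,j}^{(p)}\le 0)=G_2(0)$ provided $G_2$ is continuous at $0$ (Assumption \ref{fg-2}(i)(a)); moreover, the rate condition (\ref{r2}), $\sqrt{R_2}\exp\{-N^{1-\delta}\}\to 0$, is what guarantees that the bias $\sqrt{R_2}\,|G_2(u/\phi_2^{(p)})-G_2(0)|$ is asymptotically negligible (using boundedness of the density of $G_2$ and $|u/\phi_2^{(p)}|\le C\,|u|\exp\{-N^{1-\delta}\}$). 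Hence $\vartheta_2^{(p)}(u)$ is, conditionally, a normalised sum of i.i.d. Bernoulli($G_2(0)$) variables, so by the Lindeberg CLT $\vartheta_2^{(p)}(u)\overset{D^*}{\to}N(0,1)$ for each $u$; a functional CLT / tightness argument (exactly as in \citet{trapani17}, using $\int u^2 dF_2(u)<\infty$ from Assumption \ref{fg-2}(ii)) upgrades this to weak convergence of $\vartheta_2^{(p)}(\cdot)$ in $L^2(F_2)$, and the continuous mapping theorem applied to $\|\cdot\|_{L^2(F_2)}^2$ gives $\Theta_2^{(p)}\overset{D^*}{\to}\int |Z(u)|^2 dF_2(u)$ where $Z$ is a Gaussian process; the particular structure here (the variance normalisation makes the limit a standard Gaussian with covariance identical along the diagonal) collapses this to $\chi^2_1$, yielding (\ref{null-2}). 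Under $H_{A,2}^{(p)}$: here $\phi_2^{(p)}\to 1$, so $\E^*[\zeta_{2,j}^{(p)}(u)]\to G_2(u)\ne G_2(0)$ for $F_2$-a.e. $u$, hence by the LLN $R_2^{-1/2}\vartheta_2^{(p)}(u)\to [G_2(u)-G_2(0)]/\sqrt{G_2(0)(1-G_2(0))}$ in $P^*$-probability; squaring, integrating against $F_2$ (dominated convergence, justified by $|\zeta|\le 1$), and dividing by $R_2$ gives (\ref{alt-2}).

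The main obstacle — and the only genuinely new content relative to Theorem \ref{test-1} — is verifying that ${\overline{\nu}}_{3,p}$ is $O_{a.s.}(1)$ and bounded away from zero, and that the $\ln\ln T$ correction in $\phi_2^{(p)}$ interacts correctly with the slowly varying sequence $l_{N,T}$ and with the rate in (\ref{r2}): one must check that on the null the exponent $N^{-\delta}(\ln\ln T)\,\nu_2^{(p)}/{\overline{\nu}}_{3,p}$ still diverges faster than $\ln\sqrt{R_2}$, and on the alternative it still vanishes, uniformly in the relevant regime of $(N,T)$ with no rate restriction linking them. Everything else is a verbatim transcription of the randomisation-CLT machinery already deployed for $\Theta_1^{(p)}$, so I would state the shared lemmas once and invoke them for both statistics rather than re-proving them.
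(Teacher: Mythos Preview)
Your proposal is correct and follows exactly the paper's route: the paper's own proof of Theorem~\ref{test-2} consists of the single sentence ``The proof is exactly the same as the proof of Theorem~\ref{test-1},'' and your plan is precisely to replay that argument with $\phi_2^{(p)}$, $G_2$, $F_2$, $R_2$ in place of their counterparts, using (\ref{lambda-x-large})--(\ref{lambda-x-small}) and Lemma~\ref{trapani2017} as the deterministic inputs.

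One small simplification you are missing: you propose a functional CLT/tightness argument in $L^2(F_2)$ to obtain the $\chi_1^2$ limit. In the paper's decomposition of $\vartheta_1^{(p)}(u)$ (and hence of $\vartheta_2^{(p)}(u)$), the leading term is $R_2^{-1/2}\sum_j\bigl(I(\xi_{2,j}^{(p)}\le 0)-G_2(0)\bigr)/\sqrt{G_2(0)(1-G_2(0))}$, which does \emph{not} depend on $u$; after showing the remaining two pieces vanish in $L^2(F_2)$, the integral $\int|\cdot|^2\,dF_2(u)$ therefore collapses to a single squared, scalar CLT sum, and the ordinary Bernoulli CLT plus the continuous mapping theorem give $\chi_1^2$ directly. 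No process-level convergence is needed.
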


Note that, conditionally on the sample, the sequence $\left\{\Theta
_{2}^{\left( p\right) }\right\}_{p=1}^N$ is independent across $p$. We recommend
the following algorithm for the determination of $r^{\ast }$.

\begin{description}
\item \textit{\underline{Step T2.1.}} Run the test for $H_{0,2}^{(1)}:\nu _{2}^{\left( 1\right) }=\infty $
based on $\Theta _{2}^{\left( 1\right) }$. If the null is rejected, set $%
\widehat{r}^{\ast }=0$\ and stop, otherwise go to the next step.

\item \textit{\underline{Step T2.2.}} Starting from $p=1$, run the test for $H_{0,2}^{(p+1)}:\nu _{2}^{\left(
p+1\right) }=\infty $ based on $\Theta _{2}^{\left( p+1\right) }$,
constructed using an artificial sample $\left\{ \xi _{2,j}^{\left(
p+1\right) }\right\} _{j=1}^{R_{2}}$ generated independently of $\left\{ \xi
_{2,j}^{\left( 1\right) }\right\} _{j=1}^{R_{2}}$, ..., $\left\{ \xi
_{2,j}^{\left( p\right) }\right\} _{j=1}^{R_{2}}$. If the null is rejected,
set $\widehat{r}^{\ast }=p$\ and stop; otherwise repeat the step until the
null is rejected (or until a pre-specified maximum number of factors, say $r_{\max
}^{\ast }$, is reached).
\end{description}

As can be expected, in this context a pivotal role is played by the level of
the individual tests, which should be chosen so that $\widehat{r}^{\ast }$
is a good approximation of $r^{\ast }$, at least asymptotically. Similarly
to the previous case, let $c_{\alpha ,2}$ denote the critical value of the
test at each step.

\begin{lemma}
\label{test-2-algo} Under the assumptions of Theorem \ref{test-2}, as $\min \left(N,T,R_{2}\right) \rightarrow \infty $, if $r^*_{\max
}\geq r^*$ and $c_{\alpha ,2}\rightarrow \infty $ with $c_{\alpha
,2}=o\left( R_{2}\right) $, then it holds that $P\left( \widehat{r}^{\ast
}=r^{\ast }\right) =1$ for almost all realisations of $\left\{
e_{t},u_{i,t},1\leq i\leq N,1\leq t\leq T\right\} $.
\end{lemma}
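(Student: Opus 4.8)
The plan is to prove Lemma \ref{test-2-algo} by combining the distributional results of Theorem \ref{test-2} with a union bound over the finitely many steps of the sequential procedure, essentially replicating for the multi-step case the argument sketched for Lemma \ref{test-1-algo}. First I would fix an arbitrary realisation of $\{e_{t},u_{i,t}\}$ in the probability-one set on which (\ref{null-2}) and (\ref{alt-2}) hold, and argue conditionally on the sample (i.e. under $P^{\ast}$). The estimator $\widehat{r}^{\ast}$ produced by Steps T2.1--T2.2 differs from $r^{\ast}$ only if (a) the test fails to reject at some step $p\leq r^{\ast}$ (a Type~I error, producing $\widehat{r}^{\ast}<r^{\ast}$), or (b) the test rejects at some step $p$ with $r^{\ast}<p\leq r^{\ast}_{\max}$ before reaching step $r^{\ast}+1$ — but in fact since the procedure stops at the first rejection, $\widehat{r}^{\ast}\neq r^{\ast}$ is contained in the union of the event that some step $p\in\{1,\dots,r^{\ast}\}$ fails to reject and the event that step $r^{\ast}+1$ fails to reject is \emph{not} the issue; rather $\widehat{r}^{\ast}>r^{\ast}$ requires a false rejection at one of the steps $1,\dots,r^{\ast}$ and $\widehat{r}^{\ast}<r^{\ast}$ requires exactly that. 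So I would write
\[
P^{\ast}\left(\widehat{r}^{\ast}\neq r^{\ast}\right)\leq \sum_{p=1}^{r^{\ast}}P^{\ast}\left(\Theta_{2}^{(p)}>c_{\alpha,2}\right)+P^{\ast}\left(\Theta_{2}^{(r^{\ast}+1)}\leq c_{\alpha,2}\right),
\]
where the first sum controls early (false) rejection and the last term controls the failure to stop at the correct index.

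Next I would bound each term. For $p\leq r^{\ast}$ we are under $H_{0,2}^{(p)}$, so by (\ref{null-2}) $\Theta_{2}^{(p)}\overset{D^{\ast}}{\rightarrow}\chi^{2}_{1}$; since $c_{\alpha,2}\rightarrow\infty$, the continuous mapping / tail behaviour of the limiting $\chi^{2}_{1}$ gives $P^{\ast}(\Theta_{2}^{(p)}>c_{\alpha,2})\rightarrow 0$, and as there are only finitely many ($r^{\ast}$, with $r^{\ast}\leq r^{\ast}_{\max}$ fixed) such terms the whole sum vanishes. For the last term we are under $H_{A,2}^{(r^{\ast}+1)}$ (here I use $r^{\ast}_{\max}\geq r^{\ast}$ so that step $r^{\ast}+1$ is actually executed whenever no earlier rejection occurred); by (\ref{alt-2}), $R_{2}^{-1}\Theta_{2}^{(r^{\ast}+1)}\overset{P^{\ast}}{\rightarrow}\kappa$ for the positive constant $\kappa=G_{2}(0)(1-G_{2}(0))/\int (G_{2}(u)-G_{2}(0))^{2}dF_{2}(u)>0$ implied by Assumption \ref{fg-2}, so $\Theta_{2}^{(r^{\ast}+1)}$ diverges at rate $R_{2}$; since $c_{\alpha,2}=o(R_{2})$, we get $P^{\ast}(\Theta_{2}^{(r^{\ast}+1)}\leq c_{\alpha,2})\rightarrow 0$. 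Combining, $P^{\ast}(\widehat{r}^{\ast}\neq r^{\ast})\rightarrow 0$; since $\widehat{r}^{\ast}$ and $r^{\ast}$ are integer-valued and $r^{\ast}$ is the (non-random, given the realisation) truth, a probability going to zero for an event in a discrete space gives $P^{\ast}(\widehat{r}^{\ast}=r^{\ast})\rightarrow 1$, and because the bound holds for almost every realisation of $\{e_{t},u_{i,t}\}$ the conclusion follows. The passage to the stated "$=1$" rather than "$\rightarrow 1$" is just the convention that these are limiting statements as $\min(N,T,R_{2})\rightarrow\infty$.

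One subtlety I would be careful about is the interplay between "for all $p$" in Theorem \ref{test-2} and the sequential, data-dependent stopping rule: the index at which the procedure stops is random, so I cannot simply evaluate a fixed-$p$ limit theorem at a random index. The clean way around this — and the step I expect to be the main (minor) obstacle — is precisely the union-bound decomposition above, which replaces the random stopping index by a fixed, finite collection of deterministic events indexed by $p=1,\dots,r^{\ast}+1$; because $r^{\ast}$ (and $r^{\ast}_{\max}$) are fixed finite numbers, each of the finitely many events is governed by the appropriate fixed-$p$ conclusion of Theorem \ref{test-2}, and a finite union of null events is null in the limit. I would also note, for completeness, that the condition (\ref{r2}) needed for Theorem \ref{test-2} to apply at each step is a joint rate condition on $N$ and $R_{2}$ that does not involve $p$, so it holds uniformly over the finitely many steps; and the independence across $p$ of the auxiliary samples $\{\xi_{2,j}^{(p)}\}$, imposed in Step A2.1 and Step T2.2, is not actually needed for the union bound (which only uses subadditivity of $P^{\ast}$) but does guarantee that the marginal conclusions of Theorem \ref{test-2} apply verbatim to each $\Theta_{2}^{(p)}$. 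This mirrors exactly the single-step argument for Lemma \ref{test-1-algo}, with the only new ingredient being the handling of the finite sequence of steps.
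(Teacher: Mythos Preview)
Your proposal is correct and takes essentially the same approach as the paper, which simply defers to Theorem~3 in \citet{trapani17}: you reproduce the single-step argument of Lemma~\ref{test-1-algo} and extend it to the sequential procedure via a finite union bound over $p=1,\dots,r^{\ast}+1$, controlling Type~I errors through (\ref{null-2}) with $c_{\alpha,2}\to\infty$ and the Type~II error at $p=r^{\ast}+1$ through (\ref{alt-2}) with $c_{\alpha,2}=o(R_2)$. Note only that your preliminary verbal description momentarily swaps ``reject'' and ``fail to reject'' (and correspondingly $\widehat r^{\ast}<r^{\ast}$ with $\widehat r^{\ast}>r^{\ast}$), but the displayed inequality and its term-by-term analysis are stated and handled correctly.
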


This lemma has the same interpretation - especially when it comes to the
condition that $c_{\alpha ,2}\rightarrow \infty $ - as Lemma \ref
{test-1-algo}.

\subsection{Determining the number of zero-mean $I(1)$ and $I(0)$ factors}
After estimating $r^{\ast }$, it is possible to estimate the number of
common, zero-mean $I\left( 1\right) $ factors by subtracting the number of those with a linear trend from the total number of non-stationary factors, i.e. as $(\widehat{r}^{\ast }-\widehat{r}_{1})$. Under the conditions of Lemmas \ref{test-1-algo} and \ref{test-2-algo}, it is immediate to verify that
\begin{equation*}
P^*\left[ \widehat{r}^{\ast }-\widehat{r}_{1}=r_{2}+r_{1}\left( 1-d_{1}\right)d_{2}\right] =1.
\end{equation*}%

As a final remark, on the grounds of Assumption \ref{as-4} it is possible to use the algorithm proposed in \citet{trapani17} to estimate
the total number of common factors. The algorithm - based on first-differenced data - uses the eigenvalues $\nu_3^{(p)}$ of $\Sigma_3$ defined in \eqref{sig-3} in a similar way to the algorithms above. Denoting the estimate of the total number of factors as $\widehat{r}$,
the number of common $I(0)$ factors can be estimated as $\widehat{r}-\widehat{r}^{\ast }$. Under the conditions in \citet{trapani17} and of Lemma \ref{test-2-algo} above, it follows that
\begin{equation*}
P^*\left[ \widehat{r}-\widehat{r}^{\ast} = r_{3}+r_{1}\left( 1-d_{1}\right) \left(
1-d_{2}\right)\right] =1.
\end{equation*}%

\subsection{Determining the presence of weak factors\label{weak}}

By Assumption \ref{as-3}, all the common factors are assumed to be strongly pervasive.
This is a direct consequence of having $\left\Vert \Lambda \right\Vert
^{2}=O\left( N\right) $. It is however possible to imagine a situation in
which some of the common factors are \textquotedblleft weak\textquotedblright , or
\textquotedblleft less pervasive\textquotedblright : this can
arise from e.g. having genuinely weak factors, or from having strong factors
which impact only on a small number of units - see, for example, \citet{onatski12} and the references therein.

In this section, we report some heuristic arguments (similar to \citealp{trapani17}), on the ability of our procedure to determine weak
factors. For the sake of a concise discussion, but with no loss of generality, we
consider the case where all $r$ factors are zero-mean $I(1)$, and $\Lambda ^{\prime }\Lambda $ is diagonal, with diagonal elements $%
c_{p}(N)$ given by%
\begin{equation*}
c_{p}(N)=\left\{ 
\begin{array}{cc}
N & \text{for }1\leq p\leq p^{\prime } \\ 
N^{1-\kappa _{p}} & \text{for }p^{\prime }<p\leq r%
\end{array}%
\right. .
\end{equation*}%
Allowing for $\kappa _{p}\in \left( 0,1\right) $ corresponds to the case of
having weak factors, and the larger $\kappa _{p}$ the weaker the
corresponding factor. Suppose that the researcher is using $\Sigma _{2}$ and
its eigenvalues $\nu_2^{(p)}$ in order to determine $r$. Repeating exactly the same arguments in the proof of Theorem %
\ref{eigenvalues}, it can be shown that
\begin{equation}
\nu _{2}^{\left( p\right) }\geq C_{0}\frac{c_{p}(N)}{\ln \ln T}.  \label{weak-p}
\end{equation}
Equation (\ref{weak-p}) entails that, whenever $p^{\prime }<p\leq r$,
\begin{equation}
\nu _{2}^{\left( p\right) }\geq C_{0}\frac{N^{1-\kappa _{p}}}{\ln \ln T}.
\label{weak-pp}
\end{equation}
Recall that, our procedure, essentially, is based on testing whether, as $\min \left (N,T\right) \rightarrow \infty$
\begin{equation*}
\left\{ 
\begin{tabular}{l}
$H_{0,2}^{\left( p\right) }:({\ln \ln T})N^{-\delta } \nu _{2}^{\left( p\right) }\rightarrow
\infty $ \\ 
$H_{A,2}^{\left( p\right) }:({\ln \ln T})N^{-\delta } \nu _{2}^{\left( p\right) }\rightarrow
0$
\end{tabular}
\right. ,
\end{equation*}
with $\delta $ selected as per (\ref{delta}). Thus, based on (\ref{weak-pp}%
), weak factors can be determined if
\begin{equation*}
\lim_{\min \left( N,T\right) \rightarrow \infty }{N^{1-\kappa
_{p}-\delta }}\rightarrow \infty ,
\end{equation*}%
which requires%
\begin{equation}
\kappa _{p}<1-\delta .  \label{kappa-p}
\end{equation}%
On the grounds of (\ref{delta}), the constraint in (\ref{kappa-p}) explains up to which extent
weak factors can be detected. When $\beta \leq \frac{1}{2}$, that is $\frac{N}{\sqrt T}=O(1)$, then $\delta =0$, 
and we need $\kappa _{p}<1$. This entails that, when $N$ is
much smaller than $T$, our procedure is able to detect even very weak factors.
Conversely, when $\beta >\frac{1}{2}$, that is $\frac{\sqrt T}{N}=o(1)$, it is required that $\kappa _{p}<1-\frac{1}{2\beta }$: as $\beta $
increases, i.e. $N$ increases, the test is less and less able to detect weak factors. Note that
when $N$ and $T$ have the same order of magnitude, and thus $\beta =1$, weak
factors can be detected as long as $\kappa _{p}<\frac{1}{2}$ - that is, when
the eigenvalues associated with that factor diverge to infinity a bit faster
than $\sqrt N$.

\section{Monte Carlo and empirical evidence\label{numerics}}

In our experiments, we use data generated as 
\begin{align}
X_{i,t}&=\lambda_i^{(1)}f_t^{(1)}+ \lambda_i^{(2)'} f_t^{(2)}+ \lambda_i^{(3)'}   f_t^{(3)}+\sqrt \theta u_{i,t},\quad 1\le i \le N, \; 1\le t \le T,\label{dgp1} \\
f_t^{(1)}&=1+f_{t-1}^{(1)} + \epsilon_{t}^{(1)}, \quad\quad\quad\quad\quad\quad\quad\quad\quad\quad\quad \label{dgp2} \\
f_{j,t}^{(2)}&= f_{j,t-1}^{(2)} +  e_{j,t}^{(2)}, \quad e_{j,t}^{(2)}=\rho_j  e_{j,t-1}^{(2)} +  \epsilon_{j,t}^{(2)},\quad\quad j=1,\ldots, r_2,\label{dgp3} \\
f_{j,t}^{(3)}&= \alpha_j f_{j,t-1}^{(3)} +  \epsilon_{j,t}^{(3)}, \quad\quad\quad\quad\quad\quad\quad\quad\quad\quad\quad j=1,\ldots, r_3,\label{dgp4} \\
u_{i,t}&=a_i u_{i,t-1} + v_{i,t} + b_i \sum_{|k|\le C_i, k\neq 0} v_{i+k,t},\label{dgp5}
\end{align}
The loadings in \eqref{dgp1} are simulated such that each entry is distributed as $\mathcal N(0,1)$ and such that the matrix $\Lambda$ satisfies the normalization constraint $\Lambda'\Lambda=N I_r$. 
In \eqref{dgp3} and \eqref{dgp4}, we use $\rho_j\sim U[0, \bar \rho]$ with $\bar \rho\in\{0,0.4,0.8\}$, and $\alpha_j\sim U[-0.5,0.5]$ respectively. The vector $\epsilon_t=(\epsilon_{t}^{(1)}\epsilon_{1,t}^{(2)} \ldots \epsilon_{r_2,t}^{(2)}\,\epsilon_{1,t}^{(3)}\ldots\epsilon_{r_3,t}^{(3)})$ is simulated from $\mathcal N(0,\Gamma)$ independently at each $t$, with $\Gamma$ diagonal and such that 
\[
\frac {1}{NT}\sum_{i=1}^N\sum_{t=1}^T(\lambda_i^{(1)}\Delta f_t^{(1)})^2=\frac {1}{NT}\sum_{i=1}^N\sum_{t=1}^T(\lambda_i^{(2)'}\Delta  f_t^{(2)})^2=\frac {1}{NT}\sum_{i=1}^N\sum_{t=1}^T(\lambda_i^{(3)'} f_t^{(3)})^2,
\]
so that  in first differences each factor component has, on average, the same weight. In \eqref{dgp5} we allow both for serial and cross-sectional dependence in the idiosyncratic errors and for all $1\le i\le N$. We fix $a_i=0.5$, $b_i=0.5$ and $C_i=\min\l(\l\lfloor \frac N{20}\r\rfloor,10\r)$, and  the errors $v_{i,t}$ are simulated from $\mathcal N(0,1)$. Note that this model for the idiosyncratic component is the same as in \citet{ahnhorenstein13}. Last, we set the noise-to-signal as
\[
\theta = 0.5\,\frac{\sum_{i=1}^N\sum_{t=1}^T(\lambda_i^{(1)}\Delta f_t^{(1)}+\lambda_i^{(2)'}\Delta  f_t^{(2)}+\lambda_i^{(3)'}\Delta  f_t^{(3)})^2}{\sum_{i=1}^N\sum_{t=1}^T(\Delta u_{i,t})^2}.
\]

We consider the following cases:

\begin{enumerate}
\item we fix $r_3=0$ and we let $r_1\in \{0,1\}$ and $r_2\in\{0,1,2\}$ and we
use  the test based on $\phi _{1}^{\left( p\right) }$ to compute $\wh{r}_1$
(see Table \ref{tab:case1});

\item we fix $r_1=0$ and we let $r_2\in\{0,1,2\}$ and $r_3\in\{0,1,2\}$ and we
use the test based on $\phi _{2}^{\left( p\right) }$ to compute $\wh{r}^*=\wh{r}_2$ (see Table \ref{tab:case2});

\item we fix $r_1=1$ and we let $r_2\in\{0,1,2\}$ and $r_3\in\{0,1,2\}$ and we
use the test based on $\phi _{2}^{\left( p\right) }$ to compute $\wh{r}^*=\wh{r}_2+1$
(see Table \ref{tab:case3}).

\end{enumerate}

For each case, we set $N\in\{50,100,200\}$ and $T\in\{100,200,500\}$, and we simulate model \eqref{dgp1}-\eqref{dgp5} 500 times, reporting the average value of $\wh{r}_1$ or $\wh{r}_2=(\wh{r}^*-r_1)$, across simulations. Moreover, when computing $\wh{r}_2$ we compare our results with the Information Criteria by \citet{bai04}, denoted as $IC$ - this corresponds to $IC3$ in the original paper; we note that the other criteria, known as $IC_1$ and $IC_2$, deliver a similar (or worse) performance and are therefore not reported. 

Our tests are run as follows. When computing $\phi_1^{(p)}$ and $\phi_2^{(p)}$, we rescale the $p$-th eigenvalue as (see \eqref{v3-rescale})
\[
\frac{\nu_i^{(p)}}{\bar{\nu}_{3,p}(k)}=\frac{\nu_i^{(p)}}{\frac 1{4(N-k+1)}\sum^{N}_{h= k} \nu_3^{(h)}}, \qquad i=1,2.
\]
For a given $p$, we consider three different rescaling schemes corresponding to three different choices for $k$:
\begin{itemize}
\item [\underline{${BT1}$}:] when $k=1$, i.e. $\bar{\nu}_{3,p}(k) = \frac 1{4N}\sum^{N}_{h= 1} \nu_3^{(h)}$; 
\item [\underline{${BT2}$}:] when $k=p$, i.e. $\bar{\nu}_{3,p}(k) = \frac 1{4(N-p+1)}\sum^{N}_{h= p} \nu_3^{(h)}$; 
\item [\underline{${BT3}$}:] when $k=(p+1)$, i.e. $\bar{\nu}_{3,p}(k) = \frac 1{4(N-p)}\sum^{N}_{h= {p+1}} \nu_3^{(h)}$.
\end{itemize} 
We then divide the eigenvalues by $N^{\delta}$, where (see \eqref{delta})
\begin{equation}
\delta=\left\{
\begin{array}{ll}
\delta^*, & \text{when }\frac{\ln N}{\ln T} <\frac{1}{2}, \\ 
1-\frac{1}{2\beta }+\delta^*, & \text{when }\frac{\ln N}{\ln T} \geq \frac{1}{2},
\end{array}
\right. 
\end{equation}
with $\delta^* = 10^{-5}$. Thence, for each $p$, in the first step of the randomisation algorithm, $\{\xi_{1,j}^{(p)}\}_{j=1}^{R_1}$ and $\{\xi_{2,j}^{(p)}\}_{j=1}^{R_2}$ are generated from a standard normal distribution, with $R_1 = N$ and $R_2 = N$, if $p=1$ or $R_2=\l\lfloor \frac N 3\r\rfloor$, for $p>1$. In the second step of the randomisation algorithm, we set $u=\pm \sqrt 2$. In the Appendix, we provide an analysis of our results when varying $R_1$, $R_2$, $\delta^*$ and $u$, showing that results are robust to these specifications. All tests are carried out at a significance level $\alpha_1=\alpha_2=\frac{0.05}{\min(N,T)}$, which corresponds to critical values growing logarithmically with $N$ or $T$, hence satisfying the conditions in Lemmas \ref{test-1-algo} and \ref{test-2-algo}. 

To save space, here we report only the results when, in \eqref{dgp3}, we set $\bar \rho=0.4$ - results for $\bar \rho=0$ and $\bar \rho=0.8$ are in the Appendix. As an overall comment, results are in general unaffected but for two cases. The first case is when $r_1=0$ and $r_2=1$ and we compute $\wh{r}_1$; in this case, we find that lower values of $\bar \rho$ improve the results. Conversely, higher values of $\bar \rho$ make the innovations of the zero-mean $I(1)$ factors more persistent, thus making the associated eigenvalues larger: in this case, we are therefore more likely to falsely detect trends. The second case arises when $r_2=0$, and we compute $\wh{r}_2$. In this case, we find the exact opposite. This can be explained upon noting that, for lower values of $\bar\rho$, the two $I(1)$ factors become closer to two pure random walks which are highly collinear, thus making the second eigenvalue $\nu_2^{(2)}$ much smaller than the first one $\nu_2^{(1)}$: thus, in this case, we are less likely to detect the second factor. For the same reason, higher values of $\bar\rho$ make the two factors less collinear, so that then the second factor is detected more easily.

\begin{table}[t!]
\centering
\caption{Average estimated number of factors with linear trend, $\wh{r}_1$.}\label{tab:case1}
\vskip .2cm
\small{
\begin{tabular}{ll | ccc | ccc | ccc}
\hline
\hline
&& \multicolumn{3}{|c|}{$N=50$, $T=100$}&\multicolumn{3}{c|}{$N=100$, $T=100$}&\multicolumn{3}{c}{$N=200$, $T=100$}\\
$r_1\,$ & $r_2$ & $BT1$ & $BT2$ & $BT3$& $BT1$ & $BT2$ & $BT3$& $BT1$ & $BT2$ & $BT3$\\
\hline
0$\,$		&	0	&	0.00	&	0.00	&	0.00	&	0.00	&	0.00	&	0.00	&	0.00	&	0.00	&	0.00	\\
0$\,$		&	1	&	0.20	&	0.18	&	0.19	&	0.11	&	0.10	&	0.41	&	0.04	&	0.04	&	0.31	\\
0$\,$		&	2	&	0.04	&	0.05	&	0.06	&	0.02	&	0.02	&	0.13	&	0.01	&	0.00	&	0.06	\\
1$\,$		&	0	&	1.00	&	1.00	&	1.00	&	1.00	&	1.00	&	1.00	&	1.00	&	1.00	&	1.00	\\
1$\,$		&	1	&	1.00	&	1.00	&	1.00	&	1.00	&	1.00	&	1.00	&	1.00	&	1.00	&	1.00	\\
1$\,$		&	2	&	1.00	&	1.00	&	1.00	&	1.00	&	1.00	&	1.00	&	1.00	&	1.00	&	1.00	\\
\hline
\hline
&& \multicolumn{3}{|c|}{$N=100$, $T=200$}&\multicolumn{3}{c|}{$N=200$, $T=200$}&\multicolumn{3}{c}{$N=200$, $T=500$}\\
$r_1\,$ & $r_2$ & $BT1$ & $BT2$ & $BT3$& $BT1$ & $BT2$ & $BT3$& $BT1$ & $BT2$ & $BT3$\\
\hline
0$\,$		&	0	&	0.00	&	0.00	&	0.00	&	0.00	&	0.00	&	0.00	&	0.00	&	0.00	&	0.00	\\
0$\,$		&	1	&	0.06	&	0.06	&	0.30	&	0.05	&	0.04	&	0.25	&	0.02	&	0.01	&	0.14	\\
0$\,$		&	2	&	0.00	&	0.00	&	0.05	&	0.00	&	0.00	&	0.03	&	0.00	&	0.00	&	0.01	\\
1$\,$		&	0	&	1.00	&	1.00	&	1.00	&	1.00	&	1.00	&	1.00	&	1.00	&	1.00	&	1.00	\\
1$\,$		&	1	&	1.00	&	1.00	&	1.00	&	1.00	&	1.00	&	1.00	&	1.00	&	1.00	&	1.00	\\
1$\,$		&	2	&	1.00	&	1.00	&	1.00	&	1.00	&	1.00	&	1.00	&	1.00	&	1.00	&	1.00	\\
\hline
\hline
\end{tabular}
}
\end{table}

\begin{table}[t!]
\caption{Average estimated number of zero-mean $I(1)$ factors, $\wh{r}_2$, when $r_1=0$.}\label{tab:case2}
\centering
\vskip .2cm
\small{
\begin{tabular}{ll |cccc | cccc | cccc  }
\hline
\hline
&& \multicolumn{4}{|c}{$N=50$, $T=100$}& \multicolumn{4}{|c}{$N=100$, $T=100$}& \multicolumn{4}{|c}{$N=200$, $T=100$}\\
$r_2\,$ & $r_3$ & $BT1$ & $BT2$ & $BT3$& $IC$& $BT1$ & $BT2$ & $BT3$& $IC$& $BT1$ & $BT2$ & $BT3$& $IC$\\
\hline
0	&	0	&	0.00	&	0.00	&	0.00	&	1.00	&	0.00	&	0.00	&	0.00	&	1.00	&	0.00	&	0.00	&	0.00	&	1.00	\\
0	&	1	&	0.00	&	0.00	&	0.00	&	1.00	&	0.00	&	0.00	&	0.00	&	1.00	&	0.00	&	0.00	&	0.00	&	1.00	\\
0	&	2	&	0.00	&	0.00	&	0.00	&	1.00	&	0.00	&	0.00	&	0.00	&	1.00	&	0.00	&	0.00	&	0.00	&	1.00	\\
1	&	0	&	1.00	&	1.00	&	1.00	&	1.00	&	1.00	&	1.00	&	1.00	&	1.00	&	1.00	&	1.00	&	1.00	&	1.00	\\
1	&	1	&	1.00	&	1.00	&	1.00	&	1.00	&	1.00	&	1.00	&	1.00	&	1.00	&	1.00	&	1.00	&	1.00	&	1.00	\\
1	&	2	&	1.00	&	1.00	&	1.00	&	1.00	&	1.00	&	1.00	&	1.00	&	1.00	&	1.00	&	1.00	&	1.00	&	1.00	\\
2	&	0	&	1.99	&	1.99	&	1.99	&	2.00	&	1.98	&	2.00	&	1.99	&	2.00	&	1.97	&	1.99	&	1.99	&	2.00	\\
2	&	1	&	1.98	&	1.97	&	1.97	&	2.00	&	1.97	&	2.00	&	2.00	&	2.00	&	1.94	&	1.98	&	1.99	&	2.00	\\
2	&	2	&	1.98	&	1.98	&	1.98	&	2.00	&	1.99	&	2.00	&	2.00	&	2.00	&	1.98	&	2.00	&	2.00	&	2.00	\\
\hline
\hline
&& \multicolumn{4}{|c}{$N=100$, $T=200$}& \multicolumn{4}{|c}{$N=200$, $T=200$}& \multicolumn{4}{|c}{$N=200$, $T=500$}\\
$r_2\,$ & $r_3$ & $BT1$ & $BT2$ & $BT3$& $IC$& $BT1$ & $BT2$ & $BT3$& $IC$& $BT1$ & $BT2$ & $BT3$& $IC$\\
\hline
0	&	0	&	0.00	&	0.00	&	0.00	&	1.00	&	0.00	&	0.00	&	0.00	&	1.00	&	0.00	&	0.00	&	0.00	&	1.00	\\
0	&	1	&	0.00	&	0.00	&	0.00	&	1.00	&	0.00	&	0.00	&	0.00	&	1.00	&	0.00	&	0.00	&	0.00	&	1.00	\\
0	&	2	&	0.00	&	0.00	&	0.00	&	1.00	&	0.00	&	0.00	&	0.00	&	1.00	&	0.00	&	0.00	&	0.00	&	1.00	\\
1	&	0	&	1.00	&	1.00	&	1.00	&	1.00	&	1.00	&	1.00	&	1.00	&	1.00	&	1.00	&	1.00	&	1.00	&	1.00	\\
1	&	1	&	1.00	&	1.00	&	1.00	&	1.00	&	1.00	&	1.00	&	1.00	&	1.00	&	1.00	&	1.00	&	1.00	&	1.00	\\
1	&	2	&	1.00	&	1.00	&	1.00	&	1.00	&	1.00	&	1.00	&	1.00	&	1.00	&	1.00	&	1.00	&	1.00	&	1.00	\\
2	&	0	&	2.00	&	2.00	&	2.00	&	2.00	&	1.99	&	2.00	&	2.00	&	2.00	&	2.00	&	2.00	&	2.00	&	2.00	\\
2	&	1	&	2.00	&	2.00	&	2.00	&	2.00	&	2.00	&	2.00	&	2.00	&	2.00	&	2.00	&	2.00	&	2.00	&	2.00	\\
2	&	2	&	2.00	&	2.00	&	2.00	&	2.00	&	2.00	&	1.99	&	2.00	&	2.00	&	2.00	&	2.00	&	2.00	&	2.00	\\
\hline
\hline
\end{tabular}
}
\end{table}

\begin{table}[t!]
\centering
\caption{Average estimated number of zero-mean $I(1)$ factors, $\wh{r}_2$, when $r_1=1$.}\label{tab:case3}
\vskip .2cm
\small{
\begin{tabular}{ll |cccc | cccc | cccc  }
\hline
\hline
&& \multicolumn{4}{|c}{$N=50$, $T=100$}& \multicolumn{4}{|c}{$N=100$, $T=100$}& \multicolumn{4}{|c}{$N=200$, $T=100$}\\
$r_2\,$ & $r_3$ & $BT1$ & $BT2$ & $BT3$& $IC$& $BT1$ & $BT2$ & $BT3$& $IC$& $BT1$ & $BT2$ & $BT3$& $IC$\\
\hline
0	&	0	&	0.00	&	0.00	&	0.00	&	0.00	&	0.00	&	0.00	&	0.00	&	0.00	&	0.00	&	0.00	&	0.00	&	0.00	\\
0	&	1	&	0.00	&	0.00	&	0.00	&	0.00	&	0.00	&	0.00	&	0.00	&	0.00	&	0.00	&	0.00	&	0.00	&	0.00	\\
0	&	2	&	0.00	&	0.00	&	0.00	&	0.00	&	0.00	&	0.00	&	0.00	&	0.00	&	0.00	&	0.00	&	0.00	&	0.00	\\
1	&	0	&	1.00	&	1.00	&	1.00	&	1.00	&	1.00	&	0.99	&	1.00	&	1.00	&	1.00	&	1.00	&	1.00	&	1.00	\\
1	&	1	&	1.00	&	1.00	&	1.00	&	1.00	&	1.00	&	0.99	&	1.00	&	1.00	&	1.00	&	1.00	&	1.00	&	1.00	\\
1	&	2	&	0.99	&	0.98	&	1.00	&	1.00	&	1.00	&	1.00	&	1.00	&	1.00	&	1.00	&	0.99	&	0.99	&	1.00	\\
2	&	0	&	1.90	&	1.97	&	1.99	&	1.97	&	1.91	&	1.95	&	1.98	&	1.99	&	1.85	&	1.95	&	1.98	&	1.99	\\
2	&	1	&	1.64	&	1.80	&	1.91	&	1.85	&	1.64	&	1.85	&	1.94	&	1.95	&	1.55	&	1.76	&	1.91	&	1.97	\\
2	&	2	&	1.62	&	1.77	&	1.87	&	1.80	&	1.65	&	1.76	&	1.87	&	1.89	&	1.54	&	1.67	&	1.83	&	1.94	\\
\hline
\hline
&& \multicolumn{4}{|c}{$N=100$, $T=200$}& \multicolumn{4}{|c}{$N=200$, $T=200$}& \multicolumn{4}{|c}{$N=200$, $T=500$}\\
$r_2\,$ & $r_3$ & $BT1$ & $BT2$ & $BT3$& $IC$& $BT1$ & $BT2$ & $BT3$& $IC$& $BT1$ & $BT2$ & $BT3$& $IC$\\
\hline
0	&	0	&	0.00	&	0.00	&	0.00	&	0.00	&	0.00	&	0.00	&	0.00	&	0.00	&	0.00	&	0.00	&	0.00	&	0.00	\\
0	&	1	&	0.00	&	0.00	&	0.00	&	0.00	&	0.00	&	0.00	&	0.00	&	0.00	&	0.00	&	0.00	&	0.00	&	0.00	\\
0	&	2	&	0.00	&	0.00	&	0.00	&	0.00	&	0.00	&	0.00	&	0.00	&	0.00	&	0.00	&	0.00	&	0.00	&	0.00	\\
1	&	0	&	0.99	&	1.00	&	1.00	&	1.00	&	1.00	&	1.00	&	1.00	&	1.00	&	1.00	&	1.00	&	1.00	&	1.00	\\
1	&	1	&	1.00	&	1.00	&	1.00	&	1.00	&	1.00	&	1.00	&	1.00	&	1.00	&	1.00	&	1.00	&	1.00	&	1.00	\\
1	&	2	&	1.00	&	1.00	&	1.00	&	1.00	&	1.00	&	1.00	&	1.00	&	1.00	&	1.00	&	1.00	&	1.00	&	1.00	\\
2	&	0	&	1.98	&	2.00	&	1.99	&	1.99	&	1.96	&	1.99	&	2.00	&	2.00	&	1.99	&	2.00	&	2.00	&	2.00	\\
2	&	1	&	1.89	&	1.98	&	1.99	&	1.98	&	1.87	&	1.95	&	1.99	&	2.00	&	1.98	&	1.99	&	2.00	&	2.00	\\
2	&	2	&	1.89	&	1.95	&	1.98	&	1.96	&	1.85	&	1.93	&	1.96	&	1.99	&	1.97	&	1.99	&	1.99	&	2.00	\\
\hline
\hline
\end{tabular}
}
\end{table}

The tables lend themselves to drawing some general conclusions about the
main features of our methodology. First, $BT1$ and $BT2$ are usually very good at finding
no common factors - whether with a linear trend or genuinely $I\left(
1\right) $ with zero mean - when there are no common factors (see Tables \ref{tab:case1} and \ref{tab:case2}), which is also
consistent with the results in \citet{trapani17}. In the case of detecting the presence of genuine zero-mean $I(1)$ common factors, these results can be compared with the ones obtained using $IC$ which invariably finds one common $I(1)$ factor even when such factors are not present (see Table \ref{tab:case2}). Few exceptions are found in Table \ref{tab:case1}, in the case where there is no common factor with a linear trend but there is one zero-mean $I(1)$ common factor. Even in this case, both $BT1$ and $BT2$ work extremely well as $T$ increases. Note that $BT3$ also works very well in this case, at least when no common factors with linear trends are present. Conversely, when there is one common factor with a linear trend, $BT3$ tends to overestimated more when one $I(1)$ factor is present, even when $T$ is large. Second, in general our criteria tend to understate, albeit slightly, as opposed to overstate the true number of common factors, this is particularly true when estimating the number of zero-mean $I(1)$ factors in presence of linear trends (see Table \ref{tab:case3}). In any case, the bias is of the same order as the bias of $IC$ and tends to vanish as $T$ increases. Overall, the performance of all our criteria improves dramatically as $T$
increases: although results are usually good whenever $T=100$, they markedly
improve when $T\geq 200$ for all cases considered. The impact of $N$ is, in general, less clear.
%

\section{An empirical investigation of the dimensions of the yield curve}

In this section, we illustrate our methodology through an application to the High Quality Market (HQM) Corporate Bond Yield Curve, available from the Federal Reserve Economic Data (FRED)\footnote{\tt{https://fred.stlouisfed.org}.} - details on the construction of the yield curves are available from the US Department of Treasury.\footnote{\tt{https://www.treasury.gov/resource-center/economic-policy/corp-bond-yie}.}
We use monthly data on HQM Corporate Bonds with maturities from  6 months up to 100 years ($N=196$), and spanning the period from January 1985 to September 2017 ($T=393$). The data are shown in Figure \ref{fig:data}, which shows evidence of non-stationarity and co-movements both cross-sectionally and across time.
\begin{figure}[t!]
\begin{center}
\caption{HQM Corporate Bond Yield Curve}\label{fig:data}
\includegraphics[width=.95\textwidth]{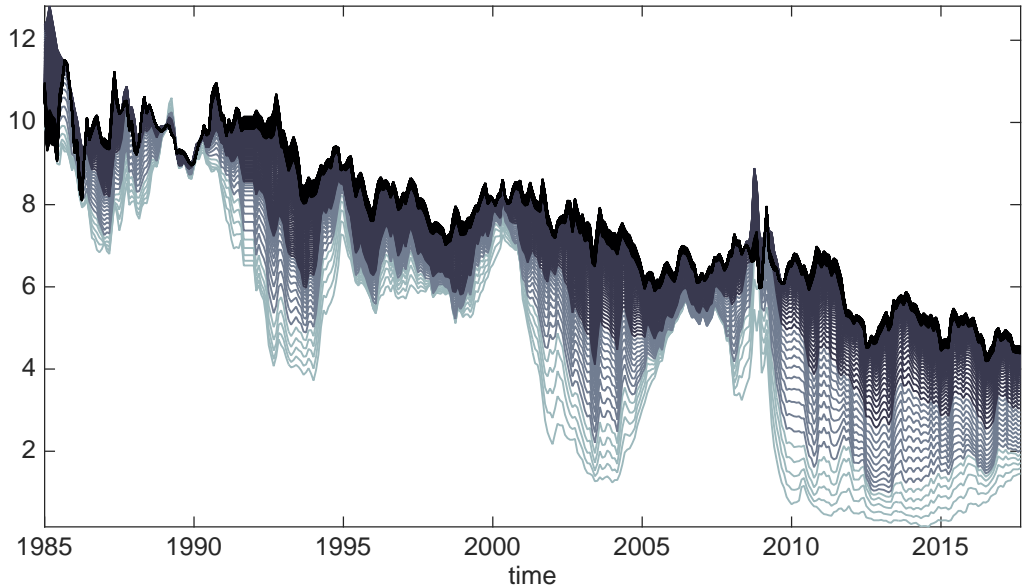}\\
\end{center}
\vskip -.5cm
\end{figure}

We use the same settings as in Section \ref{numerics}. In particular, when computing $\wh{r}_1$, we set $R_1 = N$, while for $\wh{r}^*$ we set $R_2 = N$ if $p=1$ and $R_2=\lfloor N/3\rfloor$ for $p>1$. The significance level is $\frac{0.05}{\min(N,T)}=0.0002551$. Finally, we note that when computing $\wh r$, $BT1$ is equivalent to the test by \citet{trapani17}. 

Results are in Table \ref{tab:emp}, where we have reported our three criteria, and, as a term of comparison, the information criterion $IC$ - when computing $\wh r$, this is equivalent to $IC3$ in \citet{baing02}. Based on our findings, there is borderline evidence of a common factor with a linear trend - indeed, this is picked up by $BT3$. Note that $BT3$, in view of our simulations, may have a tendency to overstate the presence of a common factor with a linear trend in small samples when there is one $I$(1), zero mean common factor. However, in our case the sample sizes are sufficiently large, and there is clear evidence of having several common factors, which suggests that $BT3$ may be correct in indicating the presence of a common factor with a linear trend. As far as the other factors are concerned, both $BT2$ and $BT3$ indicate that there are zero mean $I$(1) common factors; based on the discrepancy between these two criteria, it may be argued that two of such factors may be only borderline non-stationary. This evidence is in line with the findings from $IC$; conversely, $BT1$ seems to suggest only one common factor, which is at odds with the stylised factors in this literature where, usually, at least three factors are identified. To sum up, the results in Table \ref{tab:emp} indicate the presence of five common factors, which we estimate as the principal components of $X_t$, using the covariance $T^{-2}\sum_{t=1}^T  X_t X_t'$ and imposing the identifying constraint $\Lambda' \Lambda= N I_r$, \citep[see][]{bai04,maciejowska2010}. 

\begin{table}[t!]
\centering
\caption{Estimated number of factors in the HQM Corporate Bond Yield Curve}\label{tab:emp}
\vskip .3cm
\small{
\begin{tabular}{ l c|c|c|c|c}
\hline
\hline
&& $BT1$ & $BT2$ & $BT3$ & $IC$\\
\hline
with linear trend 	& $\wh{r}_1$ 	& 0 	& 0 	& 1 	& n.a.	\\
non-stationary  	& $\wh{r}^*$ 	& 1	& 3 	& 5 	& 5 		\\
zero-mean, $I(1)$ 	& $\wh{r}_2$ 	& 1	& 3	& 4	& n.a.	\\
all factors 				& $\wh{r}$ 	& 1	& 5	& 5	& 5		\\
zero-mean, $I(0)$  	& $\wh{r}_3$ 	& 0	& 2	& 0	& 0		\\
\hline
\hline
\end{tabular}
}
\end{table}

The estimated factors are shown in Figure \ref{fig:ident} (solid red lines). The first three factors appear to be non-stationary; in particular, as indicated by $BT3$, the first one does seem to be driven by a linear trend. This evidence is consistent with our findings in Table \ref{tab:emp} (save for $BT1$), and it implies that the first three factors are highly persistent. 
In Figure \ref{fig:acf} we report the autocorrelation of each estimated factor and the median, 5th and 95th percentiles of the autocorrelations of the idiosyncratic errors together with 95\% confidence bands (dashed lines) computed as $\pm \frac{1.96}{\sqrt T}$. These results suggest that the fourth and fifth factor are nearly stationary, whilst the idiosyncratic component is clearly stationary since it shows no residual autocorrelation. The presence of common unit roots, and the stationarity of the idiosyncratic error imply cointegration, which in turn implies the factor structure in bond yields - see \citet{dungey}. 

 \begin{figure}[t!]
 \centering
 \caption{Estimated and identified common factors $\wh{\mathcal F}_{j,t}$ with proxies.}\label{fig:ident}
 \begin{tabular}{cc}
\includegraphics[width=.45\textwidth]{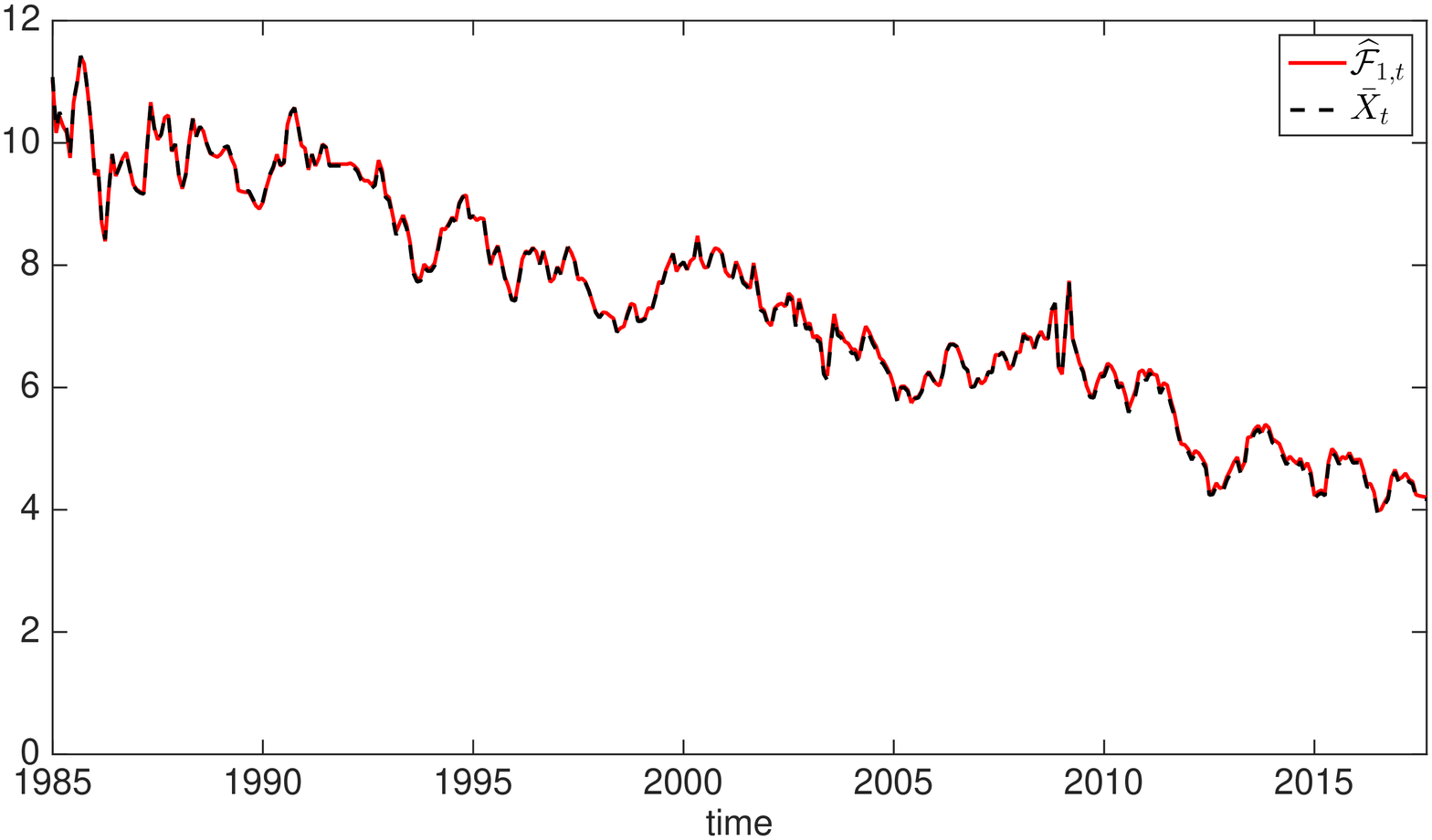}&\includegraphics[width=.45\textwidth]{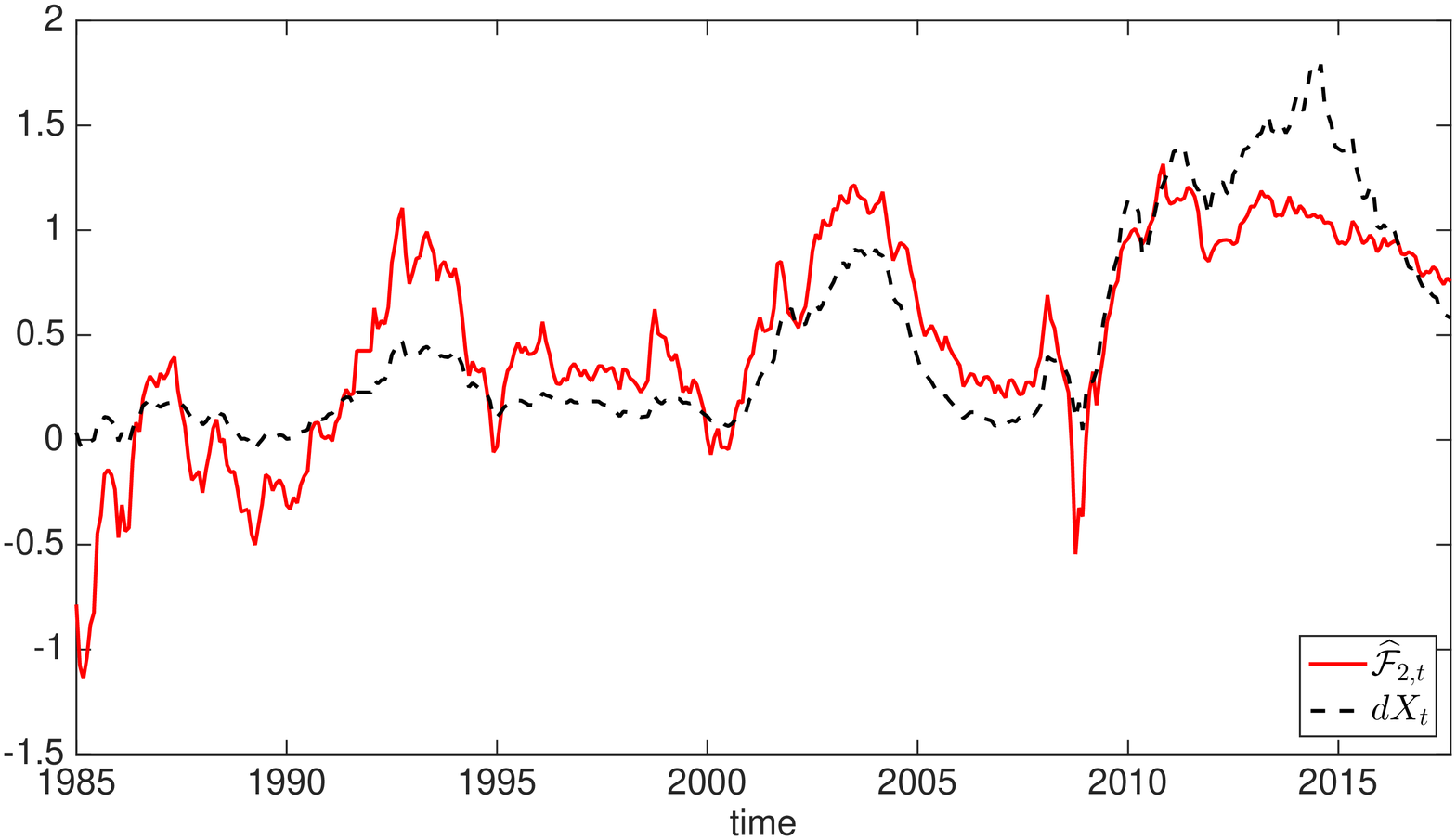}\\
\small{$\wh{\mathcal F}_{1,t}$}&\small{$\wh{\mathcal F}_{2,t}$}\\
\includegraphics[width=.45\textwidth]{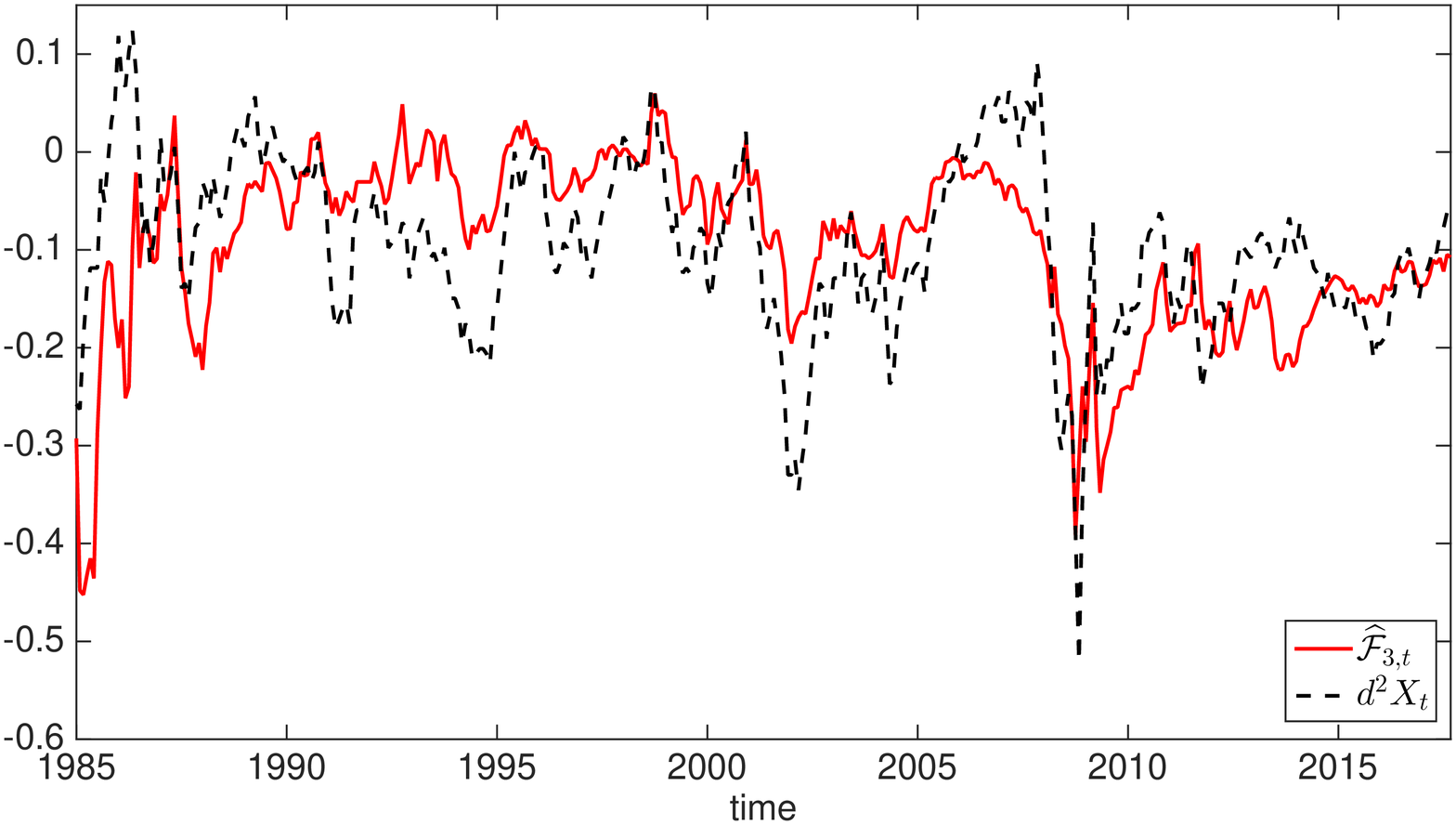}&\includegraphics[width=.45\textwidth]{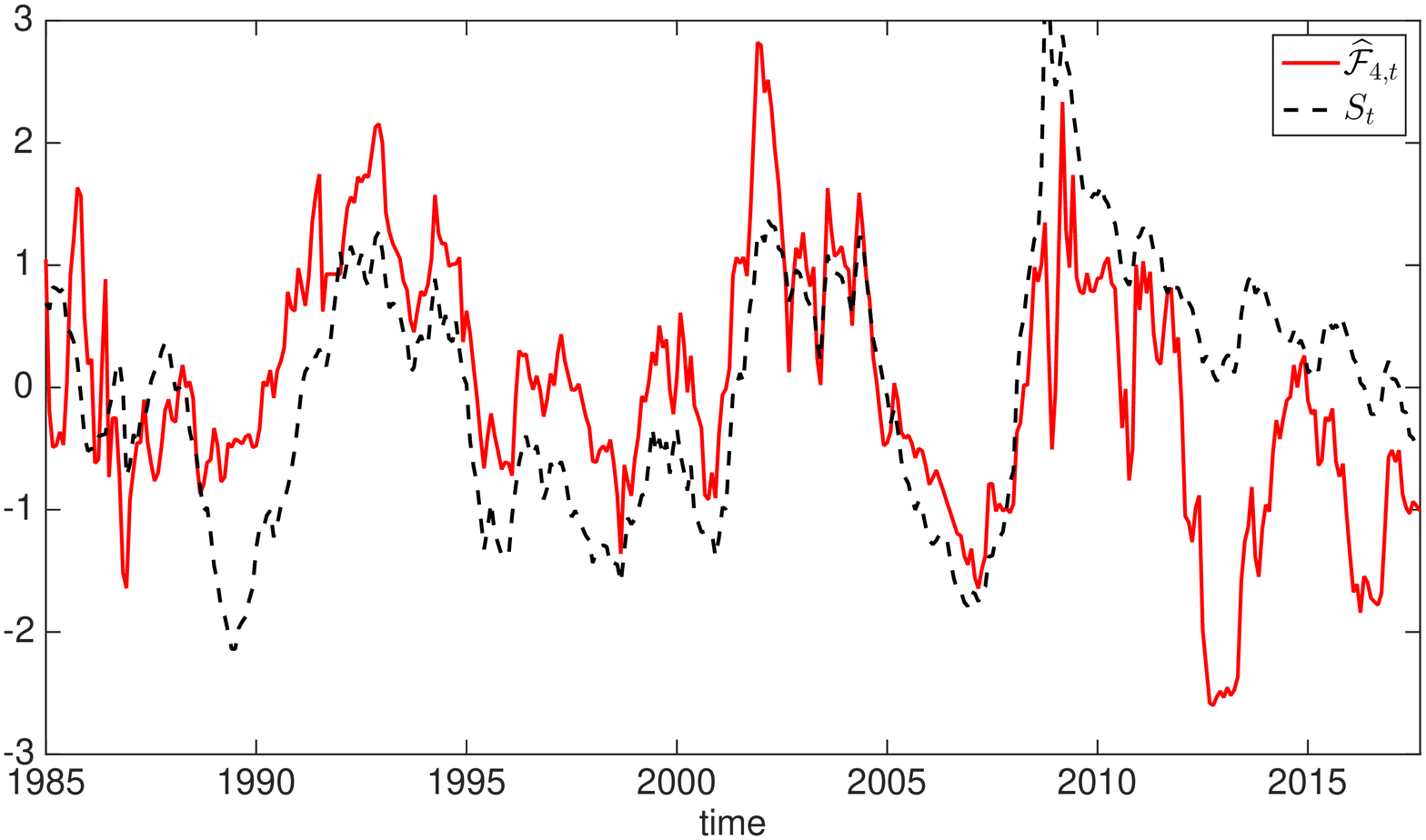}\\
\small{$\wh{\mathcal F}_{3,t}$}&\small{$\wh{\mathcal F}_{4,t}$}\\
 \end{tabular} 
 \begin{tabular}{c}
\includegraphics[width=.45\textwidth]{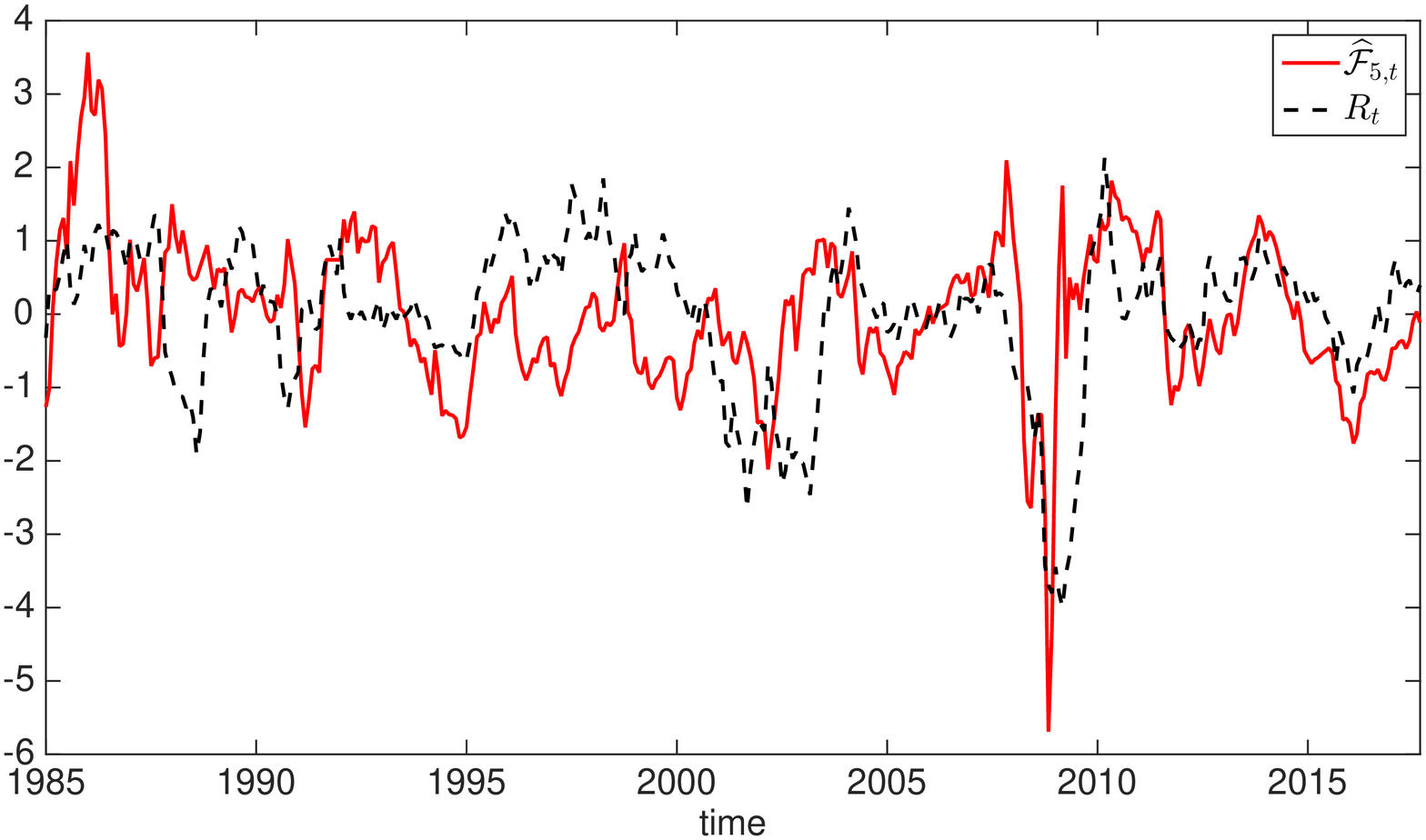}\\
\small{$\wh{\mathcal F}_{5,t}$}
 \end{tabular} 
 
 \end{figure}

 \begin{figure}[t!]
 \centering
 \caption{Autocorrelation of estimated common factors $\wh{\mathcal F}_{j,t}$ and idiosyncratic errors $\wh{u}_{i,t}$.}\label{fig:acf}
 \begin{tabular}{cc}
\includegraphics[width=.45\textwidth]{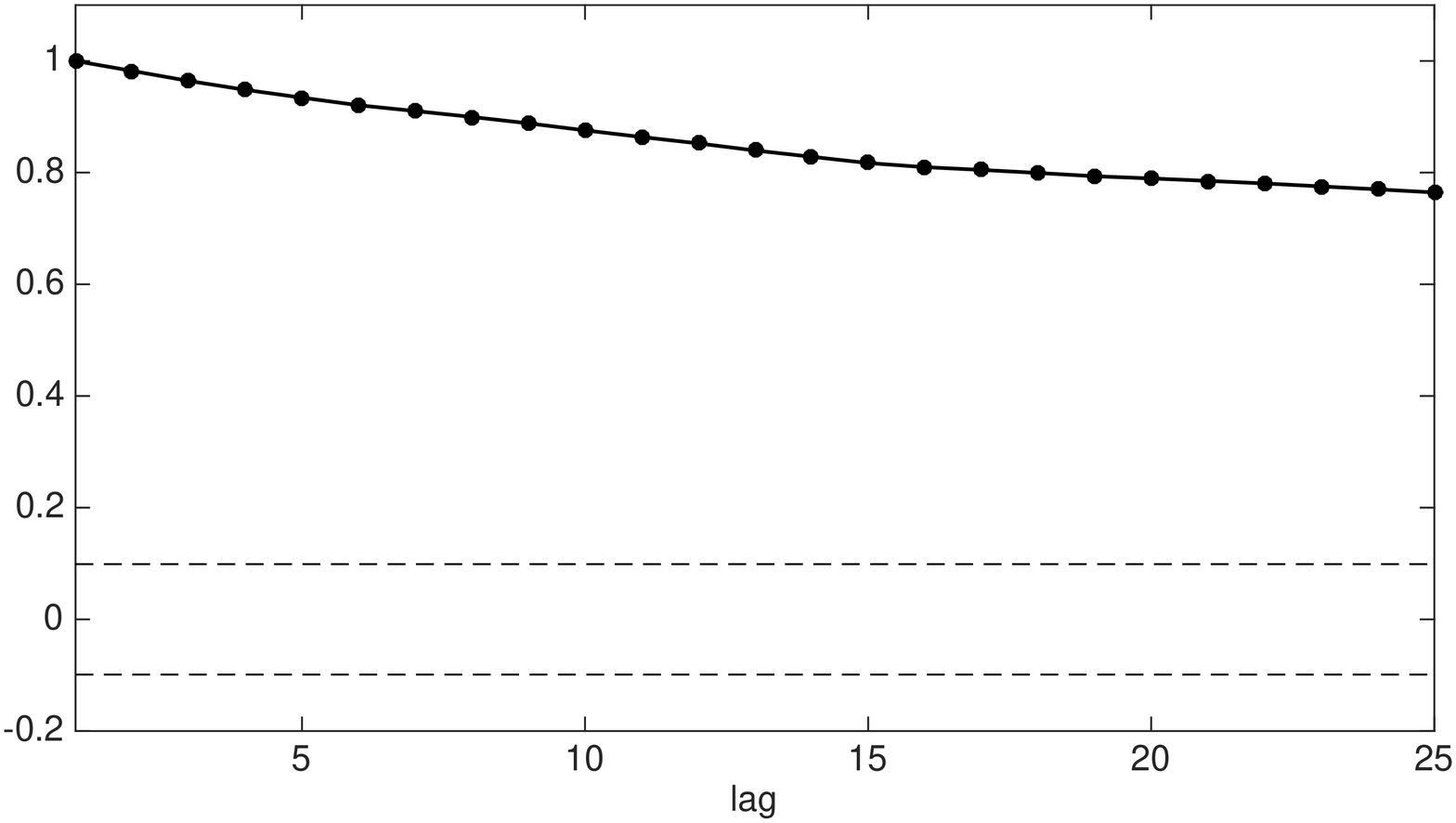}&\includegraphics[width=.45\textwidth]{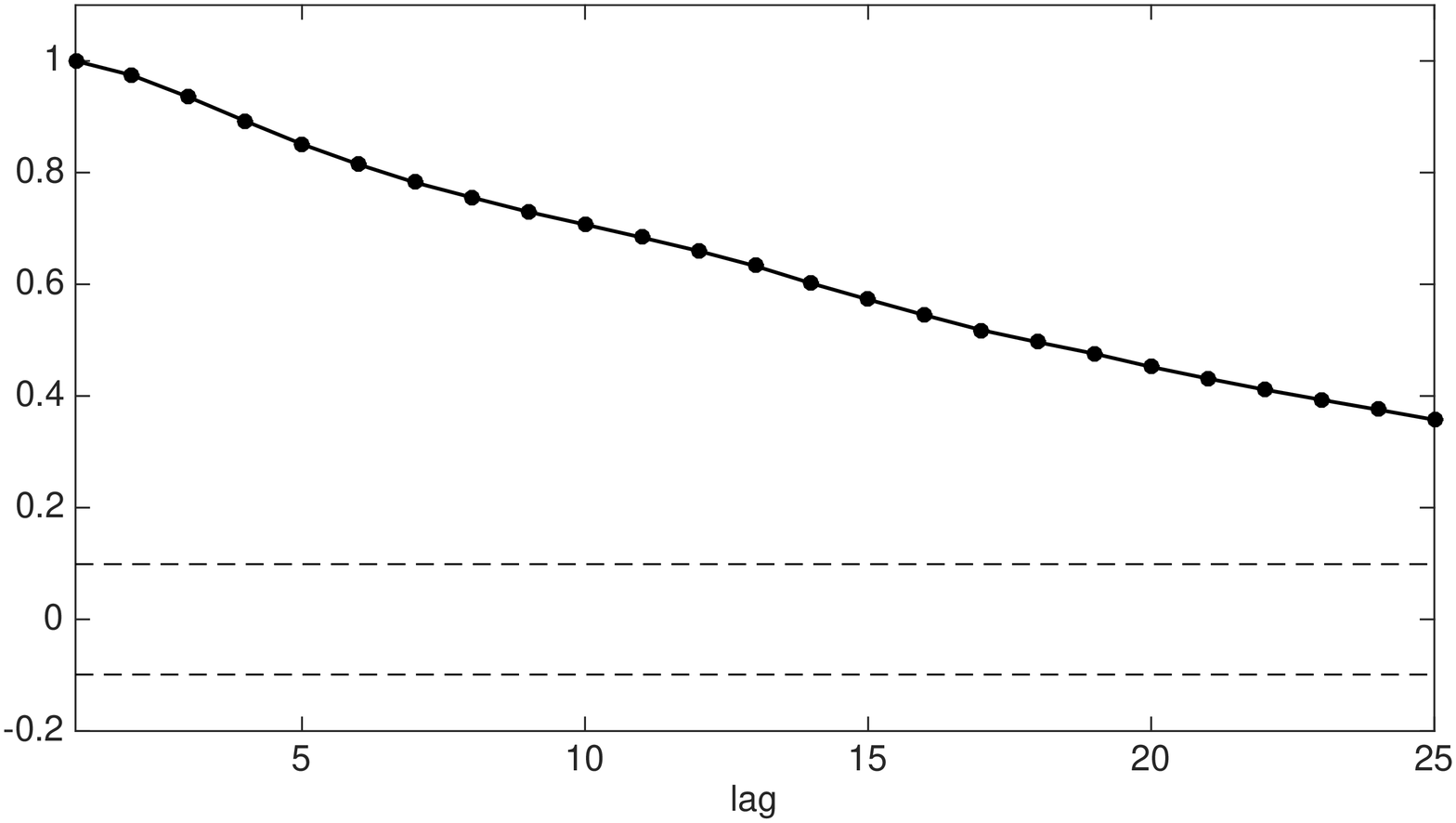}\\
\small{$\wh{\mathcal F}_{1,t}$}&\small{$\wh{\mathcal F}_{2,t}$}\\
\includegraphics[width=.45\textwidth]{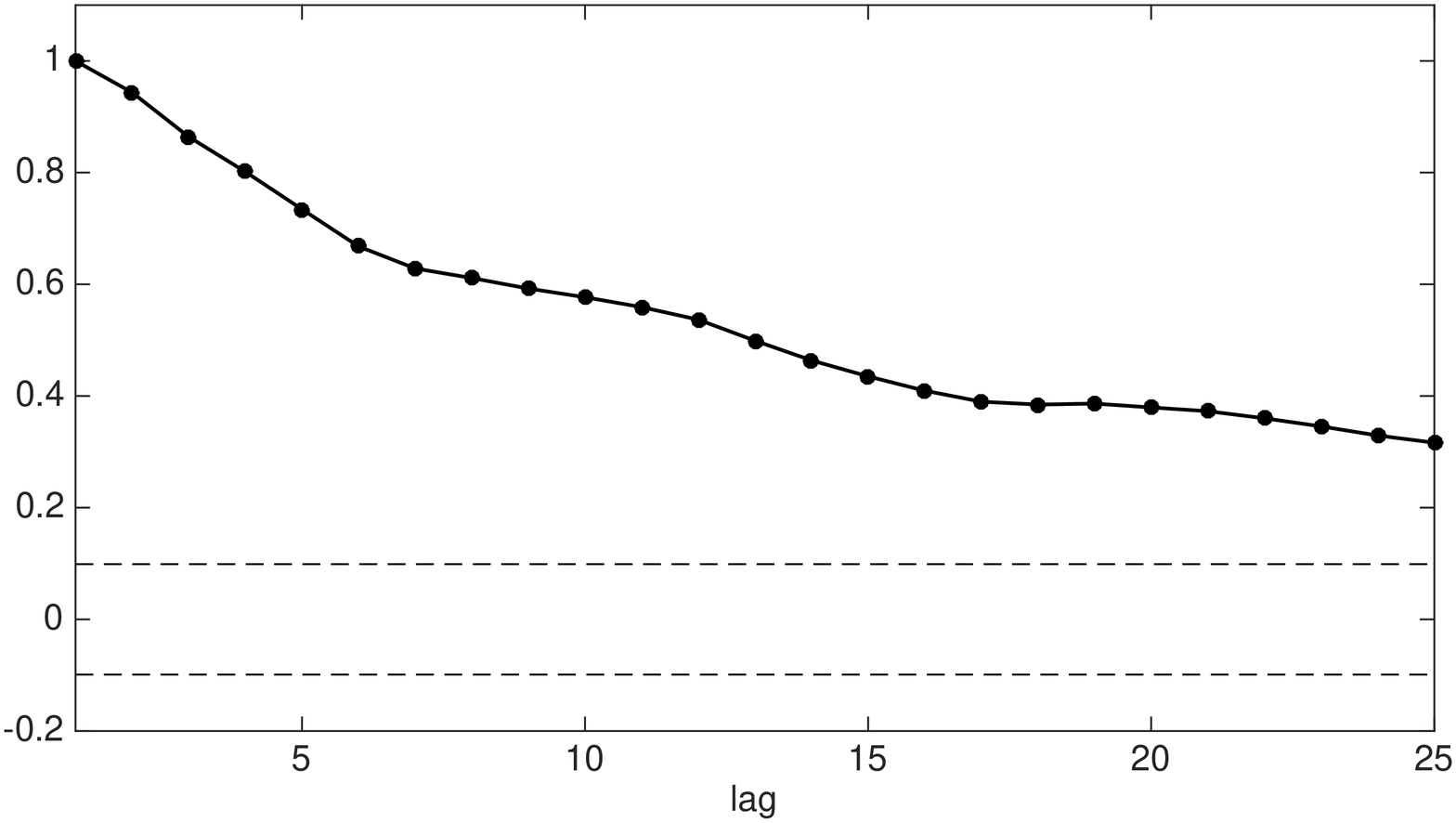}&\includegraphics[width=.45\textwidth]{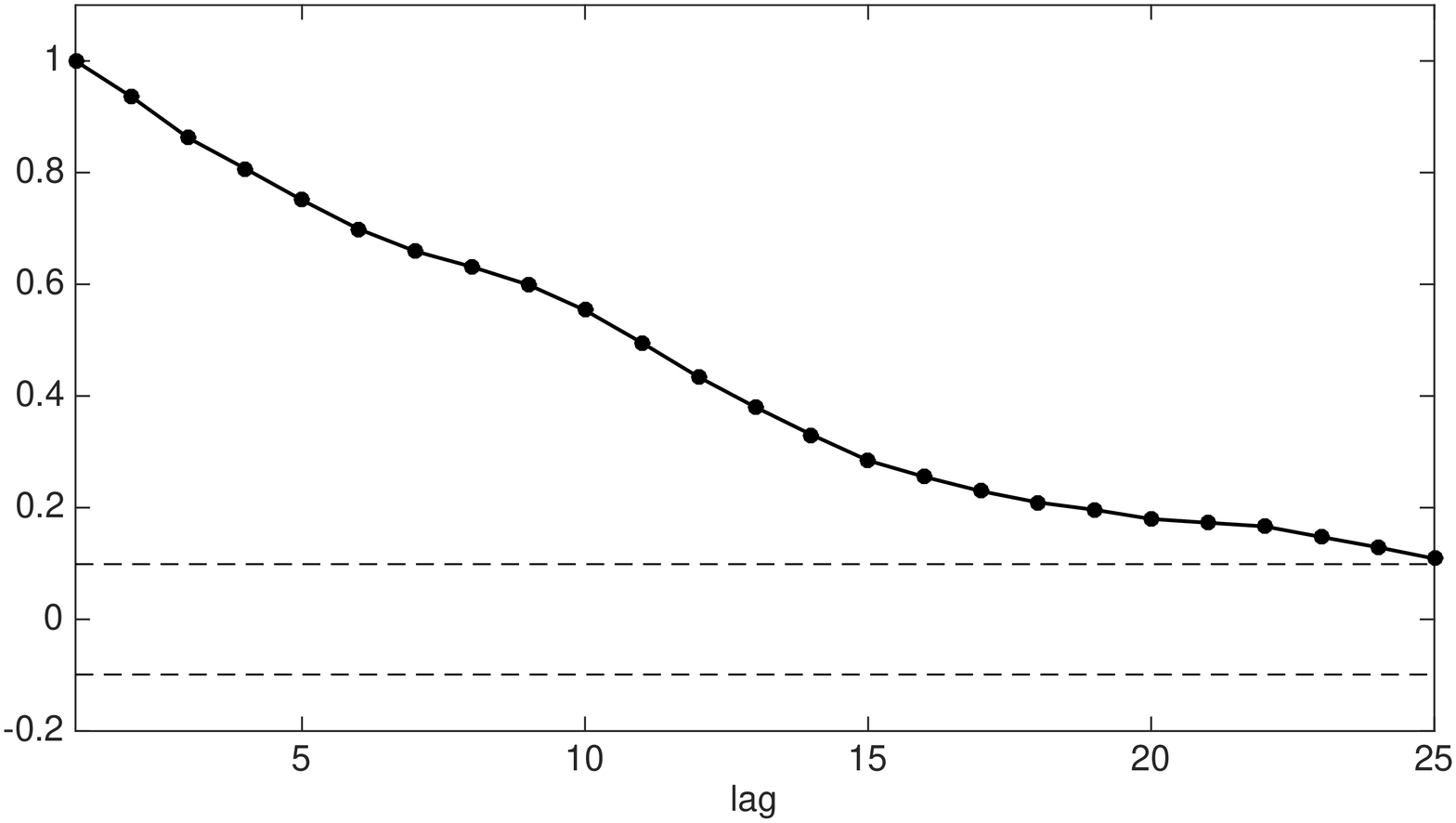}\\
\small{$\wh{\mathcal F}_{3,t}$}&\small{$\wh{\mathcal F}_{4,t}$}\\
\includegraphics[width=.45\textwidth]{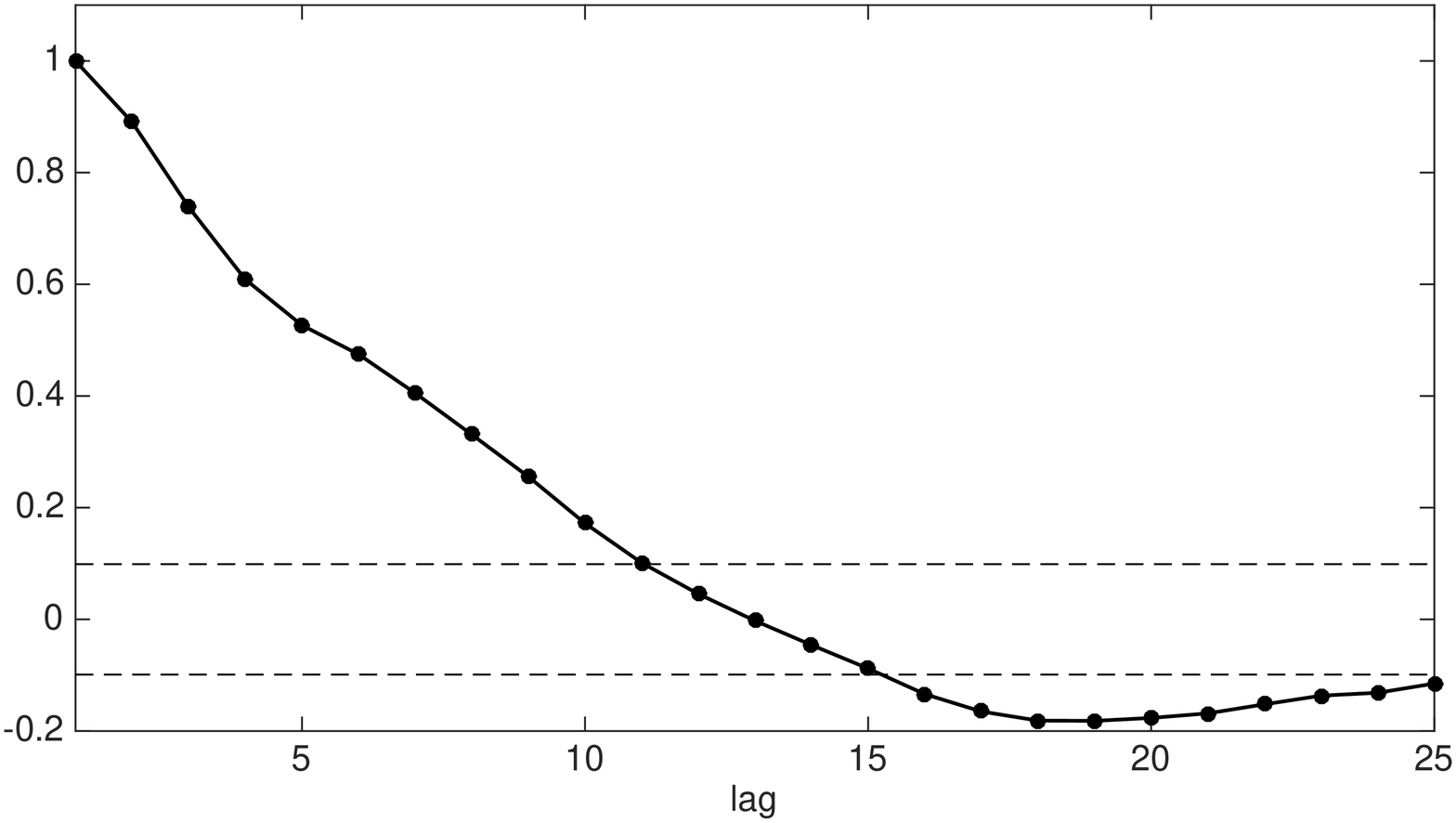}&\includegraphics[width=.45\textwidth]{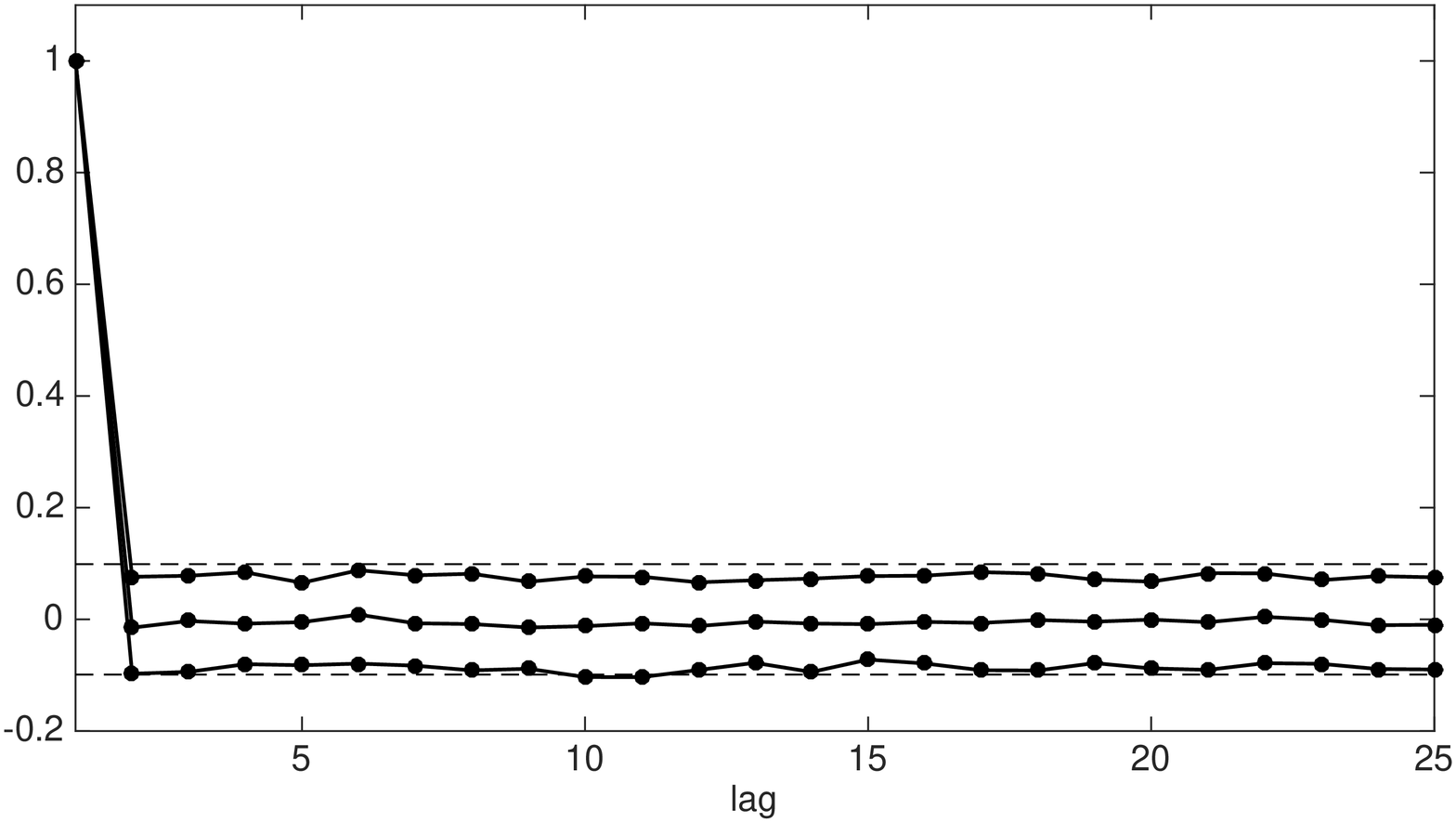}\\
\small{$\wh{\mathcal F}_{5,t}$}&\small{$\wh{u}_{i,t}$}
 \end{tabular}  
 \end{figure}

Our findings can be contrasted with the stylised facts which are typically found in this literature. In particular, following \citet{nelson1987}, it is common to model yield curves by means of three common factors, which are usually interpreted as the level, slope, and curvature of the yield curve in a given time period $t$ -- see for example \citet{dai2000} and \citet{diebold2006}. Moreover, when considering corporate bonds it common to find additional factors beyond the classical first three -- see for example \citet{duffie1999modeling}, \citet{duffie2007multi}, and \citet{clopez2008}.

First we analyse the first three estimated common factors. Throughout we assume that at each point in time $t$, the $N$ elements of $X_t$ are ordered according to their maturity, thus $X_{1,t}$ is the shortest maturity (6 months), while $X_{N,t}$ is the longest maturity (100 years). We compare each estimated factor with a standard proxy as specified by \citet{diebold2006b}. Results are in the first three panels of Figure \ref{fig:ident}, where we show both the estimated factors  (solid red lines) and the proxies (dashed black lines). In particular, in order to identify $\wh{\mathcal{F}}_{1,t}$, we consider the proxy $\bar X_t=N^{-1}\sum_{i=1}^NX_{i,t}$; we found that $\mbox{Corr}(\bar X_t,\wh{\mathcal {F}}_{1,t})\simeq 1$, which strongly suggests that $\wh{\mathcal {F}}_{1,t}$ can be viewed as the \textit{level} of the curve. Turning to $\wh{\mathcal {F}}_{2,t}$, we use, as a proxy for the slope, $dX_t = N^{-1}\sum_{i=2}^N(\ln X_{i,t}-\ln X_{i-1,t}) = N^{-1}(\ln X_{N,t}-\ln X_{1,t})$. We find that $\mbox{Corr}(dX_t,\wh{\mathcal {F}}_{2,t})=.82$, which suggests that $\wh{\mathcal {F}}_{2,t}$ can be interpreted as the \textit{slope} of the term structure. Finally, we compare $\wh{\mathcal F}_{3,t}$ to $d^2X_t = (N-2)^{-1}\sum_{i=2}^{N-1}( X_{i+1,t}-2 X_{i,t}+ X_{i-1,t})$ as a proxy for the curvature; we find $\mbox{Corr}(d^2X_t,\wh{\mathcal {F}}_{3,t})=.53$, which shows some evidence that $\wh{\mathcal F}_{3,t}$ can be interpreted as the \textit{curvature}. Furthermore, according to \citet{diebold2006}, the first three elements of the $i$-th row of the loadings matrix $\Lambda$ should be given by
\beq\label{eq:load}
\lambda_{i,1}(c) = 1,\qquad \lambda_{i,2}(c) = \l(\frac {1 - e^{-c i}}{ci}\r), \qquad \lambda_{i,3}(c) = \l(\frac {1 - e^{-c i}}{ci}-e^{-ci}\r), 
\eeq
for some $c>0$. To confirm this finding, in Figure \ref{fig:load}, we plot the estimated loadings $(\widehat{\lambda}_{i,1},\widehat{\lambda}_{i,2},\widehat{\lambda}_{i,3})$ (left panel) together with the theoretical curves in \eqref{eq:load} computed for $c=0.2$. 
 
 \begin{figure}[t!]
 \centering
 \caption{Estimated and theoretical factor loadings.}\label{fig:load}
 \begin{tabular}{cc}
\includegraphics[width=.45\textwidth]{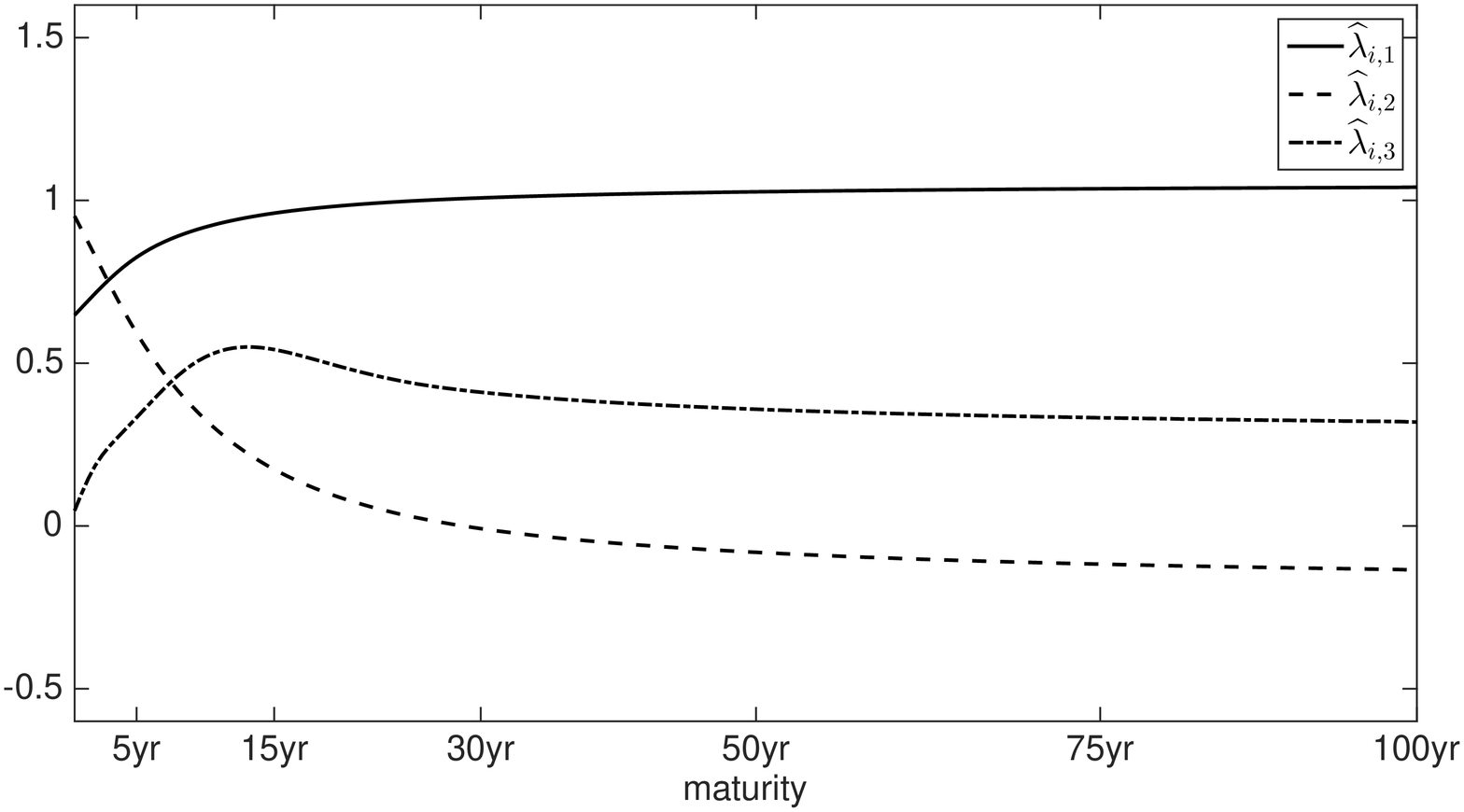}&\includegraphics[width=.45\textwidth]{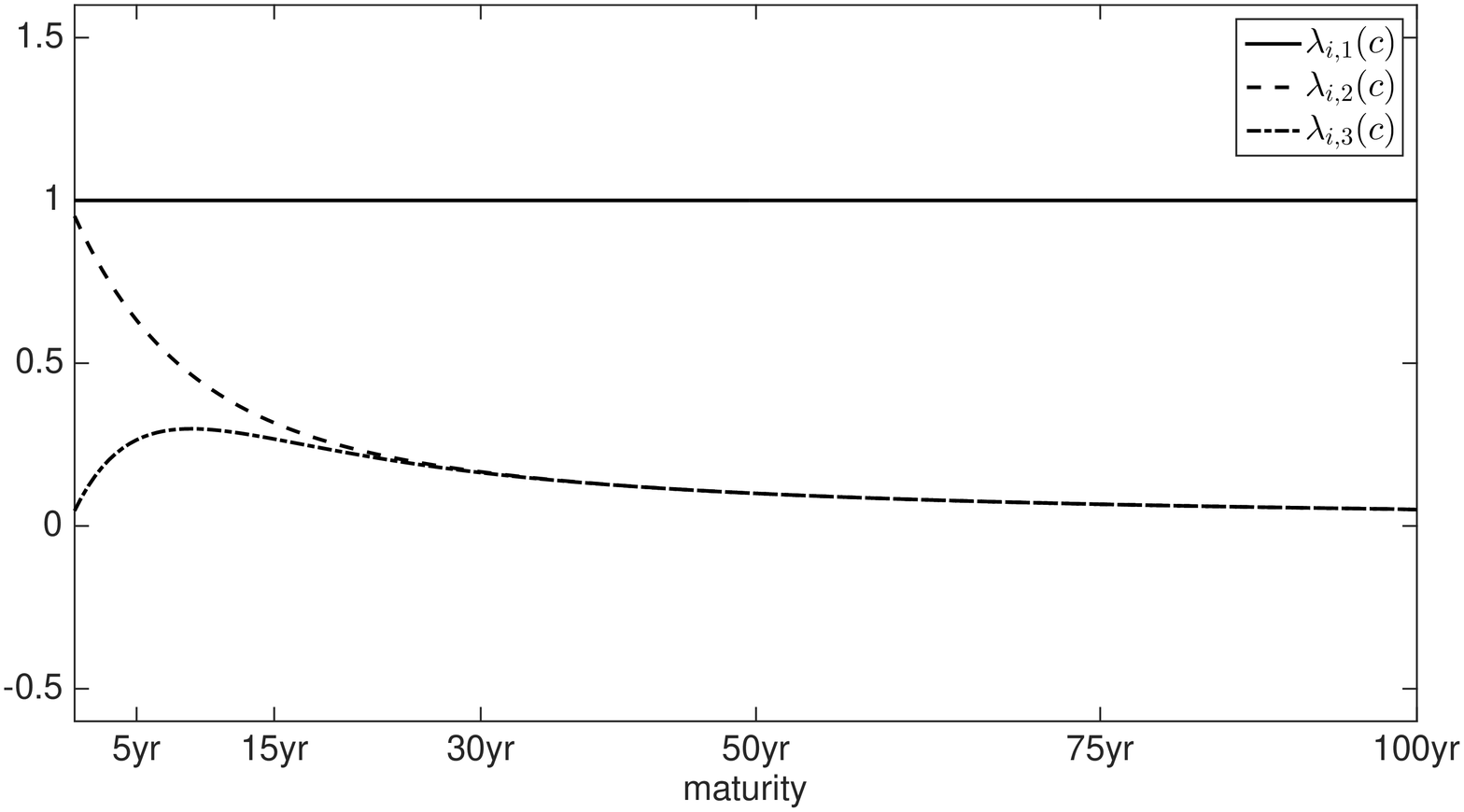}\\
\small{$\widehat{\lambda}_{i,j}$, $j=1,2,3$}&\small{$\lambda_{i,j}(c)$, $j=1,2,3$ and $c=0.2$}\\
 \end{tabular} 
\end{figure}

As far as the remaining two estimated common factors are concerned, we note that, in addition to level, slope and curvature, macroeconomic and financial factors have also been incorporated in the study of yield curves -- see for example  \citet{estrella1998}, \citet{ang2003}, \citet{diebold2006b}, \citet{duffie2007multi}, and \citet{coroneo2016unspanned}. We evaluate the correlation between $S_t$ -  the spread between the 10 years HQM bond rate and the Federal Funds rate - and the fourth factor finding that $\mbox{Corr}(S_t,\wh{\mathcal {F}}_{4,t})=.51$, whence we propose to interpret $\wh{\mathcal F}_{4,t}$ as the \textit{spread factor}. Also, letting $R_t$ be the yearly returns of the Standard \& Poor's index, we have $\mbox{Corr}(R_t,\wh{\mathcal {F}}_{5,t})=.30$; this seems to suggest that $\wh{\mathcal F}_{5,t}$ may be viewed as a \textit{financial factor}, or that, at a minimum, $\wh{\mathcal F}_{5,t}$ is intimately related to the financial market.\footnote{Data for $S_t$ are available at {\tt https://fred.stlouisfed.org}.\\ Data for $R_t$ are available at {\tt http://www.econ.yale.edu/\textasciitilde shiller/data.htm}.} These results are in line with the results by \citet{duffie2007multi}. In the last two panels of Figure \ref{fig:ident} we report the fourth an fifth estimated factors (solid red lines) and the corresponding proxies (dashed black lines).

\section{Conclusions\label{conclusions}}

In this paper, we propose a methodology to estimate the dimension of the
common factor space for a given dataset $X_{i,t}$. We do not assume that the
data are stationary or that they have (or not) linear trends: our procedure
estimates separately the number of common factors with a linear trend (which
can be only 0 or 1), the number of zero mean, $I\left( 1\right) $ common
factors, and the number of zero mean, $I\left( 0\right) $ common factors.

Since estimation of these dimensions is carried out via testing
(as opposed to using an information criterion or some other diagnostic), the
results provide several interesting interpretations. For example, having $%
r_{1}=0$ means that the data have been tested for the presence of common
linear trends, and none has been found; finding $r^*=0$ indicates that the data have been tested for (the null of)
non-stationarity, and have been found to be stationary; etc. 
Our methodology thus complements
the results recently derived by \citet{zhangpangao}.

Technically, our approach exploits the well-known eigenvalue separation
property that characterises the covariance matrix of data with a common
factor structure: essentially, the eigenvalues associated to common factors
diverge to positive infinity, whereas the other ones are bounded. On top of
this, we exploit the also well-known fact that linear trends, unit roots and
stationary processes all imply different rates of divergence of the
eigenvalues: these two facts allow us not merely to check whether there are
common factors (and how many these are) but also to discriminate between
those that have a trend, those that have a unit root, and the stationary
ones. In this respect, our procedure is akin to the one proposed by %
\citet{bai04} and \citet{zyr18}, although it is based on tests rather than an information
criterion, and it entertains the possibility that linear trends could be
present.

Several interesting issues, in the analysis of high dimensional, possibly
non-stationary, time series, remain outstanding. In model (\ref{model}), we
made no attempt to allow for the idiosyncratic components, $u_{i,t}$, to be
non-stationary, thus relegating all the possible non-stationarity to the
common component $\widetilde{\Lambda }\widetilde{F}_{t}$. Still, it would be
worth considering the case where $u_{i,t}\sim I\left( 1\right) $ for at
least some $i$, so as to be able to disentangle common and idiosyncratic
sources of non-stationarity. In such a case, our procedure would not be
immediately applicable, since, chiefly, (\ref{lambda-x-small}) would no
longer hold. Also, by proper rescaling of the covariance matrix, our approach can be readily generalized to $I(d)$ factors with $d\ge 1$. These extensions are under current investigation by the authors.


{{\ 
\bibliographystyle{chicago}
\bibliography{BT_biblio}

\begin{thebibliography}{}

\bibitem[\protect\citeauthoryear{Ahn and Horenstein}{Ahn and
  Horenstein}{2013}]{ahnhorenstein13}
Ahn, S.~C. and A.~R. Horenstein (2013).
\newblock Eigenvalue ratio test for the number of factors.
\newblock {\em Econometrica\/}~{\em 81}, 1203--1227.

\bibitem[\protect\citeauthoryear{Alessi, Barigozzi, and Capasso}{Alessi
  et~al.}{2010}]{ABC10}
Alessi, L., M.~Barigozzi, and M.~Capasso (2010).
\newblock Improved penalization for determining the number of factors in
  approximate static factor models.
\newblock {\em Statistics and Probability Letters\/}~{\em 80}, 1806--1813.

\bibitem[\protect\citeauthoryear{Ang and Piazzesi}{Ang and
  Piazzesi}{2003}]{ang2003}
Ang, A. and M.~Piazzesi (2003).
\newblock A no-arbitrage vector autoregression of term structure dynamics with
  macroeconomic and latent variables.
\newblock {\em Journal of Monetary Economics\/}~{\em 50}, 745--787.

\bibitem[\protect\citeauthoryear{Bai}{Bai}{2004}]{bai04}
Bai, J. (2004).
\newblock Estimating cross-section common stochastic trends in nonstationary
  panel data.
\newblock {\em Journal of Econometrics\/}~{\em 122}, 137--183.

\bibitem[\protect\citeauthoryear{Bai and Kao}{Bai and
  Kao}{2006}]{bai2006estimation}
Bai, J. and C.~Kao (2006).
\newblock On the estimation and inference of a panel cointegration model with
  cross-sectional dependence.
\newblock {\em Contributions to economic analysis\/}~{\em 274}, 3--30.

\bibitem[\protect\citeauthoryear{Bai, Kao, and Ng}{Bai
  et~al.}{2009}]{bai2009panel}
Bai, J., C.~Kao, and S.~Ng (2009).
\newblock Panel cointegration with global stochastic trends.
\newblock {\em Journal of Econometrics\/}~{\em 149}, 82--99.

\bibitem[\protect\citeauthoryear{Bai and Ng}{Bai and Ng}{2002}]{baing02}
Bai, J. and S.~Ng (2002).
\newblock Determining the number of factors in approximate factor models.
\newblock {\em Econometrica\/}~{\em 70}, 191--221.

\bibitem[\protect\citeauthoryear{Bai and Ng}{Bai and Ng}{2004}]{baing04}
Bai, J. and S.~Ng (2004).
\newblock A panic attack on unit roots and cointegration.
\newblock {\em Econometrica\/}~{\em 72\/}(4), 1127--1177.

\bibitem[\protect\citeauthoryear{Bandi and Corradi}{Bandi and
  Corradi}{2014}]{bandi2014}
Bandi, F. and V.~Corradi (2014).
\newblock Nonparametric nonstationarity tests.
\newblock {\em Econometric Theory\/}~{\em 30}, 127--149.

\bibitem[\protect\citeauthoryear{Berkes and Philipp}{Berkes and
  Philipp}{1979}]{berkes1979}
Berkes, I. and W.~Philipp (1979).
\newblock Approximation thorems for independent and weakly dependent random
  vectors.
\newblock {\em The Annals of Probability\/}, 29--54.

\bibitem[\protect\citeauthoryear{Cai}{Cai}{2006}]{cai2006}
Cai, G.~H. (2006).
\newblock Chover-type laws of the iterated logarithm for weighted sums of
  $\rho$-mixing sequences.
\newblock {\em International Journal of Stochastic Analysis\/}~{\em 2006}.

\bibitem[\protect\citeauthoryear{Chamberlain and Rothschild}{Chamberlain and
  Rothschild}{1983}]{chamberlainrothschild83}
Chamberlain, G. and M.~Rothschild (1983).
\newblock Arbitrage, factor structure, and mean-variance analysis on large
  asset markets.
\newblock {\em Econometrica\/}, 1281--1304.

\bibitem[\protect\citeauthoryear{Chen and Wu}{Chen and Wu}{2018}]{chenwu}
Chen, L. and W.~B. Wu (2018).
\newblock Testing for trends in high-dimensional time series.
\newblock {\em Journal of the American Statistical Association\/}.
\newblock forthcoming.

\bibitem[\protect\citeauthoryear{Christensen and Lopez}{Christensen and
  Lopez}{2008}]{clopez2008}
Christensen, J. H.~E. and J.~A. Lopez (2008).
\newblock Common risk factors in the {US} treasury and corporate bond markets:
  {A}n arbitrage-free dynamic {N}elson-{S}iegel modeling approach.
\newblock Technical report, Federal Reserve Bank of San Francisco.

\bibitem[\protect\citeauthoryear{Coroneo, Giannone, and Modugno}{Coroneo
  et~al.}{2016}]{coroneo2016unspanned}
Coroneo, L., D.~Giannone, and M.~Modugno (2016).
\newblock Unspanned macroeconomic factors in the yield curve.
\newblock {\em Journal of Business \& Economic Statistics\/}~{\em 34},
  472--485.

\bibitem[\protect\citeauthoryear{Corradi and Swanson}{Corradi and
  Swanson}{2006}]{corradi2006}
Corradi, V. and N.~R. Swanson (2006).
\newblock The effects of data transformation on common cycle, cointegration,
  and unit root tests: {M}onte {C}arlo and a simple test.
\newblock {\em Journal of Econometrics\/}~{\em 132}, 195--229.

\bibitem[\protect\citeauthoryear{Dai and Singleton}{Dai and
  Singleton}{2000}]{dai2000}
Dai, Q. and K.~J. Singleton (2000).
\newblock Specification analysis of affine term structure models.
\newblock {\em The Journal of Finance\/}~{\em 55}, 1943--1978.

\bibitem[\protect\citeauthoryear{Davidson}{Davidson}{1994}]{davidson}
Davidson, J. (1994).
\newblock {\em Stochastic limit theory: An introduction for econometricians}.
\newblock OUP Oxford.

\bibitem[\protect\citeauthoryear{Diebold and Li}{Diebold and
  Li}{2006}]{diebold2006}
Diebold, F.~X. and C.~Li (2006).
\newblock Forecasting the term structure of government bond yields.
\newblock {\em Journal of Econometrics\/}~{\em 130}, 337--364.

\bibitem[\protect\citeauthoryear{Diebold, Rudebusch, and Aruoba}{Diebold
  et~al.}{2006}]{diebold2006b}
Diebold, F.~X., G.~D. Rudebusch, and S.~B. Aruoba (2006).
\newblock The macroeconomy and the yield curve: a dynamic latent factor
  approach.
\newblock {\em Journal of Econometrics\/}~{\em 131}, 309--338.

\bibitem[\protect\citeauthoryear{Donsker and Varadhan}{Donsker and
  Varadhan}{1977}]{donsker1977}
Donsker, M. and S.~Varadhan (1977).
\newblock On laws of the iterated logarithm for local times.
\newblock {\em Communications on Pure and Applied Mathematics\/}~{\em 30\/}(6),
  707--753.

\bibitem[\protect\citeauthoryear{Duffie, Saita, and Wang}{Duffie
  et~al.}{2007}]{duffie2007multi}
Duffie, D., L.~Saita, and K.~Wang (2007).
\newblock Multi-period corporate default prediction with stochastic covariates.
\newblock {\em Journal of Financial Economics\/}~{\em 83}, 635--665.

\bibitem[\protect\citeauthoryear{Duffie and Singleton}{Duffie and
  Singleton}{1999}]{duffie1999modeling}
Duffie, D. and K.~J. Singleton (1999).
\newblock Modeling term structures of defaultable bonds.
\newblock {\em The Review of Financial Studies\/}~{\em 12}, 687--720.

\bibitem[\protect\citeauthoryear{Dungey, Martin, and Pagan}{Dungey
  et~al.}{2000}]{dungey}
Dungey, M., V.~L. Martin, and A.~R. Pagan (2000).
\newblock A multivariate latent factor decomposition of international bond
  yield spreads.
\newblock {\em Journal of Applied Econometrics\/}~{\em 15\/}(6), 697--715.

\bibitem[\protect\citeauthoryear{Engel, Mark, and West}{Engel
  et~al.}{2015}]{engel2015}
Engel, C., N.~C. Mark, and K.~D. West (2015).
\newblock Factor model forecasts of exchange rates.
\newblock {\em Econometric Reviews\/}~{\em 34\/}(1-2), 32--55.

\bibitem[\protect\citeauthoryear{Escribano and Pe{\~n}a}{Escribano and
  Pe{\~n}a}{1994}]{EP1994}
Escribano, A. and D.~Pe{\~n}a (1994).
\newblock Cointegration and common factors.
\newblock {\em Journal of Time Series Analysis\/}~{\em 15\/}(6), 577--586.

\bibitem[\protect\citeauthoryear{Estrella and Mishkin}{Estrella and
  Mishkin}{1998}]{estrella1998}
Estrella, A. and F.~S. Mishkin (1998).
\newblock Predicting us recessions: Financial variables as leading indicators.
\newblock {\em The Review of Economics and Statistics\/}~{\em 80}, 45--61.

\bibitem[\protect\citeauthoryear{Gengenbach, Palm, and Urbain}{Gengenbach
  et~al.}{2009}]{gengenbach2009}
Gengenbach, C., F.~C. Palm, and J.-P. Urbain (2009).
\newblock Panel unit root tests in the presence of cross-sectional
  dependencies: comparison and implications for modelling.
\newblock {\em Econometric Reviews\/}~{\em 29\/}(2), 111--145.

\bibitem[\protect\citeauthoryear{Hallin and Li{\v s}ka}{Hallin and Li{\v
  s}ka}{2007}]{hallinliska07}
Hallin, M. and R.~Li{\v s}ka (2007).
\newblock Determining the number of factors in the general dynamic factor
  model.
\newblock {\em Journal of the American Statistical Association\/}~{\em 102},
  603--617.

\bibitem[\protect\citeauthoryear{Horn and Johnson}{Horn and
  Johnson}{2012}]{hornjohnson}
Horn, R.~A. and C.~R. Johnson (2012).
\newblock {\em Matrix analysis}.
\newblock Cambridge university press.

\bibitem[\protect\citeauthoryear{Kapetanios, Pesaran, and Yamagata}{Kapetanios
  et~al.}{2011}]{kapetanios2011panels}
Kapetanios, G., M.~H. Pesaran, and T.~Yamagata (2011).
\newblock Panels with non-stationary multifactor error structures.
\newblock {\em Journal of Econometrics\/}~{\em 160\/}(2), 326--348.

\bibitem[\protect\citeauthoryear{Lam and Yao}{Lam and Yao}{2012}]{lam2012}
Lam, C. and Q.~Yao (2012).
\newblock Factor modeling for high-dimensional time series: inference for the
  number of factors.
\newblock {\em The Annals of Statistics\/}~{\em 40\/}(2), 694--726.

\bibitem[\protect\citeauthoryear{Lin and Bai}{Lin and Bai}{2010}]{linbai}
Lin, Z. and Z.~Bai (2010).
\newblock Probability inequalities of random variables.
\newblock In {\em Probability Inequalities}, pp.\  37--50. Springer.

\bibitem[\protect\citeauthoryear{Liu and Lin}{Liu and Lin}{2009}]{liu2009}
Liu, W. and Z.~Lin (2009).
\newblock Strong approximation for a class of stationary processes.
\newblock {\em Stochastic Processes and their Applications\/}~{\em 119\/}(1),
  249--280.

\bibitem[\protect\citeauthoryear{Maciejowska}{Maciejowska}{2010}]{maciejowska2010}
Maciejowska, K. (2010).
\newblock Common factors in nonstationary panel data with a deterministic
  trend--estimation and distribution theory.

\bibitem[\protect\citeauthoryear{Merikoski and Kumar}{Merikoski and
  Kumar}{2004}]{merikoski2004}
Merikoski, J.~K. and R.~Kumar (2004).
\newblock Inequalities for spreads of matrix sums and products.
\newblock {\em Applied Mathematics E-Notes\/}~{\em 4}, 150--159.

\bibitem[\protect\citeauthoryear{Moon and Perron}{Moon and Perron}{2004}]{mp04}
Moon, H.~R. and B.~Perron (2004).
\newblock Testing for a unit root in panels with dynamic factors.
\newblock {\em Journal of Econometrics\/}~{\em 122\/}(1), 81--126.

\bibitem[\protect\citeauthoryear{Moon and Perron}{Moon and
  Perron}{2007}]{moon2007}
Moon, H.~R. and B.~Perron (2007).
\newblock An empirical analysis of nonstationarity in a panel of interest rates
  with factors.
\newblock {\em Journal of Applied Econometrics\/}~{\em 22\/}(2), 383--400.

\bibitem[\protect\citeauthoryear{Nelson and Siegel}{Nelson and
  Siegel}{1987}]{nelson1987}
Nelson, C.~R. and A.~F. Siegel (1987).
\newblock Parsimonious modeling of yield curves.
\newblock {\em Journal of Business\/}~{\em 60}, 473--489.

\bibitem[\protect\citeauthoryear{Onatski}{Onatski}{2009}]{onatski09}
Onatski, A. (2009).
\newblock Testing hypotheses about the number of factors in large factor
  models.
\newblock {\em Econometrica\/}~{\em 77}, 1447--1479.

\bibitem[\protect\citeauthoryear{Onatski}{Onatski}{2010}]{onatski10}
Onatski, A. (2010).
\newblock Determining the number of factors from empirical distribution of
  eigenvalues.
\newblock {\em The Review of Economics and Statistics\/}~{\em 92}, 1004--1016.

\bibitem[\protect\citeauthoryear{Onatski}{Onatski}{2012}]{onatski12}
Onatski, A. (2012).
\newblock Asymptotics of the principal components estimator of large factor
  models with weakly influential factors.
\newblock {\em Journal of Econometrics\/}~{\em 168}, 244--258.

\bibitem[\protect\citeauthoryear{Onatski and Wang}{Onatski and
  Wang}{2016}]{onatski2016}
Onatski, A. and C.~Wang (2016).
\newblock Alternative asymptotics for cointegration tests in large vars.
\newblock {\em arXiv preprint arXiv:1605.08880\/}.

\bibitem[\protect\citeauthoryear{Pearson}{Pearson}{1950}]{pearson50}
Pearson, E. (1950).
\newblock On questions raised by the combination of tests based on
  discontinuous distributions.
\newblock {\em Biometrika\/}~{\em 37}, 383--398.

\bibitem[\protect\citeauthoryear{Pe{\~n}a and Poncela}{Pe{\~n}a and
  Poncela}{2006}]{penaponcela}
Pe{\~n}a, D. and P.~Poncela (2006).
\newblock Nonstationary dynamic factor analysis.
\newblock {\em Journal of Statistical Planning and Inference\/}~{\em 136\/}(4),
  1237--1257.

\bibitem[\protect\citeauthoryear{Pesaran, Smith, and Yamagata}{Pesaran
  et~al.}{2013}]{pesaran2013panel}
Pesaran, M.~H., L.~V. Smith, and T.~Yamagata (2013).
\newblock Panel unit root tests in the presence of a multifactor error
  structure.
\newblock {\em Journal of Econometrics\/}~{\em 175\/}(2), 94--115.

\bibitem[\protect\citeauthoryear{Serfling}{Serfling}{1970}]{serfling1970}
Serfling, R.~J. (1970).
\newblock Moment inequalities for the maximum cumulative sum.
\newblock {\em The Annals of Mathematical Statistics\/}, 1227--1234.

\bibitem[\protect\citeauthoryear{Stock and Watson}{Stock and
  Watson}{1988}]{SW1988}
Stock, J.~H. and M.~W. Watson (1988).
\newblock Testing for common trends.
\newblock {\em Journal of the American statistical Association\/}~{\em
  83\/}(404), 1097--1107.

\bibitem[\protect\citeauthoryear{Trapani}{Trapani}{2017}]{trapani17}
Trapani, L. (2017).
\newblock A randomised testing procedure to determine the number of factors.
\newblock {\em Journal of the American Statistical Association\/}.
\newblock forthcoming.

\bibitem[\protect\citeauthoryear{Wang and Fan}{Wang and Fan}{2016}]{wang16}
Wang, W. and J.~Fan (2016).
\newblock Asymptotics of empirical eigen-structure for high dimensional spiked
  covariance.
\newblock {\em The Annals of Statistics\/}~{\em 45\/}(3), 1342--1374.

\bibitem[\protect\citeauthoryear{Wu}{Wu}{2007}]{wu07}
Wu, W.~B. (2007).
\newblock Strong invariance principles for dependent random variables.
\newblock {\em The Annals of Probability\/}~{\em 35}, 2294--2320.

\bibitem[\protect\citeauthoryear{Zhang, Pan, and Gao}{Zhang
  et~al.}{2017}]{zhangpangao}
Zhang, B., G.~Pan, and J.~Gao (2017).
\newblock {CLT} for largest eigenvalues and unit root tests for
  high-dimensional nonstationary time series.
\newblock {\em The Annals of Statistics\/}.
\newblock forthcoming.

\bibitem[\protect\citeauthoryear{Zhang, Robinson, and Yao}{Zhang
  et~al.}{2018}]{zyr18}
Zhang, R., P.~Robinson, and Q.~Yao (2018).
\newblock Identifying cointegration by eigenanalysis.
\newblock {\em Journal of the American Statistical Association\/}.
\newblock forthcoming.

\end{thebibliography}
}}

\newpage
\appendix
\small
\section{Proofs}
\subsection{Preliminary lemmas}


Henceforth, $\nu ^{\left( p\right) }\left( A\right) $ represent the
eigenvalues, sorted in decreasing order, for a matrix $A$; we occasionally
employ the notation $\nu ^{\left( \min \right) }\left( A\right) $ to
denote the smallest eigenvalue of $A$. Also, \textquotedblleft $\overset{D}{=%
}$\textquotedblright\ denotes equality in distribution. We also use the
following matrix notation 
\begin{eqnarray*}
X_{t} &=&\Lambda ^{\left( 1\right) }f_{t}^{\left( 1\right) }+u_{t}^{\left(
1\right) } \\
&=&\Lambda ^{\left( 1\right) }f_{t}^{\left( 1\right) }+\Lambda ^{\left(
2\right) }f_{t}^{\left( 2\right) }+u_{t}^{\left( 2\right) } \\
&=&\Lambda ^{\left( 1\right) }f_{t}^{\left( 1\right) }+\Lambda ^{\left(
2\right) }f_{t}^{\left( 2\right) }+\Lambda ^{\left( 3\right) }f_{t}^{\left(
3\right) }+u_{t}.
\end{eqnarray*}%
As far as the notation is concerned, $\Lambda ^{\left( 1\right) }$ is $%
N\times r_{1}$; $\Lambda ^{\left( 2\right) }$ is $N\times r_{2}$; and,
finally, $\Lambda ^{\left( 3\right) }$ is $N\times r_{3}$.

We begin with the following lemma, which is useful to derive almost sure
rates.

\begin{lemma}
\label{borelcantelli} Consider a multi-index random variable $%
U_{i_{1},...,i_{h}}$, with $1\leq i_{1}\leq S_{1}$, $1\leq i_{2}\leq S_{2}$,
etc... Assume that%
\begin{equation}
\sum_{i_{1}}\cdot \cdot \sum_{i_{h}}\frac{1}{S_{1}\cdot ...\cdot S_{h}}%
P\left( \max_{1\leq i_{1}\leq S_{1},...,1\leq i_{h}\leq S_{h}}\left\vert
U_{i_{1},...,i_{h}}\right\vert >\epsilon L_{S_{1},...,S_{h}}\right) <\infty ,
\label{bc-1}
\end{equation}%
for some $\epsilon >0$ and a sequence $L_{S_{1},...,S_{h}}$ defined as%
\begin{equation*}
L_{S_{1},...,S_{h}}=S_{1}^{d_{1}}\cdot ...\cdot S_{h}^{d_{h}}l_{1}\left(
S_{1}\right) \cdot ...l_{h}\left( S_{h}\right) ,
\end{equation*}%
where $d_{1}$, $d_{2}$, etc. are non-negative numbers and $l_{1}\left( \cdot
\right) $, $l_{2}\left( \cdot \right) $, etc. are slowly varying functions
in the sense of Karamata. Then it holds that%
\begin{equation}
\lim \sup_{\left( S_{1},...,S_{h}\right) \rightarrow \infty }\frac{%
\left\vert U_{S_{1},...,S_{h}}\right\vert }{L_{S_{1},...,S_{h}}}=0\text{ 
\textit{a.s.}}  \label{bc-2}
\end{equation}
\end{lemma}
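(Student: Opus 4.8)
The plan is to run a Borel--Cantelli argument, but not on the family indexed directly by $(S_1,\ldots,S_h)$: the weights $1/(S_1\cdots S_h)$ make that impossible, since $\sum_{S_1,\ldots,S_h}1/(S_1\cdots S_h)=\infty$ and \eqref{bc-1} therefore does not control $\sum_{S_1,\ldots,S_h}P(\cdot)$. Instead I would first pass to a geometric grid. Write $M_{S_1,\ldots,S_h}=\max_{1\le i_1\le S_1,\ldots,1\le i_h\le S_h}|U_{i_1,\ldots,i_h}|$; this is non-decreasing in each argument and dominates $|U_{S_1,\ldots,S_h}|$. Also note that, by regular variation (the exponents $d_j\ge 0$ and the $l_j$ slowly varying), $L_{S_1,\ldots,S_h}$ is, up to asymptotically negligible factors, coordinate-wise non-decreasing, and $L_{\lfloor\lambda S_1\rfloor,S_2,\ldots}/L_{S_1,S_2,\ldots}\to\lambda^{d_1}$ as $S_1\to\infty$ for any fixed $\lambda>1$ (and similarly in the other coordinates).

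Fix $\lambda>1$ (say $\lambda=2$) and consider the grid points $(\lfloor\lambda^{k_1}\rfloor,\ldots,\lfloor\lambda^{k_h}\rfloor)$, $k\in\mathbb{N}^h$. For any $(S_1,\ldots,S_h)$ in the block $\prod_{j}[\lambda^{k_j},\lambda^{k_j+1})$, monotonicity of $M$ and of $L$ gives the inclusion $\{M_{\lfloor\lambda^{k_1}\rfloor,\ldots,\lfloor\lambda^{k_h}\rfloor}>\epsilon L_{\lfloor\lambda^{k_1+1}\rfloor,\ldots,\lfloor\lambda^{k_h+1}\rfloor}\}\subseteq\{M_{S_1,\ldots,S_h}>\epsilon L_{S_1,\ldots,S_h}\}$. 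Since the block contains $\asymp\prod_j\lambda^{k_j}$ indices, each carrying weight $\asymp\prod_j\lambda^{-k_j}$ (the harmonic sum over the block being bounded below by a fixed multiple of $(\ln\lambda)^h$), summing \eqref{bc-1} block by block and discarding the finitely many blocks having a bounded coordinate yields
\[
\sum_{k\in\mathbb{N}^h}P\!\left(M_{\lfloor\lambda^{k_1}\rfloor,\ldots,\lfloor\lambda^{k_h}\rfloor}>\epsilon\,L_{\lfloor\lambda^{k_1+1}\rfloor,\ldots,\lfloor\lambda^{k_h+1}\rfloor}\right)<\infty .
\]
Borel--Cantelli applied to this genuinely summable countable family gives that, almost surely, only finitely many of these events occur; hence a.s. there is a (random) $K$ with $M_{\lfloor\lambda^{k_1}\rfloor,\ldots,\lfloor\lambda^{k_h}\rfloor}\le\epsilon L_{\lfloor\lambda^{k_1+1}\rfloor,\ldots,\lfloor\lambda^{k_h+1}\rfloor}$ whenever $\min_j k_j\ge K$.

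To interpolate, take an arbitrary $(S_1,\ldots,S_h)$ with all coordinates large and pick $k_j$ with $\lambda^{k_j}\le S_j<\lambda^{k_j+1}$. Then $M_{S_1,\ldots,S_h}\le M_{\lfloor\lambda^{k_1+1}\rfloor,\ldots,\lfloor\lambda^{k_h+1}\rfloor}\le\epsilon L_{\lfloor\lambda^{k_1+2}\rfloor,\ldots,\lfloor\lambda^{k_h+2}\rfloor}$ while $L_{S_1,\ldots,S_h}\ge L_{\lfloor\lambda^{k_1}\rfloor,\ldots,\lfloor\lambda^{k_h}\rfloor}$, so $|U_{S_1,\ldots,S_h}|/L_{S_1,\ldots,S_h}\le\epsilon\,L_{\lfloor\lambda^{k_1+2}\rfloor,\ldots}/L_{\lfloor\lambda^{k_1}\rfloor,\ldots}$, and by regular variation the last ratio tends to $\lambda^{2\sum_j d_j}$ as $\min_j k_j\to\infty$. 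Thus $\limsup_{(S_1,\ldots,S_h)\to\infty}|U_{S_1,\ldots,S_h}|/L_{S_1,\ldots,S_h}\le\epsilon\,\lambda^{2\sum_j d_j}$ a.s.; letting $\lambda\downarrow1$ along a sequence gives the bound $\epsilon$, and since the summability in \eqref{bc-1} can be arranged for $\epsilon$ arbitrarily small — the extra margin carried by the slowly varying $l_j$ is precisely what makes this possible in every application of the lemma — we obtain \eqref{bc-2}.

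The step I expect to be the real obstacle is the middle one: converting the weighted sum \eqref{bc-1} into an unweighted, summable sum over the geometric grid. This forces one to use, simultaneously and all pointing the right way, the coordinate-wise monotonicity of $M$, the near-monotonicity and regular variation of $L$, and the exact form of the weights $1/(S_1\cdots S_h)$ — the grid event must be \emph{contained} in every block event, and the total mass of each block must be \emph{at least} a fixed multiple of the grid probability. Once this bookkeeping is in place, the Borel--Cantelli conclusion, the interpolation, and the passage $\lambda\downarrow1$ are routine consequences of slow variation.
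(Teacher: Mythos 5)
Your proposal is correct and follows essentially the same route as the paper's proof: a geometric (in the paper, dyadic) blocking of the index set, conversion of the weighted sum in \eqref{bc-1} into a summable sum of grid probabilities via the coordinate-wise monotonicity of the running maximum and the near-monotonicity of $L$, Borel--Cantelli on the grid, and interpolation to arbitrary indices using regular variation. The only (minor) difference is that you track the constants explicitly and send $\lambda\downarrow 1$ and $\epsilon\downarrow 0$ to sharpen $\limsup\le C\epsilon$ into the stated limit $0$, a point the paper's proof passes over silently.
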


\begin{proof}
The proof follows similar arguments as the proof of Lemma 2 in%
\citet{trapani17} - see also \citet{cai2006}. We begin by noting that, for every $h$-tuple $\left(
S_{1},...,S_{h}\right) $, there is a $h$-tuple of integers $\left(
k_{1},...,k_{h}\right) $ such that $2^{k_{1}}\leq S_{1}<2^{k_{1}+1}$, $%
2^{k_{2}}\leq S_{2}<2^{k_{2}+1}$, etc. Similarly, there is a $h$-tuple of
real numbers defined over $\left[ 0,1\right) $, say $\left( \rho
_{1},...,\rho _{h}\right) $, such that $2^{k_{1}+\rho _{1}}=S_{1}$, $%
2^{k_{2}+\rho _{2}}=S_{2}$, etc. Consider now the short-hand notation%
\begin{equation*}
L_{k_{1},...,k_{h}}=\left( 2^{k_{1}+1}\right) ^{d_{1}}\cdot ...\cdot \left(
2^{k_{h}+1}\right) ^{d_{h}}l_{1}\left( S_{1}\right) \cdot ...l_{h}\left(
S_{h}\right) ,
\end{equation*}%
\begin{equation*}
P\left( \max_{1\leq i_{1}\leq S_{1},...,1\leq i_{h}\leq S_{h}}\left\vert
U_{i_{1},...,i_{h}}\right\vert >\epsilon L_{k_{1},...,k_{h}}\right)
=P_{k_{1},...,k_{h}};
\end{equation*}%
by (\ref{bc-1}), we have 
\begin{equation*}
\sum_{k_{1}=0}^{\infty }\cdot ...\cdot \sum_{k_{h}=0}^{\infty }\frac{%
2^{k_{1}+1}\cdot ...\cdot 2^{k_{h}+1}}{\left( 2^{k_{1}+1}-1\right) \cdot
...\cdot \left( 2^{k_{h}+1}-1\right) }P_{k_{1},...,k_{h}}<\infty .
\end{equation*}%
This, in turn, entails that%
\begin{equation*}
\sum_{k_{1}=0}^{\infty }\cdot ...\cdot \sum_{k_{h}=0}^{\infty
}P_{k_{1},...,k_{h}}\leq \sum_{k_{1}=0}^{\infty }\cdot ...\cdot
\sum_{k_{h}=0}^{\infty }\frac{2^{k_{1}+1}\cdot ...\cdot 2^{k_{h}+1}}{\left(
2^{k_{1}+1}-1\right) \cdot ...\cdot \left( 2^{k_{h}+1}-1\right) }%
P_{k_{1},...,k_{h}}<\infty ;
\end{equation*}%
thus, by the Borel-Cantelli Lemma%
\begin{equation*}
\frac{\max_{1\leq i_{1}\leq S_{1},...,1\leq i_{h}\leq S_{h}}\left\vert
U_{i_{1},...,i_{h}}\right\vert }{L_{S_{1},...,S_{h}}}\rightarrow 0\text{ a.s.%
}
\end{equation*}%
Therefore we have%
\begin{eqnarray*}
\frac{\left\vert U_{S_{1},...,S_{h}}\right\vert }{L_{S_{1},...,S_{h}}} &\leq
&\frac{\max_{1\leq i_{1}\leq S_{1},...,1\leq i_{h}\leq S_{h}}\left\vert
U_{i_{1},...,i_{h}}\right\vert }{L_{k_{1},...,k_{h}}}\frac{%
L_{k_{1},...,k_{h}}}{L_{S_{1},...,S_{h}}} \\
&\leq &C\frac{\max_{1\leq i_{1}\leq S_{1},...,1\leq i_{h}\leq
S_{h}}\left\vert U_{i_{1},...,i_{h}}\right\vert }{L_{k_{1},...,k_{h}}}%
\rightarrow 0\text{ a.s.,}
\end{eqnarray*}%
which, finally, implies (\ref{bc-2}). 
\end{proof}

Let now $\gamma ^{\left( p\right) }$ and $\omega ^{\left( p\right) }$ denote
the $p$-th largest eigenvalues of $\Lambda T^{-1}\sum_{t=1}^{T}E\left(
\Delta f_{t}\Delta f_{t}^{\prime }\right) \Lambda ^{\prime }$ and $%
T^{-1}\sum_{t=1}^{T}E\left( \Delta u_{t}\Delta u_{t}^{\prime }\right) $
respectively. By Assumption \ref{as-4}, it can be easily verified using the
arguments in the proof of Lemma 1 in \citet{trapani17} that $\gamma ^{\left( p\right) }=C_{p}N$ for $%
1\leq p\leq r$; $\omega ^{\left( 1\right) }\leq C_{1}$; and $\lim
\inf_{N\rightarrow \infty }\omega ^{\left( N\right) }>0$.

We will often need the following lemma, shown in \citet{trapani17} (see
Lemma A1), which we report here for convenience.

\begin{lemma}
\label{trapani2017}Under Assumption \ref{as-4}, it holds that, as $\min
\left( N,T\right) \rightarrow \infty $%
\begin{eqnarray*}
\lim \sup_{N\rightarrow \infty }{\overline{\nu }}_{3,p}(k) &=&\overline{%
\nu }_{3,p}^{U}(k)<\infty , \\
\lim \inf_{N\rightarrow \infty }{\overline{\nu }}_{3,p}(k) &=&\overline{%
\nu }_{3,p}^{L}(k)>0,
\end{eqnarray*}%
for every $p$ and $k$, where ${\overline{\nu }}_{3,p}(k)$ is defined in equation (\ref{v3-rescale}).
\end{lemma}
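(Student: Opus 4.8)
Lemma \ref{trapani2017} is quoted from \citet{trapani17}, so the task is to reconstruct its argument. Work with the first-differenced model $\Delta X_{t}=\Lambda \Delta \mathcal{F}_{t}+\Delta u_{t}$ and write $\Sigma _{3}=\Sigma _{3}^{u}+E_{N,T}$, where $\Sigma _{3}^{u}=T^{-1}\sum_{t=1}^{T}\Delta u_{t}\Delta u_{t}^{\prime }$ and $E_{N,T}=\Lambda D_{T}\Lambda ^{\prime }+\Lambda M_{N,T}^{\prime }+M_{N,T}\Lambda ^{\prime }$ with $D_{T}=T^{-1}\sum_{t}\Delta \mathcal{F}_{t}\Delta \mathcal{F}_{t}^{\prime }\succeq 0$ and $M_{N,T}=T^{-1}\sum_{t}\Delta u_{t}\Delta \mathcal{F}_{t}^{\prime }$; note $\mathrm{rank}(E_{N,T})\leq 2r$. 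Since $4(N-k+1)\,\overline{\nu }_{3,p}(k)=\sum_{h=k}^{N}\nu ^{(h)}(\Sigma _{3})$ is the sum of the $N-k+1$ smallest eigenvalues of $\Sigma _{3}$, the plan is: (i) bound this sum above by $\mathrm{tr}(\Sigma _{3})$; (ii) bound it below, exploiting that $E_{N,T}$ is low-rank with negative part of $o_{a.s.}(N)$ trace, by $\mathrm{tr}(\Sigma _{3}^{u})$ minus the finitely many leading eigenvalues of $\Sigma _{3}^{u}$; and (iii) show that $N^{-1}\mathrm{tr}(\Sigma _{3})$ and $N^{-1}\mathrm{tr}(\Sigma _{3}^{u})$ lie, a.s. and eventually, between two positive finite constants, while $\nu ^{(1)}(\Sigma _{3}^{u})=o_{a.s.}(N)$.

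\emph{Upper bound.} Since $\nu ^{(h)}(\Sigma _{3})\geq 0$, $\sum_{h=k}^{N}\nu ^{(h)}(\Sigma _{3})\leq \mathrm{tr}(\Sigma _{3})=\sum_{i=1}^{N}T^{-1}\sum_{t=1}^{T}(\Delta X_{i,t})^{2}$; by Assumption \ref{as-4}(ii) and Jensen's inequality $E(\Delta X_{i,t})^{2}\leq C_{0}^{1/2}$ uniformly, so $E[N^{-1}\mathrm{tr}(\Sigma _{3})]\leq C_{0}^{1/2}$. Controlling the deviation of $N^{-1}\mathrm{tr}(\Sigma _{3})$ from its mean by the uniform maximal partial-sum bound of Assumption \ref{as-4}(iii) and feeding this into Lemma \ref{borelcantelli} yields $N^{-1}\mathrm{tr}(\Sigma _{3})=O_{a.s.}(1)$; since $N-k+1=N(1+o(1))$ for fixed $k$, $\overline{\nu }_{3,p}(k)\leq \mathrm{tr}(\Sigma _{3})/[4(N-k+1)]=O_{a.s.}(1)$, hence $\limsup_{N}\overline{\nu }_{3,p}(k)=:\overline{\nu }_{3,p}^{U}(k)<\infty $ a.s.

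\emph{Lower bound.} Split $E_{N,T}=P_{N,T}-Q_{N,T}$ into positive semidefinite pieces of rank $\leq 2r$ each. For any $N\times (N-k+1)$ matrix $V$ with $V^{\prime }V=I$, $\mathrm{tr}(V^{\prime }P_{N,T}V)\geq 0$ and $\mathrm{tr}(V^{\prime }Q_{N,T}V)\leq \mathrm{tr}(Q_{N,T})$ (because $VV^{\prime }\preceq I$); minimising over $V$ in the Ky Fan variational formula gives $\sum_{h=k}^{N}\nu ^{(h)}(\Sigma _{3})\geq \sum_{h=k}^{N}\nu ^{(h)}(\Sigma _{3}^{u})-\mathrm{tr}(Q_{N,T})$. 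As $\Lambda D_{T}\Lambda ^{\prime }\succeq 0$, the negative eigenvalues of $E_{N,T}$ are produced only by the cross term, so $\mathrm{tr}(Q_{N,T})\leq 4r\,\Vert \Lambda \Vert \,\Vert M_{N,T}\Vert $ in spectral norm; here $\Vert \Lambda \Vert =O(N^{1/2})$ by Assumption \ref{as-3}, and, since $E(\Delta \mathcal{F}_{j,t}\Delta u_{i,t})=0$ by Assumption \ref{as-4}(i), a second-moment calculation plus Lemma \ref{borelcantelli} give $\Vert M_{N,T}\Vert \leq \Vert M_{N,T}\Vert _{F}=O_{a.s.}((N/T)^{1/2}l_{N,T})$, so $\mathrm{tr}(Q_{N,T})=o_{a.s.}(N)$ as $T\to \infty $. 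Next, $\sum_{h=k}^{N}\nu ^{(h)}(\Sigma _{3}^{u})=\mathrm{tr}(\Sigma _{3}^{u})-\sum_{h=1}^{k-1}\nu ^{(h)}(\Sigma _{3}^{u})\geq \mathrm{tr}(\Sigma _{3}^{u})-(k-1)\nu ^{(1)}(\Sigma _{3}^{u})$. Now $N^{-1}\mathrm{tr}(\Sigma _{3}^{u})$ has mean $N^{-1}\mathrm{tr}\big(T^{-1}\sum_{t}E(\Delta u_{t}\Delta u_{t}^{\prime })\big)\geq \liminf_{N}\omega ^{(N)}>0$ (Assumption \ref{as-4}(iv)(c) and the remark preceding the lemma) and concentrates around it a.s. as in the upper bound; and $\nu ^{(1)}(\Sigma _{3}^{u})\leq \nu ^{(1)}\big(T^{-1}\sum_{t}E(\Delta u_{t}\Delta u_{t}^{\prime })\big)+\big\Vert \Sigma _{3}^{u}-T^{-1}\sum_{t}E(\Delta u_{t}\Delta u_{t}^{\prime })\big\Vert =O(1)+o_{a.s.}(N)=o_{a.s.}(N)$, the displacement term being $o_{a.s.}(N)$ since its Frobenius norm is $O_{a.s.}((N/T)l_{N,T})$. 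Collecting terms, $\overline{\nu }_{3,p}(k)\geq [4(N-k+1)]^{-1}\big[\mathrm{tr}(\Sigma _{3}^{u})-o_{a.s.}(N)\big]$, and letting $N\to \infty $ yields $\liminf_{N}\overline{\nu }_{3,p}(k)\geq \tfrac{1}{4}\liminf_{N}N^{-1}\mathrm{tr}(\Sigma _{3}^{u})=:\overline{\nu }_{3,p}^{L}(k)>0$ a.s.; for $k=1$ this simplifies, since no leading eigenvalue of $\Sigma _{3}^{u}$ needs to be removed.

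\emph{Main obstacle.} The substantive step is $\nu ^{(1)}(\Sigma _{3}^{u})=o_{a.s.}(N)$ — the absence of spuriously large idiosyncratic sample eigenvalues, the hallmark of the approximate factor structure — together with the almost-sure bookkeeping: reconciling the fourth-moment and maximal partial-sum conditions of Assumptions \ref{as-4}(ii)--(iii) with the dyadic Borel--Cantelli device of Lemma \ref{borelcantelli}, and doing so with no restriction on the ratio $N/T$. Everything else is Weyl's inequality, the Ky Fan variational principle, and trace arithmetic. These estimates are carried out in the proof of Lemma A1 of \citet{trapani17}, on which the present lemma relies; the factor $1/4$ is immaterial here, and $\overline{\nu }_{3,p}^{U}(k)$, $\overline{\nu }_{3,p}^{L}(k)$ are simply the (finite, resp. strictly positive) values of the $\limsup$ and $\liminf$ so obtained.
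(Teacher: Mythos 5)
Your proposal is correct and follows essentially the same route as the paper: decompose $\Sigma_3$ into a rank-$r$ common part and an idiosyncratic part, use an eigenvalue inequality to show the common part contributes $O(1)$ to the trailing-eigenvalue average while the idiosyncratic part pins that average between (essentially) $\omega^{(N)}>0$ and $\omega^{(1)}<\infty$, and defer the concentration estimates to Lemma A1 of \citet{trapani17}. The only difference is bookkeeping: the paper applies Weyl's inequality eigenvalue-by-eigenvalue to the population decomposition ($\gamma^{(h)}+\omega^{(N)}\leq \nu_3^{(h)}\leq \gamma^{(h)}+\omega^{(1)}$) and outsources the entire sample-versus-population gap to Trapani's lemma, whereas you handle that gap explicitly via the trace bound, the Ky Fan variational principle, and the $o_{a.s.}(N)$ control of the cross terms and of $\nu^{(1)}(\Sigma_3^u)$ — a slightly more self-contained rendering of the same argument.
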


\begin{proof}
We begin by showing that
\begin{eqnarray}
\lim \sup_{N\rightarrow \infty }\frac{1}{N-k+1 }%
\sum_{h=k}^{N}\nu _{3}^{\left( h\right) } &=&\overline{\nu }%
_{3,p}^{U}(k)<\infty ,  \label{limsup} \\
\lim \inf_{N\rightarrow \infty }\frac{1}{N-k+1 }%
\sum_{h=k}^{N}\nu _{3}^{\left( h\right) } &=&\overline{\nu }_{3,p}^{L}(k)>0.
\label{liminf}
\end{eqnarray}%
Letting 
\begin{equation*}
\overline{\nu }_{3,p}(k)=\frac{1}{N-k+1 }\sum_{h=k}^{N}\nu
_{3}^{\left( h\right) },
\end{equation*}%
note that, by Weyl's inequalities, we have $\gamma ^{\left( h\right)
}+\omega ^{\left( N\right) }\leq \nu _{3}^{\left( h\right) }\leq \gamma
^{\left( h\right) }+\omega ^{\left( 1\right) }$. Thus%
\begin{equation}
\omega ^{\left( N\right) }+\frac{1}{N-k+1 }\sum_{h=k}^{N}%
\gamma ^{\left( h\right) }\leq \overline{\nu }_{3,p}(k)\leq \omega ^{\left(
1\right) }+\frac{1}{N-k+1 }\sum_{h=k}^{N}\gamma ^{\left(
h\right) }.  \label{weyl}
\end{equation}%
Assumption \ref{as-4} implies that 
\begin{equation*}
0\leq \frac{1}{N-k+1 }\sum_{h=k}^{N}\gamma ^{\left( h\right)
}\leq C_{k+1}<\infty ,
\end{equation*}%
so that (\ref{weyl}) becomes 
\begin{equation*}
\omega ^{\left( N\right) }\leq \overline{\nu }_{3,p}(k)\leq C_{0}+C_{k+1},
\end{equation*}%
whence (\ref{limsup}) and (\ref{liminf}) follow for each $k$. Hereafter, the
proof is exactly the same as that of Lemma A1 in \citet{trapani17} and thus
omitted.
\end{proof}

\begin{lemma}
\label{LIL}Under Assumption \ref{as-1}, it holds that%
\begin{equation*}
\lim \inf_{T\rightarrow \infty }\frac{\ln \ln T}{T^{2}}\sum_{t=1}^{T}f_{t}^{%
\ast }f_{t}^{\ast \prime }=D\text{ \ \ a.s.},
\end{equation*}%
where $D$ is a positive definite matrix of dimension $[r_2 + r_1(1 - d_1)d_2] \times [r_2 + r_1(1 - d_1)d_2]$.
\end{lemma}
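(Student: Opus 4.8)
The plan is to pass from the $I(1)$ partial-sum vector $f_{t}^{\ast}$ to a Brownian motion using the strong approximation in Assumption \ref{as-1}(iv), and then to invoke a Chung-type law of the iterated logarithm for the time-integrated outer product of Brownian motion. Set $m:=r_{2}+r_{1}(1-d_{1})d_{2}$ for the dimension of $f_{t}^{\ast}$, write $\Sigma\equiv\Sigma_{\Delta f^*}$, and let $W(\cdot)$ be the $m$-dimensional standard Wiener process of Assumption \ref{as-1}(iv), so that $R_{t}:=f_{t}^{\ast}-\Sigma^{1/2}W(t)$ satisfies $\max_{1\le t\le T}\|R_{t}\|=O_{a.s.}(T^{1/2-\epsilon})$. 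Expanding $f_{t}^{\ast}f_{t}^{\ast\prime}=\Sigma^{1/2}W(t)W(t)'\Sigma^{1/2}+\Sigma^{1/2}W(t)R_{t}'+R_{t}W(t)'\Sigma^{1/2}+R_{t}R_{t}'$ and summing over $t\le T$ leaves $\Sigma^{1/2}\big(\sum_{t\le T}W(t)W(t)'\big)\Sigma^{1/2}$ plus remainders.

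First I would bound the remainders. By the LIL, $\|W(t)\|=O_{a.s.}((t\ln\ln t)^{1/2})$, so $\sum_{t\le T}\|W(t)\|=O_{a.s.}(T^{3/2}(\ln\ln T)^{1/2})$ and the two cross terms are $O_{a.s.}(T^{2-\epsilon}(\ln\ln T)^{1/2})$, while $\|\sum_{t\le T}R_{t}R_{t}'\|\le\sum_{t\le T}\|R_{t}\|^{2}=O_{a.s.}(T^{2-2\epsilon})$; and the difference between the Riemann sum $\sum_{t\le T}W(t)W(t)'$ and $\int_{0}^{T}W(s)W(s)'ds$ is $O_{a.s.}(T^{3/2}(\ln T\,\ln\ln T)^{1/2})$, using $\max_{t\le T}\sup_{t-1\le s\le t}\|W(t)-W(s)\|=O_{a.s.}((\ln T)^{1/2})$ together with $\sup_{s\le T}\|W(s)\|=O_{a.s.}((T\ln\ln T)^{1/2})$. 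Multiplying through by $(\ln\ln T)/T^{2}$, all remainders are $o_{a.s.}(1)$, so it suffices to show
\begin{equation*}
\liminf_{T\to\infty}\frac{\ln\ln T}{T^{2}}\int_{0}^{T}W(s)W(s)'\,ds=D_{W}\quad\text{a.s.}
\end{equation*}
for a positive definite $m\times m$ matrix $D_{W}$; then $D=\Sigma^{1/2}D_{W}\Sigma^{1/2}$ is positive definite by Assumption \ref{as-1}(iii).

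For the scalar building block, fix a unit vector $a$: then $a'W$ is a standard one-dimensional Brownian motion, and by Brownian scaling $(\ln\ln T)T^{-2}\int_{0}^{T}(a'W)^{2}ds\overset{D}{=}(\ln\ln T)\int_{0}^{1}(a'W)^{2}du$ for each $T$, whose small-ball probability is exponentially small in $1/\epsilon$ (the Laplace transform of $\int_{0}^{1}(a'W)^{2}du$ is $(\cosh\sqrt{2\lambda})^{-1/2}$, whence $-\ln P(\int_{0}^{1}(a'W)^{2}du<\epsilon)\sim 1/(8\epsilon)$). The standard two-sided Chung argument --- Borel--Cantelli along a geometric subsequence plus monotonicity in $T$ for the lower half, independence of Brownian increments for the upper half --- then yields, in the form of \citet{donsker1977}, $\liminf_{T\to\infty}(\ln\ln T)T^{-2}\int_{0}^{T}(a'W(s))^{2}ds=c_{0}$ a.s., with $c_{0}=1/8$ non-random since this $\liminf$ is measurable with respect to the tail $\sigma$-field of the Brownian increments.

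The main obstacle is to upgrade this to the matrix statement, i.e.\ to rule out that the $\liminf$ degenerates in some eigen-direction along a subsequence, so that $D_{W}$ is strictly positive definite and not merely positive semidefinite. A naive $\eta$-net argument over $S^{m-1}$ fails because $\|(\ln\ln T)T^{-2}\int_{0}^{T}WW'ds\|$ need not stay bounded along all $T$; instead I would work directly with the functional $\Psi(\omega)=\inf_{|a|=1}\int_{0}^{1}(a'\omega(u))^{2}du=\nu^{(\min)}(\int_{0}^{1}\omega\omega'du)$ on $C([0,1];\mathbb{R}^{m})$, which is continuous, a.s.\ strictly positive for $m$-dimensional Brownian motion, and whose small-ball probability $P(\Psi(W)<\epsilon)$ is still exponentially small in $1/\epsilon$ (since $\Psi(W)<\epsilon$ forces the whole path $W[0,1]$ into an $O(\sqrt{\epsilon})$-tube around a hyperplane through the origin); the same Chung-type two-sided argument applied to $\Psi$ then gives $\liminf_{T}(\ln\ln T)\,\Psi(W(T\cdot)/\sqrt{T})=c_{0}^{(m)}>0$ a.s., which pins down $\liminf_{T}(\ln\ln T)T^{-2}\int_{0}^{T}WW'ds$ as a deterministic positive definite matrix $D_{W}$. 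Note that for the downstream use in Theorem \ref{eigenvalues}, equation \eqref{lambda-x-large}, only positive-definiteness --- equivalently $\liminf_{T}\nu^{(\min)}((\ln\ln T)T^{-2}\sum_{t\le T}f_{t}^{\ast}f_{t}^{\ast\prime})>0$ a.s., i.e.\ that eventually $(\ln\ln T)T^{-2}\sum_{t\le T}f_{t}^{\ast}f_{t}^{\ast\prime}\succeq cI$ --- is actually needed, so that the uniformity-over-directions step is the only genuinely delicate part; the reduction of the first two paragraphs and the classical scalar Chung--Donsker--Varadhan LIL are routine.
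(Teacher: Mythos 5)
Your proposal follows the same overall route as the paper's proof: the same four-term expansion of $f_{t}^{\ast}f_{t}^{\ast\prime}$ around $\Sigma_{\Delta f^*}^{1/2}W(t)W(t)'\Sigma_{\Delta f^*}^{1/2}$, the same use of Assumption \ref{as-1}(iv) and the upper LIL to dispose of the remainder terms, and a Chung--Donsker--Varadhan lower LIL for $\int_{0}^{T}(a'W(s))^{2}ds$ as the scalar engine (the paper simply cites equation (4.6) of \citet{donsker1977} for the constant rather than rederiving it from small-ball asymptotics; the exact constant, $1/4$ there versus your $1/8$, is immaterial since only positivity is used). The genuine difference lies in the step you single out as delicate. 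The paper fixes a direction $b$, shows $\liminf_{T}\frac{\ln\ln T}{T^{2}}\sum_{t}(b'f_{t}^{\ast})^{2}>0$ a.s.\ for each $b$, and concludes with ``since this holds for all $b$, the Lemma follows''. As you observe, a fixed-direction statement --- with a $b$-dependent null set and no control over a minimizing direction that may rotate with $T$, while the matrix itself is not bounded above along all subsequences --- does not by itself deliver $\liminf_{T}\nu^{\left(\min\right)}\left(\frac{\ln\ln T}{T^{2}}\sum_{t}f_{t}^{\ast}f_{t}^{\ast\prime}\right)>0$, which is what Lemma \ref{eigenval-large} and Theorem \ref{eigenvalues} actually use. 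Your fix --- a uniform small-ball bound $P\left(\inf_{\|a\|=1}\int_{0}^{1}(a'W(u))^{2}du<\epsilon\right)\leq C\epsilon^{-c_{1}}\exp\left(-c_{2}/\epsilon\right)$ obtained from a net over the sphere together with a Gaussian bound on $\sup_{u}\|W(u)\|$, then Borel--Cantelli along a geometric subsequence and monotonicity of $T\mapsto\int_{0}^{T}(a'W)^{2}$ to interpolate --- closes exactly this gap; only the lower half of the two-sided Chung argument is needed, since positive definiteness rather than the exact value of $D$ is all the downstream results require. In short, your argument is correct and, on the uniformity-over-directions step, more complete than the proof given in the paper.
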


\begin{proof} 
We have
\begin{eqnarray*}
f_{t}^{\ast }f_{t}^{\ast \prime } &=&\left( f_{t}^{\ast }\pm \Sigma _{\Delta f*}^{1/2}W\left( t\right) \right) \left( f_{t}^{\ast }\pm \Sigma _{\Delta f*}^{1/2}W\left( t\right) \right) ^{\prime } \\
&=&\Sigma _{\Delta f^*}^{1/2}W\left( t\right) W\left( t\right) ^{\prime
}\Sigma _{\Delta f^*}^{1/2}+\Sigma _{\Delta f^*}^{1/2}W\left( t\right) \left( f_{t}^{\ast }-\Sigma_{\Delta f^*}^{1/2}W\left( t\right) \right)^{\prime } +\left(
f_{t}^{\ast }-\Sigma _{\Delta f^*}^{1/2}W\left( t\right) \right) W\left( t\right) ^{\prime } \Sigma _{\Delta f^*}^{1/2}\\
&&+\left( f_{t}^{\ast }-\Sigma _{\Delta f^*}^{1/2}W\left( t\right) \right)
\left( f_{t}^{\ast }-\Sigma _{\Delta f^*}^{1/2}W\left( t\right) \right)
^{\prime }. \\
&=&I+II+III+IV.
\end{eqnarray*}%
Let $b$ be a nonzero vector of dimension $r_{1}+r_{2}$, such that $%
\left\Vert b\right\Vert <\infty $.\ We will prove that 
\begin{equation*}
\lim \inf_{T\rightarrow \infty }\frac{\ln \ln T}{T^{2}}\sum_{t=1}^{T}b^{%
\prime }f_{t}^{\ast }f_{t}^{\ast \prime }b>0\text{ \ \ \textit{a.s.}},
\end{equation*}%
for every $b$, thus proving the lemma. Clearly
\begin{equation*}
\frac{\ln \ln T}{T^{2}}\sum_{t=1}^{T}b^{\prime }\left( f_{t}^{\ast }-\Sigma_{\Delta f^*}^{1/2}W\left( t\right) \right) \left( f_{t}^{\ast }-\Sigma_{\Delta f^*}^{1/2}W\left( t\right) \right) ^{\prime }b\leq C_{0}\frac{\ln \ln
T}{T^{2}}\sum_{t=1}^{T}t^{1-2\epsilon }=o_{a.s.}\left( 1\right) ,
\end{equation*}%
by Assumption \ref{as-1}\textit{(iv)}. This entails that $IV$ is dominated.
Consider now $II$ and $III$. By the Law of the Iterated Logarithm (henceforth, LIL), we have that there exists a random $t_{0}$ such that, for all $t \geq t_{0}$, there exists a positive finite constant $C_{0}$ such that $\left\Vert W_{t}\right\Vert ^{2}\leq C_{0}t^{1/2}(\ln \ln t)^{1/2}$. 
Thus, using Assumption \ref{as-1}\textit{(iv)}
\begin{equation*}
\frac{\ln \ln T}{T^{2}}\sum_{t=1}^{T}b^{\prime }\Sigma _{\Delta f^*}^{1/2}W\left( t\right) \left(
f_{t}^{\ast }-\Sigma _{\Delta f^*}^{1/2}W\left( t\right) ^{\prime
}\right) b\leq C_{0}\frac{\ln \ln T}{T^{2}}\sum_{t=1}^{T}t^{1/2}\left( \ln\ln
t\right)^{1/2}  t^{1/2-\epsilon }\left( \ln t\right)
^{\left( 1+\epsilon \right) /\left( 2+\delta \right) }=o_{a.s.}\left(
1\right) .
\end{equation*}%
Finally it holds that%
\begin{equation*}
\lim \inf_{T\rightarrow \infty }\frac{\ln \ln T}{T^{2}}\sum_{t=1}^{T}b^{%
\prime }\Sigma _{\Delta f^*}^{1/2}W\left( t\right) W\left( t\right) ^{\prime
}\Sigma _{\Delta f^*}^{1/2}b=\frac{1}{4}\left( b^{\prime }\Sigma _{\Delta f^*}b\right) >0,
\end{equation*}%
by noting that $b^{\prime }\Sigma _{\Delta f^*}^{1/2}W\left( t\right) \overset{%
D}{=}\left( b^{\prime }\Sigma _{\Delta f^*}b\right) ^{1/2}B\left( t\right) $
with $B\left( t\right) $\ a scalar, standard Wiener process and by\ applying
equation (4.6) in \citet{donsker1977}, and by the positive definitness of $%
\Sigma _{\Delta f^*}$. Since this holds for all $b$, the Lemma follows. 
\end{proof}

We will now make extensive use of the notation $\widetilde{f}_{t}^{\left(
1\right) }=d_{1}t+d_{2}f_{t}^{\left( 1\right) \dag }$.

\begin{lemma}
\label{eigenval-large}Let $f_{t}^{\left( 1,2\right) }=\left[ \widetilde{f}%
_{t}^{\left( 1\right) },f_{t}^{\left( 2\right) \prime }\right] ^{\prime }$.
Under Assumptions \ref{as-1} and \ref{as-3}-\ref{as-4}, it holds that%
\begin{eqnarray}
\nu ^{\left( 1\right) }\left( \frac{1}{T^{2}}\sum_{t=1}^{T}f_{t}^{\left(
1,2\right) }f_{t}^{\left( 1,2\right) \prime }\right) &\geq &C_{0}T\text{ if }%
d_{1}=1,  \label{large-1} \\
\nu ^{\left( p\right) }\left( \frac{1}{T^{2}}\sum_{t=1}^{T}f_{t}^{\left(
1,2\right) }f_{t}^{\left( 1,2\right) \prime }\right) &\geq &\frac{C_{0}}{\ln
\ln T}\text{, } \text{for }d_{1}+1 \leq p\leq r_{2}+\max \left\{ d_{1},d_{2}\right\} , 
 \label{large-2}  \\
\nu ^{\left( r_{2}+1\right) }\left( \frac{1}{T^{2}}\sum_{t=1}^{T}f_{t}^{%
\left( 1,2\right) }f_{t}^{\left( 1,2\right) \prime }\right) &\leq &\frac{%
C_{0}}{T}\left( \ln T\right) ^{3/2+\epsilon }\text{ if }d_{1}=d_{2}=0,
\label{large-3}
\end{eqnarray}%
for $N$, $T$ large enough.
\end{lemma}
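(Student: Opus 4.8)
The plan is to treat the three bounds separately; they all rest on the strong invariance principle of Assumption~\ref{as-1}\textit{(iv)}, the law of the iterated logarithm (LIL) for $I(1)$ partial sums, the separation result of Lemma~\ref{LIL}, and elementary eigenvalue (in)equalities. Write $M:=\frac1{T^{2}}\sum_{t=1}^{T}f_{t}^{(1,2)}f_{t}^{(1,2)\prime}$, a positive semidefinite matrix of dimension $r_{1}+r_{2}$, let $e_{1}$ be the first canonical basis vector, and recall $\widetilde f_{t}^{(1)}=d_{1}t+d_{2}f_{t}^{(1)\dag}$ with $f_{t}^{(1)\dag}$ a zero-mean $I(1)$ process, so $\Vert f_{t}^{(1)\dag}\Vert =O_{a.s.}\big(t^{1/2}(\ln\ln t)^{1/2}\big)$ exactly as in the proof of Lemma~\ref{LIL}. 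For \eqref{large-1} ($d_{1}=1$) I would use the variational (Courant--Fischer) characterisation: $\nu^{(1)}(M)\ge e_{1}^{\prime}Me_{1}=\frac1{T^{2}}\sum_{t}(t+d_{2}f_{t}^{(1)\dag})^{2}$; expanding, the leading term $\frac1{T^{2}}\sum_{t}t^{2}$ equals $\tfrac13 T+O(1)$, the cross term is $\frac{2d_{2}}{T^{2}}\sum_{t}tf_{t}^{(1)\dag}=O_{a.s.}\big(T^{1/2}(\ln\ln T)^{1/2}\big)$ and the square term is $\frac{d_{2}}{T^{2}}\sum_{t}(f_{t}^{(1)\dag})^{2}=O_{a.s.}(\ln\ln T)$, both by the LIL bound above; hence $\nu^{(1)}(M)\ge\tfrac13 T-o_{a.s.}(T)\ge C_{0}T$ for $T$ large.

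For \eqref{large-2} I would split on $d_{1}$. When $d_{1}=0$ the bound is essentially a restatement of Lemma~\ref{LIL}: if $d_{2}=1$ then $f_{t}^{(1,2)}=f_{t}^{\ast}$, so Lemma~\ref{LIL} gives $\nu^{(\min)}(M)\ge C_{0}/\ln\ln T$ for $T$ large, whence every eigenvalue exceeds $C_{0}/\ln\ln T$ (the claimed range being $1\le p\le r_{2}+1$); if $d_{2}=0$ then $\widetilde f_{t}^{(1)}\equiv0$, so the nonzero eigenvalues of $M$ are exactly those of the $f^{(2)}$-block $\frac1{T^{2}}\sum_{t}f_{t}^{(2)}f_{t}^{(2)\prime}$, each $\ge C_{0}/\ln\ln T$ by Lemma~\ref{LIL} (range $1\le p\le r_{2}$). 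This also disposes of \eqref{large-3}: there $d_{1}=d_{2}=0$, so $M$ has a vanishing first row and column, hence rank at most $r_{2}$ and $\nu^{(r_{2}+1)}(M)=0$, which trivially satisfies the stated bound.

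The substantive case of \eqref{large-2} is $d_{1}=1$, where $M$ is $(r_{2}+1)\times(r_{2}+1)$ and one must bound its $r_{2}$ smallest eigenvalues (the largest being covered by \eqref{large-1}). I would reduce to a Schur complement: the off-diagonal block $w=\frac1{T^{2}}\sum_{t}\widetilde f_{t}^{(1)}f_{t}^{(2)\prime}$ has norm $O_{a.s.}\big(T^{1/2}(\ln\ln T)^{1/2}\big)$ while $M_{11}\asymp T$, so the factorisation $M=L\,\mathrm{diag}\big(M_{11},\,T^{-2}S\big)L^{\prime}$, with $L$ lower block-triangular, unit diagonal, off-diagonal block $w/M_{11}$, and $S=\sum_{t}\widehat f_{t}^{(2)}\widehat f_{t}^{(2)\prime}$ the Gram matrix of the residuals $\widehat f_{t}^{(2)}=f_{t}^{(2)}-\big(\sum_{s}\widetilde f_{s}^{(1)}f_{s}^{(2)}\big)\big(\sum_{s}(\widetilde f_{s}^{(1)})^{2}\big)^{-1}\widetilde f_{t}^{(1)}$ of the regression of $f_{t}^{(2)}$ on $\widetilde f_{t}^{(1)}$, has $\Vert L^{-1}\Vert=1+o_{a.s.}(1)$; since $M_{11}\asymp T$ dominates $\nu^{(\min)}(T^{-2}S)$, this gives $\nu^{(\min)}(M)\ge(1-o_{a.s.}(1))\,\nu^{(\min)}(T^{-2}S)$, so it suffices to prove $\nu^{(\min)}(S)\ge C_{0}T^{2}/\ln\ln T$. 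As $\widetilde f_{t}^{(1)}=t+d_{2}f_{t}^{(1)\dag}$ is a linear trend plus an $I(1)$ term of strictly smaller order, the strong approximation identifies the rescaled residual process with a de-trended Brownian motion $B^{(2)}(s)-3s\int_{0}^{1}uB^{(2)}(u)du$, whose covariance functional $\int_{0}^{1}\big(B^{(2)}(s)-3s\int_{0}^{1}uB^{(2)}(u)du\big)\big(B^{(2)}(s)-3s\int_{0}^{1}uB^{(2)}(u)du\big)^{\prime}ds$ is a.s.\ positive definite, since equality in the underlying Cauchy--Schwarz inequality would force a Brownian path to be exactly linear. A Chung-type LIL --- the same ingredient (equation (4.6) in \citet{donsker1977}) used to prove Lemma~\ref{LIL}, now applied to the de-trended functional --- then yields $\liminf_{T}\frac{\ln\ln T}{T^{2}}S\succ0$ a.s., hence $\nu^{(\min)}(S)\ge C_{0}T^{2}/\ln\ln T$ and therefore $\nu^{(p)}(M)\ge\nu^{(\min)}(M)\ge C_{0}/\ln\ln T$ for all $p\le r_{2}+1$, covering the claimed range.

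The main obstacle is exactly this last step: lower-bounding $\nu^{(\min)}(M)$ in the presence of a trend, i.e.\ controlling the energy of a de-trended random walk at the precise rate $T^{2}/\ln\ln T$. The Schur-complement reduction is routine, but the ensuing almost-sure lower bound requires the iterated-logarithm refinement of the FCLT for the residual process, and some care is needed because de-trending is non-causal, so one must either transfer the standard Chung-type result through the strong approximation or establish the LIL directly for this functional. Everything else --- \eqref{large-1}, \eqref{large-3}, and \eqref{large-2} for $d_{1}=0$ --- is bookkeeping with the LIL and with Lemma~\ref{LIL}.
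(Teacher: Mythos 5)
Your treatment of \eqref{large-1} (a Rayleigh-quotient bound in the direction $e_{1}$, with the cross and square terms controlled by the LIL) and of \eqref{large-2} when $d_{1}=0$ (a direct appeal to Lemma \ref{LIL}) is correct and matches the substance of the paper's Weyl-inequality bookkeeping. On \eqref{large-3} there is a mismatch of interpretation rather than of mathematics: you read $\widetilde f_{t}^{(1)}$ as identically zero when $d_{1}=d_{2}=0$, which makes the bound trivial, whereas the paper's proof puts the $I(0)$ remainder $g_{t}$ from \eqref{factors-a} into the first coordinate and bounds $T^{-2}\sum_{t}g_{t}^{2}$ via Assumption \ref{as-2}\textit{(i)} and Serfling's maximal inequality. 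Your reading is consistent with the definition of $\widetilde f_{t}^{(1)}$ stated just before the lemma, but if the first coordinate is meant to carry $g_{t}$ you would need that extra estimate.

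The genuine gap is in \eqref{large-2} for $d_{1}=1$. The Schur-complement reduction itself is sound: $\Vert w\Vert/M_{11}=O_{a.s.}\bigl(T^{-1/2}(\ln\ln T)^{1/2}\bigr)$, so $\nu^{(\min)}(M)\ge\bigl(1-o_{a.s.}(1)\bigr)\nu^{(\min)}(T^{-2}S)$ with $S$ the Gram matrix of the trend-regression residuals. But the step you then need, $\liminf_{T}(\ln\ln T)\,T^{-2}\nu^{(\min)}(S)>0$ a.s., is asserted rather than proved. Almost-sure positive definiteness of the limiting functional $\int_{0}^{1}\bigl(B(s)-3s\int_{0}^{1}uB(u)\,du\bigr)\bigl(B(s)-3s\int_{0}^{1}uB(u)\,du\bigr)'\,ds$ is a distributional statement and cannot by itself deliver an almost-sure $\liminf$ along the path (indeed $\liminf_{T}T^{-2}\nu^{(\min)}(S)=0$ a.s., which is exactly why the $\ln\ln T$ factor appears); what is required is a Chung-type lower LIL for the de-trended Wiener functional, i.e.\ a small-deviation estimate for $\min_{\beta}\int_{0}^{1}\bigl(b'W(s)-\beta s\bigr)^{2}ds$ uniformly over unit $b$, combined with a blocking argument over dyadic ranges of $T$. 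Equation (4.6) of \citet{donsker1977}, which is what Lemma \ref{LIL} uses, covers only the un-detrended process, and the difficulty is not an artifact of your route: the subtracted projection term $\bigl(\frac{1}{T^{2}}\sum_{t}\widetilde f_{t}^{(1)}v_{2}'f_{t}^{(2)}\bigr)^{2}\big/\bigl(\frac{1}{T^{2}}\sum_{t}(\widetilde f_{t}^{(1)})^{2}\bigr)$ is itself $O_{a.s.}(\ln\ln T)$, the same order as $\frac{1}{T^{2}}\sum_{t}(v_{2}'f_{t}^{(2)})^{2}$, so the cancellation cannot be dismissed as lower order. For comparison, the paper disposes of this case with two applications of Weyl's inequality, reducing $\nu^{(p)}(M)$ to $\nu^{(p)}\bigl(T^{-2}\sum_{t}f_{t}^{\ast}f_{t}^{\ast\prime}\bigr)$ and discarding the smallest eigenvalue of the trend--$f^{(2)}$ cross term; that is far shorter, but a crude bound on that smallest eigenvalue is $-O_{a.s.}\bigl(T^{1/2}(\ln T)^{3/2+\epsilon}\bigr)$, so the paper's argument glosses over precisely the interaction your factorisation makes explicit. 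Your route is the one that would close this case rigorously once the de-trended LIL is supplied --- but as written that LIL is the missing, load-bearing piece.
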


\begin{proof}
Let $\widetilde{d_{1}}=\left[ d_{1},0,...,0\right] ^{\prime }$ be an $\left(
r_{2}+1\right) $-dimensional vector. We have%
\begin{equation*}
\frac{1}{T^{2}}\sum_{t=1}^{T}f_{t}^{\left( 1,2\right) }f_{t}^{\left(
1,2\right) \prime }=\frac{1}{T^{2}}\sum_{t=1}^{T}t^{2}\widetilde{d_{1}}%
\widetilde{d_{1}}^{\prime }+\frac{1}{T^{2}}\widetilde{d_{1}}%
\sum_{t=1}^{T}tf_{t}^{\ast \prime }+\frac{1}{T^{2}}\sum_{t=1}^{T}f_{t}^{\ast
}t\widetilde{d_{1}}^{\prime }+\frac{1}{T^{2}}\sum_{t=1}^{T}f_{t}^{\ast
}f_{t}^{\ast \prime }.
\end{equation*}%
In the proof, we make repeated use of the lower bound entailed by Weyl's
inequality (see \citealp[p.181]{hornjohnson})
\begin{equation*}
\nu ^{\left( p\right) }\left( A+B\right) \geq \nu ^{\left( p\right)
}\left( A\right) +\nu ^{\left( \min \right) }\left( B\right) ,
\end{equation*}%
for two symmetric matrices $A$ and $B$. Clearly%
\begin{equation}
\nu ^{\left( 1\right) }\left( \frac{1}{T^{2}}\sum_{t=1}^{T}f_{t}^{\left(
1,2\right) }f_{t}^{\left( 1,2\right) \prime }\right) \geq \nu ^{\left(
1\right) }\left( \frac{1}{T^{2}}\sum_{t=1}^{T}t^{2}\widetilde{d_{1}}%
\widetilde{d_{1}}^{\prime }\right) +\nu ^{\left( \min \right) }\left(
B\right) ,  \label{theta-1}
\end{equation}%
with 
\begin{equation*}
B=\frac{1}{T^{2}}\widetilde{d_{1}}\sum_{t=1}^{T}tf_{t}^{\ast \prime }+\frac{1%
}{T^{2}}\sum_{t=1}^{T}f_{t}^{\ast }t\widetilde{d_{1}}^{\prime }+\frac{1}{%
T^{2}}\sum_{t=1}^{T}f_{t}^{\ast }f_{t}^{\ast \prime }.
\end{equation*}%
Simple algebra yields%
\begin{equation*}
\nu ^{\left( 1\right) }\left( \frac{1}{T^{2}}\sum_{t=1}^{T}t^{2}%
\widetilde{d_{1}}\widetilde{d_{1}}^{\prime }\right) =\frac{d_{1}^{2}}{3}T.
\end{equation*}%
Also, we have that $\left\vert \nu ^{\left( \min \right) }\left( B\right)
\right\vert =O_{a.s.}\left( \ln \ln T\right) $; indeed%
\begin{equation*}
\nu ^{\left( \min \right) }\left( B\right) \leq \nu ^{\left( \min
\right) }\left( \frac{1}{T^{2}}\sum_{t=1}^{T}f_{t}^{\ast }f_{t}^{\ast \prime
}\right) +\nu ^{\left( \min \right) }\left( \frac{1}{T^{2}}\widetilde{%
d_{1}}\sum_{t=1}^{T}tf_{t}^{\ast \prime }+\frac{1}{T^{2}}%
\sum_{t=1}^{T}f_{t}^{\ast }t\widetilde{d_{1}}^{\prime }\right) =\nu
^{\left( \min \right) }\left( \frac{1}{T^{2}}\sum_{t=1}^{T}f_{t}^{\ast
}f_{t}^{\ast \prime }\right) ,
\end{equation*}%
and by \citet[Example 2]{donsker1977} it holds that%
\begin{equation*}
\nu ^{\left( \min \right) }\left( \frac{1}{T^{2}}\sum_{t=1}^{T}f_{t}^{%
\ast }f_{t}^{\ast \prime }\right) \leq C_{0}\ln \ln T.
\end{equation*}%
Thus, by (\ref{theta-1})%
\begin{equation*}
\nu ^{\left( 1\right) }\left( \frac{1}{T^{2}}\sum_{t=1}^{T}f_{t}^{\left(
1,2\right) }f_{t}^{\left( 1,2\right) \prime }\right) \geq C_{0}T,
\end{equation*}%
which proves (\ref{large-1}). Turning to (\ref{large-2}), for each $p>1$%
\begin{eqnarray*}
\nu ^{\left( p\right) }\left( \frac{1}{T^{2}}\sum_{t=1}^{T}f_{t}^{\left(
1,2\right) }f_{t}^{\left( 1,2\right) \prime }\right)  &\geq &\nu ^{\left(
p\right) }\left( \frac{1}{T^{2}}\sum_{t=1}^{T}t^{2}\widetilde{d_{1}}%
\widetilde{d_{1}}^{\prime }+\frac{1}{T^{2}}\sum_{t=1}^{T}f_{t}^{\ast
}f_{t}^{\ast \prime }\right)  \\
&&+\nu ^{\left( \min \right) }\left( \frac{1}{T^{2}}\widetilde{d_{1}}%
\sum_{t=1}^{T}tf_{t}^{\ast \prime }+\frac{1}{T^{2}}\sum_{t=1}^{T}f_{t}^{\ast
}t\widetilde{d_{1}}^{\prime }\right)  \\
&\geq &\nu ^{\left( p\right) }\left( \frac{1}{T^{2}}\sum_{t=1}^{T}f_{t}^{%
\ast }f_{t}^{\ast \prime }\right) +\nu ^{\left( \min \right) }\left( 
\frac{1}{T^{2}}\sum_{t=1}^{T}t^{2}\widetilde{d_{1}}\widetilde{d_{1}}^{\prime
}\right) =\nu ^{\left( p\right) }\left( \frac{1}{T^{2}}%
\sum_{t=1}^{T}f_{t}^{\ast }f_{t}^{\ast \prime }\right) ,
\end{eqnarray*}%
so that the desired result follows immediately from Lemma \ref{LIL}.
Finally, consider (\ref{large-3}). Let $\widetilde{g}_{t}=\left[
g_{t},0,...,0\right] ^{\prime }$ and $\widetilde{f}_{t}=\left[
0,f_{t}^{\left( 2\right) \prime }\right] ^{\prime }$ be two $\left(
r_{2}+1\right) $-dimensional vectors; in this case we have%
\begin{equation*}
\frac{1}{T^{2}}\sum_{t=1}^{T}f_{t}^{\left( 1,2\right) }f_{t}^{\left(
1,2\right) \prime }=\frac{1}{T^{2}}\sum_{t=1}^{T}\widetilde{g}_{t}\widetilde{%
g}_{t}^{\prime }+\frac{1}{T^{2}}\sum_{t=1}^{T}\widetilde{f}_{t}\widetilde{f}%
_{t}^{\prime };
\end{equation*}%
thus%
\begin{equation*}
\nu ^{\left( \min \right) }\left( \frac{1}{T^{2}}\sum_{t=1}^{T}f_{t}^{%
\left( 1,2\right) }f_{t}^{\left( 1,2\right) \prime }\right) \leq \nu
^{\left( 1\right) }\left( \frac{1}{T^{2}}\sum_{t=1}^{T}\widetilde{g}_{t}%
\widetilde{g}_{t}^{\prime }\right) +\nu ^{\left( \min \right) }\left( 
\frac{1}{T^{2}}\sum_{t=1}^{T}\widetilde{f}_{t}\widetilde{f}_{t}^{\prime
}\right) \leq \frac{1}{T^{2}}\sum_{t=1}^{T}g_{t}^{2}.
\end{equation*}%
Assumption \ref{as-2}\textit{(i)} and equation (2.3) in %
\citet{serfling1970} imply that%
\begin{equation*}
E\max_{1\leq \widetilde{t}\leq T}\left\Vert \sum_{t=1}^{\widetilde{t}%
}g_{t}^{2}\right\Vert ^{2}\leq C_{0}\left( \ln T\right) ^{2}T,
\end{equation*}%
which, through Lemma \ref{borelcantelli}, yields the desired result.%
\end{proof}

\begin{lemma}
\label{remainder-1}Under Assumptions \ref{as-1}-\ref{as-3}%
\begin{equation}
\max_{1\leq p\leq N}\left\vert \nu ^{\left( p\right) }\left( \frac{1}{%
T^{3}}\sum_{t=1}^{T}u_{t}^{\left( 1\right) }u_{t}^{\left( 1\right) \prime }+%
\frac{1}{T^{3}}\sum_{t=1}^{T}\Lambda ^{\left( 1\right) }\widetilde{f}%
_{t}^{\left( 1\right) }u_{t}^{\left( 1\right) \prime }+\frac{1}{T^{3}}%
\sum_{t=1}^{T}u_{t}^{\left( 1\right) }\widetilde{f}_{t}^{\left( 1\right)
}\Lambda ^{\left( 1\right) \prime }\right) \right\vert =O_{a.s.}\left( \frac{%
N}{\sqrt{T}}l_{N,T}\right) . \nonumber
\end{equation}
\end{lemma}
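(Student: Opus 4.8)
\emph{Proof proposal.} The matrix inside $\nu^{(p)}(\cdot)$ is symmetric, so $\max_{1\le p\le N}|\nu^{(p)}(\cdot)|$ equals its spectral norm, and the plan is to bound that norm. Write the matrix as $A_{1}+A_{2}$, where
\begin{equation*}
A_{1}=\frac{1}{T^{3}}\sum_{t=1}^{T}u_{t}^{(1)}u_{t}^{(1)\prime },\qquad A_{2}=\frac{1}{T^{3}}\Lambda^{(1)}\sum_{t=1}^{T}\widetilde f_{t}^{(1)}u_{t}^{(1)\prime }+\left(\frac{1}{T^{3}}\Lambda^{(1)}\sum_{t=1}^{T}\widetilde f_{t}^{(1)}u_{t}^{(1)\prime }\right)^{\prime },
\end{equation*}
so that $\|A_{1}+A_{2}\|\le\|A_{1}\|+\|A_{2}\|$ by the triangle inequality (equivalently, Weyl's inequality). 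Throughout I would use $u_{t}^{(1)}=\Lambda^{(2)}f_{t}^{(2)}+\Lambda^{(3)}f_{t}^{(3)}+u_{t}$, $\widetilde f_{t}^{(1)}=d_{1}t+d_{2}f_{t}^{(1)\dag }$, and $\|\Lambda^{(j)}\|=O(\sqrt N)$ from Assumption \ref{as-3}. The workhorse for every almost-sure bound is Lemma \ref{borelcantelli}, fed with the $L^{2}$ bounds of Assumptions \ref{as-1}--\ref{as-2} and, when a partial-sum maximum in $t$ is needed, with the Serfling-type maximal inequality already used in Lemma \ref{eigenval-large}; each such application costs only a power-of-logarithm (slowly varying) factor, which is absorbed into $l_{N,T}$.

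For $\|A_{1}\|$ I would expand $u_{t}^{(1)}u_{t}^{(1)\prime }$ into nine products. The leading one is $T^{-3}\Lambda^{(2)}(\sum_{t}f_{t}^{(2)}f_{t}^{(2)\prime })\Lambda^{(2)\prime }$, whose norm is at most $T^{-3}\|\Lambda^{(2)}\|^{2}\|\sum_{t}f_{t}^{\ast }f_{t}^{\ast \prime }\|$; since $\sum_{t}f_{t}^{\ast }f_{t}^{\ast \prime }$ is nondecreasing in the positive semidefinite order, its maximum over truncations equals its terminal value, so Assumption \ref{as-1}(vi) and Lemma \ref{borelcantelli} give $\|\sum_{t}f_{t}^{\ast }f_{t}^{\ast \prime }\|=O_{a.s.}(T^{2}(\ln T)^{1/2+\epsilon })$ (alternatively, via the strong approximation of Assumption \ref{as-1}(iv) and the LIL, as in Lemma \ref{LIL}), and hence this piece is $O_{a.s.}(NT^{-1}(\ln T)^{1/2+\epsilon })$. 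The eight remaining products are $O_{a.s.}(NT^{-1}\cdot\text{slowly varying})$ or smaller: Assumption \ref{as-2}(ii) handles the cross terms pairing $f_{t}^{(2)}$ with $\Lambda^{(3)}f_{t}^{(3)}+u_{t}$; Assumption \ref{as-2}(i) and a crude $\sum_{t}\|u_{t}\|^{2}$-type bound handle the quadratic terms in $\Lambda^{(3)}f_{t}^{(3)}$ and $u_{t}$; and Cauchy--Schwarz covers the few cases where no direct moment bound is stated; everything is then combined with $\|\Lambda^{(j)}\|=O(\sqrt N)$ and Lemma \ref{borelcantelli}. Thus $\|A_{1}\|=o_{a.s.}(NT^{-1/2}l_{N,T})$.

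For $\|A_{2}\|$ I would use $\|A_{2}\|\le 2T^{-3}\|\Lambda^{(1)}\|\,\|\sum_{t}\widetilde f_{t}^{(1)}u_{t}^{(1)\prime }\|$ and expand $\sum_{t}\widetilde f_{t}^{(1)}u_{t}^{(1)\prime }$ into the six products generated by $\widetilde f_{t}^{(1)}=d_{1}t+d_{2}f_{t}^{(1)\dag }$ and $u_{t}^{(1)}=\Lambda^{(2)}f_{t}^{(2)}+\Lambda^{(3)}f_{t}^{(3)}+u_{t}$. The dominant term is $d_{1}T^{-3}\Lambda^{(1)}(\sum_{t}tf_{t}^{(2)\prime })\Lambda^{(2)\prime }$: since $f_{t}^{(2)}$ is a subvector of $f_{t}^{\ast }$, one has $\|\sum_{t}tf_{t}^{(2)\prime }\|\le\|\sum_{t}tf_{t}^{\ast }\|=O_{a.s.}(T^{5/2}(\ln T)^{3/2+\epsilon })$ by Assumption \ref{as-2}(iii), Serfling's maximal inequality, and Lemma \ref{borelcantelli}, so with $\|\Lambda^{(1)}\|\,\|\Lambda^{(2)}\|=O(N)$ this term is $O_{a.s.}(NT^{-1/2}(\ln T)^{3/2+\epsilon })=O_{a.s.}(NT^{-1/2}l_{N,T})$. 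The other five products are strictly smaller: those pairing $t$ with $f_{t}^{(3)}$ or $u_{t}$ are $O_{a.s.}(NT^{-3/2}\cdot\text{slowly varying})$ by Assumption \ref{as-2}(iv)(a)--(b) (the maximum over the $N$ unit indices in $\|\sum_{t}tu_{t}^{\prime }\|\le\|\sum_{t}tu_{t}^{\prime }\|_{F}$ supplying the $\ln N$ part of $l_{N,T}$); the product of $f_{t}^{(1)\dag }$ with $f_{t}^{(2)}$ is $O_{a.s.}(NT^{-1}\cdot\text{slowly varying})$ by Assumption \ref{as-1}(vi); and the products of $f_{t}^{(1)\dag }$ with $f_{t}^{(3)}$ or $u_{t}$ are smaller still by Assumption \ref{as-2}(ii). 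Adding the bound on $\|A_{1}\|$ gives the claim.

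The obstacle here is bookkeeping rather than conceptual. One must verify, across the four cases $d_{1},d_{2}\in\{0,1\}$, that $d_{1}T^{-3}\Lambda^{(1)}(\sum_{t}tf_{t}^{(2)\prime })\Lambda^{(2)\prime }$ is the term saturating the rate $NT^{-1/2}l_{N,T}$ (it is absent when $d_{1}=0$, so the bound then holds with slack), and that each use of Serfling's inequality inside Lemma \ref{borelcantelli} inflates the estimate by at most a logarithmic factor subsumed in $l_{N,T}=(\ln N)^{1+\epsilon }(\ln T)^{3/2+\epsilon }$ — concretely, the partial-sum maxima in $t$ cost at most $(\ln T)^{3/2+\epsilon }$ and the maxima over the $N$ idiosyncratic coordinates at most $(\ln N)^{1+\epsilon }$.
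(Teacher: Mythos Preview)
Your proposal is correct and follows essentially the same route as the paper: split via the triangle inequality, expand $u_{t}^{(1)}$ and $\widetilde f_{t}^{(1)}$ into their constituent pieces, feed the moment bounds of Assumptions \ref{as-1}--\ref{as-2} (together with Serfling's maximal inequality) into Lemma \ref{borelcantelli}, and observe that the dominant piece is $d_{1}T^{-3}\Lambda^{(1)}(\sum_{t}tf_{t}^{(2)\prime})\Lambda^{(2)\prime}$, which saturates $NT^{-1/2}l_{N,T}$ via Assumption \ref{as-2}(iii).

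The only organisational difference is that the paper bounds the spectral norm by the Frobenius norm at the outset and then works with the double sums $\sum_{i}\sum_{j}|\cdot|^{2}$ directly, whereas you pull the loadings out first via $\|\Lambda^{(j)}M\Lambda^{(k)\prime}\|\le\|\Lambda^{(j)}\|\,\|M\|\,\|\Lambda^{(k)}\|$ with $\|\Lambda^{(j)}\|=O(\sqrt N)$, and only fall back on Frobenius norms for the genuinely $N$-dimensional idiosyncratic pieces such as $\sum_{t}tu_{t}$. Your packaging is a little cleaner in separating the $N$-dependence from the $T$-dependence, but the ingredients, the dominant term, and the resulting rate are identical.
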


\begin{proof}
We show the lemma for the case $d_{1}=d_{2}=1$; when either dummy is zero,
calculations become easier and the result can be readily shown. Let%
\begin{equation*}
\max_{1\leq p\leq N}\left\vert \nu ^{\left( p\right) }\left( \frac{1}{%
T^{3}}\sum_{t=1}^{T}u_{t}^{\left( 1\right) }u_{t}^{\left( 1\right) \prime }+%
\frac{1}{T^{3}}\sum_{t=1}^{T}\Lambda ^{\left( 1\right) }\widetilde{f}%
_{t}^{\left( 1\right) }u_{t}^{\left( 1\right) \prime }+\frac{1}{T^{3}}%
\sum_{t=1}^{T}u_{t}^{\left( 1\right) }\widetilde{f}_{t}^{\left( 1\right)
}\Lambda ^{\left( 1\right) \prime }\right) \right\vert =\nu ^{\left( \max
\right) },
\end{equation*}%
for short. It holds that%
\begin{eqnarray}
\frac{1}{3}\nu ^{\left( \max \right) } &\leq &\frac{1}{3}\left(
\sum_{i=1}^{N}\sum_{j=1}^{N}\left\vert \frac{1}{T^{3}}%
\sum_{t=1}^{T}u_{i,t}^{\left( 1\right) }u_{j,t}^{\left( 1\right) }+\frac{1}{%
T^{3}}\sum_{t=1}^{T}\Lambda _{i}^{\left( 1\right) }\widetilde{f}_{t}^{\left(
1\right) }u_{j,t}^{\left( 1\right) }+\frac{1}{T^{3}}\sum_{t=1}^{T}\Lambda
_{j}^{\left( 1\right) }\widetilde{f}_{t}^{\left( 1\right) }u_{i,t}^{\left(
1\right) }\right\vert ^{2}\right) ^{1/2}.  \label{theta-max-1} \\
&\leq &\left( \sum_{i=1}^{N}\sum_{j=1}^{N}\left\vert \frac{1}{T^{3}}%
\sum_{t=1}^{T}u_{i,t}^{\left( 1\right) }u_{j,t}^{\left( 1\right)
}\right\vert ^{2}\right) ^{1/2}+\left(
\sum_{i=1}^{N}\sum_{j=1}^{N}\left\vert \frac{1}{T^{3}}\sum_{t=1}^{T}%
\sum_{k=1}^{r}\Lambda _{i}^{\left( 1\right) }\widetilde{f}_{t}^{\left(
1\right) }u_{j,t}^{\left( 1\right) }\right\vert ^{2}\right) ^{1/2}  \notag \\
&&+\left( \sum_{i=1}^{N}\sum_{j=1}^{N}\left\vert \frac{1}{T^{3}}%
\sum_{t=1}^{T}\Lambda _{j}^{\left( 1\right) }\widetilde{f}_{t}^{\left(
1\right) }u_{i,t}^{\left( 1\right) }\right\vert ^{2}\right) ^{1/2},  \notag
\end{eqnarray}%
where the first passage is the usual spectral norm inequality, and the last
passage follows from applying (twice) the $C_{r}$-inequality (\citealp[p. 140]{davidson}). \\
Let now 
\begin{equation}
u_{i,t}^{\left( 2\right) }=\lambda ^{\left( 1\right) }g_{t}+\lambda ^{\left(
3\right) \prime }f_{t}^{\left( 3\right) }+u_{i,t}  \label{u-2}
\end{equation}%
and note that%
\begin{eqnarray*}
&&\frac{1}{3}\sum_{i=1}^{N}\sum_{j=1}^{N}\left\vert \frac{1}{T^{3}}%
\sum_{t=1}^{T}u_{i,t}^{\left( 1\right) }u_{j,t}^{\left( 1\right)
}\right\vert ^{2} \\
&\leq &\sum_{i=1}^{N}\sum_{j=1}^{N}\left\vert \frac{1}{T^{3}}%
\sum_{t=1}^{T}u_{i,t}^{\left( 2\right) }u_{j,t}^{\left( 2\right)
}\right\vert ^{2}+\sum_{i=1}^{N}\sum_{j=1}^{N}\left\vert \frac{1}{T^{3}}%
\sum_{t=1}^{T}\sum_{k=1}^{r_{2}}\Lambda _{i,k}^{\left( 2\right)
}f_{k,t}^{\left( 2\right) }f_{k,t}^{\left( 2\right) \prime }\Lambda
_{i,k}^{\left( 2\right) \prime }\right\vert ^{2} \\
&&+\sum_{i=1}^{N}\sum_{j=1}^{N}\left\vert \frac{1}{T^{3}}\sum_{t=1}^{T}%
\sum_{k=1}^{r_{2}}\Lambda _{i,k}^{\left( 2\right) }f_{k,t}^{\left( 2\right)
}u_{j,t}^{\left( 2\right) }\right\vert
^{2}+\sum_{i=1}^{N}\sum_{j=1}^{N}\left\vert \frac{1}{T^{3}}%
\sum_{t=1}^{T}\sum_{k=1}^{r_{2}}\Lambda _{j,k}^{\left( 2\right)
}f_{k,t}^{\left( 2\right) }u_{i,t}^{\left( 2\right) }\right\vert ^{2}.
\end{eqnarray*}%
We have 
\begin{eqnarray*}
&&E\max_{h_{1},h_{2},\widetilde{t}}\sum_{i=1}^{h_{1}}\sum_{j=1}^{h_{2}}\left\vert \frac{1}{T^{3}}\sum_{t=1}^{\widetilde{t}}u_{i,t}^{\left( 2\right)
}u_{j,t}^{\left( 2\right) }\right\vert ^{2}\leq
\sum_{i=1}^{N}\sum_{j=1}^{N}E\max_{\widetilde{t}}\left\vert \frac{1}{T^{3}}%
\sum_{t=1}^{\widetilde{t}}u_{i,t}^{\left( 2\right) }u_{j,t}^{\left( 2\right)
}\right\vert ^{2} \\
&\leq &C_{0}T\frac{1}{T^{6}}\sum_{i=1}^{N}\sum_{j=1}^{N}E\sum_{t=1}^{T}\left\vert u_{i,t}^{\left( 2\right) }\right\vert ^{2}\left\vert u_{j,t}^{\left(
2\right) }\right\vert ^{2}\leq C_{0}N^{2}T^{-4}\max_{1\leq i\leq
N}E\left\vert u_{i,t}^{\left( 2\right) }\right\vert ^{4} \\
&\leq &C_{0}N^{2}T^{-4}\left( \max_{1\leq i\leq N}E\left\vert
u_{i,t}\right\vert ^{4}+\max_{1\leq i\leq N}\left\Vert \lambda _{i}^{\left(
3\right) }\right\Vert ^{4}E\left\Vert f_{t}^{\left( 3\right) }\right\Vert
^{4}++\max_{1\leq i\leq N}\left\Vert \lambda _{i}^{\left( 1\right)
}\right\Vert ^{4}E\left\Vert g_{t}\right\Vert ^{4}\right) \\
&\leq &C_{0}N^{2}T^{-4},
\end{eqnarray*}%
so that, by Lemma \ref{borelcantelli}%
\begin{equation}
\sum_{i=1}^{N}\sum_{j=1}^{N}\left\vert \frac{1}{T^{3}}%
\sum_{t=1}^{T}u_{i,t}^{\left( 2\right) }u_{j,t}^{\left( 2\right)
}\right\vert ^{2}=O_{a.s.}\left( \frac{N^{2}}{T^{4}}\ln ^{2+\epsilon }N\ln
^{1+\epsilon }T\right) .  \label{trend-1}
\end{equation}%
Also%
\begin{eqnarray*}
&&\sum_{i=1}^{N}\sum_{j=1}^{N}\left\vert \frac{1}{T^{3}}\sum_{t=1}^{T}%
\sum_{k=1}^{r_{2}}\Lambda _{i,k}^{\left( 2\right) }f_{k,t}^{\left( 2\right)
}f_{k,t}^{\left( 2\right) \prime }\Lambda _{i,k}^{\left( 2\right)
}\right\vert ^{2} \\
&\leq &T^{-6}N^{2}\left( \max_{i}\left\Vert \Lambda _{i,k}^{\left( 2\right)
}\right\Vert \right) ^{2}\left\Vert \sum_{t=1}^{T}f_{t}^{\left( 2\right)
}f_{t}^{\left( 2\right) \prime }\right\Vert ^{2};
\end{eqnarray*}%
on account of Assumption \ref{as-1}\textit{(iv)}, it holds that%
\begin{equation*}
\left\Vert \frac{\sum_{t=1}^{T}f_{t}^{\left( 2\right) }f_{t}^{\left(
2\right) \prime }}{T^{2}\ln \ln T}\right\Vert ^{2}=\left\Vert \Sigma_{\Delta f^*}^{1/2}\frac{\sum_{t=1}^{T}W\left( t\right) W\left( t\right)
^{\prime }}{T^{2}\ln \ln T}\Sigma _{\Delta f^*}^{1/2}\right\Vert
^{2}+o_{a.s.}\left( 1\right) =O_{a.s.}\left( 1\right) ;
\end{equation*}%
the final result follows from \citet[Example 2]{donsker1977}). Thus%
\begin{equation}
\sum_{i=1}^{N}\sum_{j=1}^{N}\left\vert \frac{1}{T^{3}}\sum_{t=1}^{T}%
\sum_{k=1}^{r_{2}}\Lambda _{i,k}^{\left( 2\right) }f_{k,t}^{\left( 2\right)
}f_{k,t}^{\left( 2\right) \prime }\Lambda _{i,k}^{\left( 2\right)
}\right\vert ^{2}=O_{a.s.}\left( \frac{N^{2}}{T^{2}}\left( \ln \ln T\right)
^{2}\right) .  \label{trend-2}
\end{equation}%
Finally, consider 
\begin{eqnarray*}
&&E\max_{h_{1},h_{2},\widetilde{t}}\sum_{i=1}^{h_{1}}\sum_{j=1}^{h_{2}}\left\vert \sum_{t=1}^{\widetilde{t}}\Lambda _{i}^{\left( 2\right) \prime
}f_{t}^{\left( 2\right) }u_{j,t}^{\left( 2\right) }\right\vert ^{2}\leq
\sum_{i=1}^{N}\sum_{j=1}^{N}E\max_{\widetilde{t}}\left\vert \sum_{t=1}^{%
\widetilde{t}}\Lambda _{i}^{\left( 2\right) \prime }f_{t}^{\left( 2\right)
}u_{j,t}^{\left( 2\right) }\right\vert ^{2} \\
&\leq &C_{0}\left( \ln T\right) ^{2}\sum_{i=1}^{N}\sum_{j=1}^{N}E\left\vert
\sum_{t=1}^{T}\Lambda _{i}^{\left( 2\right) \prime }f_{t}^{\left( 2\right)
}u_{j,t}^{\left( 2\right) }\right\vert ^{2}\leq C_{0}\left( \ln T\right)
^{2}\sum_{i=1}^{N}\sum_{j=1}^{N}\Lambda _{i}^{\left( 2\right) \prime
}\sum_{t=1}^{T}\sum_{s=1}^{T}E\left( f_{t}^{\left( 2\right) }u_{j,t}^{\left(
2\right) }f_{s}^{\left( 2\right) \prime }u_{j,s}^{\left( 2\right) }\right)
\Lambda _{i}^{\left( 2\right) } \\
&\leq &C_{1}\left( \max_{i}\left\Vert \Lambda _{i}^{\left( 2\right) \prime
}\right\Vert \right) ^{2}\left( \ln T\right)
^{2}\sum_{i=1}^{N}\sum_{j=1}^{N}E\left\Vert \sum_{t=1}^{T}f_{t}^{\left(
2\right) }u_{j,t}^{\left( 2\right) }\right\Vert ^{2}\leq
C_{2}N^{2}T^{2}\left( \ln T\right) ^{2},
\end{eqnarray*}%
having used Assumption \ref{as-2}\textit{(ii)}, so that%
\begin{equation*}
\sum_{i=1}^{N}\sum_{j=1}^{N}\left\vert \frac{1}{T^{3}}\sum_{t=1}^{T}%
\sum_{k=1}^{r_{2}}\Lambda _{i,k}^{\left( 2\right) }f_{k,t}^{\left( 2\right)
}u_{j,t}^{\left( 2\right) }\right\vert ^{2}=O_{a.s.}\left( \frac{N^{2}}{T^{4}%
}\ln ^{2+\epsilon }N\ln ^{3+\epsilon }T\right) .
\end{equation*}%
Putting all together, we have%
\begin{equation*}
\left( \sum_{i=1}^{N}\sum_{j=1}^{N}\left\vert \frac{1}{T^{3}}%
\sum_{t=1}^{T}u_{i,t}^{\left( 1\right) }u_{j,t}^{\left( 1\right)
}\right\vert ^{2}\right) ^{1/2}=O_{a.s.}\left( \frac{N}{T}\ln \ln T\right) .
\end{equation*}%
Consider now%
\begin{eqnarray}
&&\frac{1}{2}\sum_{i=1}^{N}\sum_{j=1}^{N}\left\vert \frac{1}{T^{3}}%
\sum_{t=1}^{T}\Lambda _{i}^{\left( 1\right) }\widetilde{f}_{t}^{\left(
1\right) }u_{j,t}^{\left( 1\right) }\right\vert ^{2}  \notag \\
&\leq &\sum_{i=1}^{N}\sum_{j=1}^{N}\left\vert \frac{1}{T^{3}}%
\sum_{t=1}^{T}\Lambda _{i}^{\left( 1\right) }tu_{j,t}^{\left( 1\right)
}\right\vert ^{2}+\sum_{i=1}^{N}\sum_{j=1}^{N}\left\vert \frac{1}{T^{3}}%
\sum_{t=1}^{T}\Lambda _{i}^{\left( 1\right) }f_{t}^{\left( 1\right) \dag
}u_{j,t}^{\left( 1\right) }\right\vert ^{2}.  \label{lemma10-1}
\end{eqnarray}%
We have%
\begin{eqnarray*}
&&\frac{1}{2}\sum_{i=1}^{N}\sum_{j=1}^{N}\left\vert \frac{1}{T^{3}}%
\sum_{t=1}^{T}\Lambda _{i}^{\left( 1\right) }tu_{j,t}^{\left( 1\right)
}\right\vert ^{2} \\
&\leq &\sum_{i=1}^{N}\sum_{j=1}^{N}\left\vert \frac{1}{T^{3}}%
\sum_{t=1}^{T}\Lambda _{i}^{\left( 1\right) }tf_{t}^{\left( 2\right)
^{\prime }}\Lambda _{j}^{\left( 2\right) \prime }\right\vert
^{2}+\sum_{i=1}^{N}\sum_{j=1}^{N}\left\vert \frac{1}{T^{3}}%
\sum_{t=1}^{T}\Lambda _{i}^{\left( 1\right) }tu_{j,t}^{\left( 2\right)
}\right\vert ^{2}.
\end{eqnarray*}%
Note that%
\begin{eqnarray*}
&&E\max_{h_{1},h_{2},\widetilde{t}}\sum_{i=1}^{h_{1}}\sum_{j=1}^{h_{2}}\left\vert \frac{1}{T^{3}}\sum_{t=1}^{\widetilde{t}}\Lambda _{i}^{\left( 1\right)
}tf_{t}^{\left( 2\right) ^{\prime }}\Lambda _{j}^{\left( 2\right) \prime
}\right\vert ^{2}\leq T^{-6}\sum_{i=1}^{N}\sum_{j=1}^{N}E\max_{\widetilde{t}%
}\left\vert \sum_{t=1}^{\widetilde{t}}\Lambda _{i}^{\left( 1\right)
}tf_{t}^{\left( 2\right) ^{\prime }}\Lambda _{j}^{\left( 2\right) \prime
}\right\vert ^{2} \\
&\leq &C_{0}\left( \ln T\right)
^{2}T^{-6}\sum_{i=1}^{N}\sum_{j=1}^{N}E\left\vert \sum_{t=1}^{T}\Lambda
_{i}^{\left( 1\right) }tf_{t}^{\left( 2\right) ^{\prime }}\Lambda
_{j}^{\left( 2\right) \prime }\right\vert ^{2}\leq C_{0}\left( \ln T\right)
^{2}T^{-6}\sum_{i=1}^{N}\sum_{j=1}^{N}\Lambda _{i}^{\left( 1\right)
}\sum_{t=1}^{T}\sum_{s=1}^{T}E\left( tf_{s}^{\left( 2\right) \prime }\right)
\Lambda _{j}^{\left( 2\right) } \\
&\leq &C_{1}N^{2}T^{-6}\left( \max_{i}\left\Vert \Lambda _{i}^{\left(
1\right) }\right\Vert \right) ^{2}\left( \max_{i}\left\Vert \Lambda
_{i}^{\left( 2\right) }\right\Vert \right) ^{2}\left( \ln T\right)
^{2}E\left\Vert \sum_{t=1}^{T}tf_{t}^{\left( 2\right) }\right\Vert ^{2}\leq
C_{2}N^{2}T^{-1}\left( \ln T\right) ^{2},
\end{eqnarray*}%
having used Assumption \ref{as-2}\textit{(iii)}; Lemma \ref{borelcantelli}
entails that%
\begin{equation*}
\sum_{i=1}^{N}\sum_{j=1}^{N}\left\vert \frac{1}{T^{3}}\sum_{t=1}^{T}\Lambda
_{i}^{\left( 1\right) }tf_{t}^{\left( 2\right) ^{\prime }}\Lambda
_{j}^{\left( 2\right) \prime }\right\vert ^{2}=O_{a.s.}\left( \frac{N^{2}}{T}%
\ln ^{2+\epsilon }N\ln ^{3+\epsilon }T\right) .
\end{equation*}%
Similar passages yield%
\begin{equation*}
\sum_{i=1}^{N}\sum_{j=1}^{N}\left\vert \frac{1}{T^{3}}\sum_{t=1}^{T}\Lambda
_{i}^{\left( 1\right) }tu_{j,t}^{\left( 2\right) }\right\vert
^{2}=O_{a.s.}\left( \frac{N^{2}}{T^{3}}\ln ^{2+\epsilon }N\ln ^{3+\epsilon
}T\right) .
\end{equation*}%
Thus, finally%
\begin{equation*}
\sum_{i=1}^{N}\sum_{j=1}^{N}\left\vert \frac{1}{T^{3}}\sum_{t=1}^{T}\Lambda
_{i}^{\left( 1\right) }tu_{j,t}^{\left( 1\right) }\right\vert
^{2}=O_{a.s.}\left( \frac{N^{2}}{T}\ln ^{2+\epsilon }N\ln ^{3+\epsilon
}T\right) .
\end{equation*}%
We now consider the next term in equation (\ref{lemma10-1}). We have%
\begin{equation*}
\sum_{i=1}^{N}\sum_{j=1}^{N}\left\vert \frac{1}{T^{3}}\sum_{t=1}^{T}\Lambda
_{i}^{\left( 1\right) }f_{t}^{\left( 1\right) \dag }u_{j,t}^{\left( 1\right)
}\right\vert ^{2}=\sum_{i=1}^{N}\sum_{j=1}^{N}\left\vert \frac{1}{T^{3}}%
\sum_{t=1}^{T}\Lambda _{i}^{\left( 1\right) }f_{t}^{\left( 1\right) \dag
}f_{t}^{\left( 2\right) ^{\prime }}\Lambda _{j}^{\left( 2\right) \prime
}\right\vert ^{2}+\sum_{i=1}^{N}\sum_{j=1}^{N}\left\vert \frac{1}{T^{3}}%
\sum_{t=1}^{T}\Lambda _{i}^{\left( 1\right) }f_{t}^{\left( 1\right) \dag
}u_{j,t}^{\left( 2\right) }\right\vert ^{2}.
\end{equation*}%
Similar passages as above yield%
\begin{eqnarray*}
&&E\max_{h_{1},h_{2},\widetilde{t}}\sum_{i=1}^{h_{1}}\sum_{j=1}^{h_{2}}\left\vert \frac{1}{T^{3}}\sum_{t=1}^{\widetilde{t}}\Lambda _{i}^{\left( 1\right)
}f_{t}^{\left( 1\right) \dag }f_{t}^{\left( 2\right) ^{\prime }}\Lambda
_{j}^{\left( 2\right) \prime }\right\vert ^{2} \\
&\leq &C_{0}N^{2}T^{-6}\left( \max_{i}\left\Vert \Lambda _{i}^{\left(
1\right) }\right\Vert \right) ^{2}\left( \max_{i}\left\Vert \Lambda
_{i}^{\left( 2\right) }\right\Vert \right) ^{2}\left( \ln T\right)
^{2}E\left\Vert \sum_{t=1}^{T}f_{t}^{\left( 1\right) \dag }f_{t}^{\left(
2\right) ^{\prime }}\right\Vert ^{2}\leq C_{1}N^{2}T^{-6}\left( \ln T\right)
^{2}T^{4},
\end{eqnarray*}%
having used Assumption \ref{as-1}\textit{(vi)}. Similarly%
\begin{eqnarray*}
&&E\max_{h_{1},h_{2},\widetilde{t}}\sum_{i=1}^{h_{1}}\sum_{j=1}^{h_{2}}\left\vert \frac{1}{T^{3}}\sum_{t=1}^{\widetilde{t}}\Lambda _{i}^{\left( 1\right)
}f_{t}^{\left( 1\right) \dag }u_{j,t}^{\left( 2\right) }\right\vert ^{2} \\
&\leq &C_{0}N^{2}T^{-6}\left( \max_{i}\left\Vert \Lambda _{i}^{\left(
1\right) }\right\Vert \right) ^{2}\left( \ln T\right) ^{2}E\left\Vert
\sum_{t=1}^{T}f_{t}^{\left( 1\right) \dag }u_{j,t}^{\left( 2\right)
}\right\Vert ^{2}\leq C_{1}N^{2}T^{-6}\left( \ln T\right) ^{2}T^{2},
\end{eqnarray*}%
having used Assumption \ref{as-2}\textit{(ii)}. Thus, using Lemma \ref%
{borelcantelli}%
\begin{equation*}
\sum_{i=1}^{N}\sum_{j=1}^{N}\left\vert \frac{1}{T^{3}}\sum_{t=1}^{T}\Lambda
_{i}^{\left( 1\right) }f_{t}^{\left( 1\right) \dag }u_{j,t}^{\left( 1\right)
}\right\vert ^{2}=O_{a.s.}\left( \frac{N}{T}\ln ^{1+\epsilon }N\ln ^{\frac{3%
}{2}+\epsilon }T\right) .
\end{equation*}%
Using (\ref{theta-max-1}) and putting all together, the desired result
obtains. 
\end{proof}

\begin{lemma}
\label{remainder}Under Assumptions \ref{as-1}-\ref{as-3}%
\begin{equation}
\max_{1\leq p\leq N}\left\vert \nu ^{\left( p\right) }\left( \frac{1}{%
T^{2}}\sum_{t=1}^{T}u_{t}^{\left( 2\right) }u_{t}^{\left( 2\right) \prime }+%
\frac{1}{T^{2}}\sum_{t=1}^{T}\Lambda ^{\left( 1,2\right) }f_{t}^{\left(
1,2\right) }u_{t}^{\left( 2\right) \prime }+\frac{1}{T^{2}}%
\sum_{t=1}^{T}u_{t}^{\left( 2\right) }f_{t}^{\left( 1,2\right) \prime
}\Lambda ^{\left( 1,2\right) \prime }\right) \right\vert =O_{a.s.}\left( 
\frac{N}{\sqrt{T}}l_{N,T}\right) ,  \label{lambda-max-remainder}
\end{equation}%
where $u_{t}^{\left( 2\right) }$\ is defined in (\ref{u-2}).
\end{lemma}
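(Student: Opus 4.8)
The plan is to follow, almost verbatim, the strategy of Lemma \ref{remainder-1}, with the two changes that the normalisation is now $T^{-2}$ rather than $T^{-3}$ and that the ``factor block'' appearing in the cross-terms is $f_t^{(1,2)}=[\widetilde f_t^{(1)\prime},f_t^{(2)\prime}]'$ rather than $\widetilde f_t^{(1)}$. As in Lemma \ref{remainder-1} it suffices to treat the hardest case $d_1=d_2=1$; when either dummy vanishes the corresponding term disappears and the bookkeeping only gets lighter. First I would bound the largest (in absolute value) eigenvalue of the matrix $A$ inside $\nu^{(\max)}$ by its Frobenius norm, $|\nu^{(\max)}|\le\big(\sum_{i=1}^N\sum_{j=1}^N|A_{ij}|^2\big)^{1/2}$, and then apply the $C_r$-inequality (\citealp[p. 140]{davidson}) twice to bound this by a constant times the sum of the three pieces $\big(\sum_{i,j}|T^{-2}\sum_t u_{i,t}^{(2)}u_{j,t}^{(2)}|^2\big)^{1/2}$, $\big(\sum_{i,j}|T^{-2}\sum_t\Lambda_i^{(1,2)}f_t^{(1,2)}u_{j,t}^{(2)}|^2\big)^{1/2}$, and the transpose of the second.

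Next I would ``atomise'' each double sum: substituting $f_t^{(1,2)}=[\,t,\,f_t^{(1)\dag\prime},\,f_t^{(2)\prime}\,]'$ and $u_{i,t}^{(2)}=\lambda_i^{(1)}g_t+\lambda_i^{(3)\prime}f_t^{(3)}+u_{i,t}$, then expanding via the $C_r$-inequality, reduces everything to double sums of the form $T^{-2}\sum_t a_{i,t}b_{j,t}$ with $a_{i,t}\in\{\Lambda_i^{(1)}t,\ \Lambda_i^{(1)}f_t^{(1)\dag},\ \Lambda_i^{(2)\prime}f_t^{(2)},\ u_{i,t}^{(2)}\}$ and $b_{j,t}$ one of $u_{j,t}$, $g_t$, $f_t^{(3)}$, $u_{j,t}^{(2)}$. (Crucially no $f_t^{(2)}f_t^{(2)\prime}$ atom appears, since $u_t^{(2)}$ contains none of the components of $f_t^{*}$; that atom is exactly the one peeled off in Lemma \ref{eigenval-large}.) For each atom I would \emph{(i)} absorb the inner maximum over $\widetilde t\le T$ with Serfling's maximal inequality (eq.~(2.3) in \citet{serfling1970}, as used in the proof of Lemma \ref{eigenval-large}), which costs a $(\ln T)^2$ factor, \emph{(ii)} bound $E|\sum_{t\le T}a_{i,t}b_{j,t}|^2$ by the relevant part of Assumptions \ref{as-1}--\ref{as-3} --- $4$-th moments for the $u^{(2)}u^{(2)}$ and $f^{(3)}f^{(3)}$ atoms (Assumption \ref{as-2}\textit{(i)}); $O(T^2)$ via Assumption \ref{as-2}\textit{(ii)} for atoms pairing $f_t^{(1)\dag}$ or $f_t^{(2)}$ with an $I(0)$ factor or error; $O(T^3)$ via Assumption \ref{as-2}\textit{(iii)--(iv)} for the atoms pairing the linear trend $t$ with $u_{j,t}$, $g_t$ or $f_t^{(3)}$ --- and \emph{(iii)} convert the resulting summable moment bound into an a.s.\ rate through Lemma \ref{borelcantelli}.

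Collecting these rates, every atom not involving the linear trend contributes $O_{a.s.}(NT^{-1}\cdot\mathrm{polylog})$ or smaller (in particular the $u^{(2)}u^{(2)}$ piece is $O_{a.s.}(NT^{-1}\cdot\mathrm{polylog})$ precisely because all entries of $u_t^{(2)}$ are $I(0)$), whereas the trend atoms $T^{-2}\sum_t\Lambda_i^{(1)}t\,b_{j,t}$ yield $\sum_{i,j}|\cdot|^2=O_{a.s.}\big(N^2T^{-1}(\ln N)^{2+\epsilon}(\ln T)^{3+\epsilon}\big)$, hence $\big(\sum_{i,j}|\cdot|^2\big)^{1/2}=O_{a.s.}\big(NT^{-1/2}(\ln N)^{1+\epsilon}(\ln T)^{3/2+\epsilon}\big)=O_{a.s.}(NT^{-1/2}l_{N,T})$, which is the binding bound; taking $\widetilde t=T$ and summing the finitely many atoms gives (\ref{lambda-max-remainder}).

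I expect the main obstacle to be purely one of bookkeeping: tracking the many atoms produced by the double expansion of $f_t^{(1,2)}$ and $u_t^{(2)}$, matching each to precisely the right moment bound in Assumptions \ref{as-1}--\ref{as-3}, and --- most delicately --- checking that the $(\ln N)$ and $(\ln T)$ powers coming out of Serfling's inequality and of Lemma \ref{borelcantelli} accumulate exactly into $l_{N,T}=(\ln N)^{1+\epsilon}(\ln T)^{3/2+\epsilon}$ and no larger slowly-varying factor, so that the separation rate of Theorem \ref{eigenvalues} is not degraded. A secondary point worth emphasising is that a crude Cauchy--Schwarz-in-$t$ bound is too lossy for the trend atoms, so the maximal inequality (and hence the weak-dependence structure underlying Assumption \ref{as-2}) is genuinely needed there.
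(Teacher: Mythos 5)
Your proposal is correct and follows essentially the same route as the paper's proof: the spectral-norm (Frobenius) bound followed by the $C_r$-inequality, Serfling's maximal inequality combined with the moment bounds of Assumptions \ref{as-1}--\ref{as-3}, and Lemma \ref{borelcantelli} to convert to a.s.\ rates, with the trend-times-$I(0)$ cross terms correctly identified as the binding $O_{a.s.}(NT^{-1/2}l_{N,T})$ contribution. The paper merely packages the cross term $\Lambda^{(1,2)}f_t^{(1,2)}u_{j,t}^{(2)}$ without your full atomisation and recycles the bound (\ref{trend-1}) from Lemma \ref{remainder-1} for the $u^{(2)}u^{(2)}$ piece, but the argument is the same.
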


\begin{proof}
Let%
\begin{equation*}
\max_{1\leq p\leq N}\left\vert \nu ^{\left( p\right) }\left( \frac{1}{%
T^{2}}\sum_{t=1}^{T}u_{t}^{\left( 2\right) }u_{t}^{\left( 2\right) \prime }+%
\frac{1}{T^{2}}\sum_{t=1}^{T}\Lambda ^{\left( 1,2\right) }f_{t}^{\left(
1,2\right) }u_{t}^{\left( 2\right) \prime }+\frac{1}{T^{2}}%
\sum_{t=1}^{T}u_{t}^{\left( 2\right) }f_{t}^{\left( 1,2\right) \prime
}\Lambda ^{\left( 1,2\right) \prime }\right) \right\vert =\nu ^{\left(
\max \right) },
\end{equation*}%
for short. As before%
\begin{eqnarray}
&&\frac{1}{3}\nu ^{\left( \max \right) }  \label{theta-max} \\
&\leq &\frac{1}{3}\left( \sum_{i=1}^{N}\sum_{j=1}^{N}\left\vert \frac{1}{%
T^{2}}\sum_{t=1}^{T}u_{i,t}^{\left( 2\right) }u_{j,t}^{\left( 2\right) }+%
\frac{1}{T^{2}}\sum_{t=1}^{T}\sum_{k=1}^{r_{2}+r_{1}}\Lambda _{i,k}^{\left(
1,2\right) }f_{k,t}^{\left( 1,2\right) }u_{j,t}^{\left( 2\right) }+\frac{1}{%
T^{2}}\sum_{t=1}^{T}\sum_{k=1}^{r_{2}+r_{1}}\Lambda _{j,k}^{\left(
1,2\right) }f_{k,t}^{\left( 1,2\right) }u_{i,t}^{\left( 2\right)
}\right\vert ^{2}\right) ^{1/2}.  \notag \\
&\leq &\left( \sum_{i=1}^{N}\sum_{j=1}^{N}\left\vert \frac{1}{T^{2}}%
\sum_{t=1}^{T}u_{i,t}^{\left( 2\right) }u_{j,t}^{\left( 2\right)
}\right\vert ^{2}\right) ^{1/2}+\left(
\sum_{i=1}^{N}\sum_{j=1}^{N}\left\vert \frac{1}{T^{2}}\sum_{t=1}^{T}%
\sum_{k=1}^{r_{2}+r_{1}}\Lambda _{i,k}^{\left( 1,2\right) }f_{k,t}^{\left(
1,2\right) }u_{j,t}^{\left( 2\right) }\right\vert ^{2}\right) ^{1/2}  \notag
\\
&&+\left( \sum_{i=1}^{N}\sum_{j=1}^{N}\left\vert \frac{1}{T^{2}}%
\sum_{t=1}^{T}\sum_{k=1}^{r_{2}+r_{1}}\Lambda _{j,k}^{\left( 1,2\right)
}f_{k,t}^{\left( 1,2\right) }u_{i,t}^{\left( 2\right) }\right\vert
^{2}\right) ^{1/2}.  \notag
\end{eqnarray}%
Consider the first term; by (\ref{trend-1}), 
\begin{equation*}
\left( \sum_{i=1}^{N}\sum_{j=1}^{N}\left\vert \frac{1}{T^{2}}%
\sum_{t=1}^{T}u_{i,t}^{\left( 2\right) }u_{j,t}^{\left( 2\right)
}\right\vert ^{2}\right) ^{1/2}=O_{a.s.}\left( \frac{N}{T}\left( \ln
N\right) ^{1+\epsilon }\left( \ln T\right) ^{\left( 1+\epsilon \right)
/2}\right) .
\end{equation*}%
Similarly, considering the second term in (\ref{theta-max}) we have 
\begin{eqnarray*}
&&E\max_{h_{1},h_{2},\widetilde{t}}\sum_{i=1}^{h_{1}}\sum_{j=1}^{h_{2}}\left\vert \frac{1}{T^{2}}\sum_{t=1}^{\widetilde{t}}\sum_{k=1}^{r_{2}+r_{1}}%
\Lambda _{i,k}^{\left( 1,2\right) }f_{k,t}^{\left( 1,2\right)
}u_{j,t}^{\left( 2\right) }\right\vert ^{2}\leq
T^{-4}\sum_{i=1}^{N}\sum_{j=1}^{N}E\max_{\widetilde{t}}\left\vert
\sum_{t=1}^{\widetilde{t}}\sum_{k=1}^{r_{2}+r_{1}}\Lambda _{i,k}^{\left(
1,2\right) }f_{k,t}^{\left( 1,2\right) }u_{j,t}^{\left( 2\right)
}\right\vert ^{2} \\
&\leq &C_{0}T^{-4}\left( \ln T\right)
^{2}\sum_{i=1}^{N}\sum_{j=1}^{N}E\left\vert
\sum_{t=1}^{T}\sum_{k=1}^{r_{2}+r_{1}}\Lambda _{i,k}^{\left( 1,2\right)
}f_{k,t}^{\left( 1,2\right) }u_{j,t}^{\left( 2\right) }\right\vert ^{2} \\
&\leq &C_{0}T^{-4}\left( \ln T\right) ^{2}\left( \max_{1\leq i\leq
N}\left\Vert \Lambda _{i}^{\left( 1,2\right) }\right\Vert \right)
^{2}\sum_{i=1}^{N}\sum_{j=1}^{N}\sum_{t=1}^{T}\sum_{s=1}^{T}E\left(
f_{t}^{\left( 1,2\right) }f_{s}^{\left( 1,2\right) \prime }u_{j,t}^{\left(
2\right) }u_{j,s}^{\left( 2\right) }\right)  \\
&\leq &C_{0}T^{-4}\left( \ln T\right)
^{2}\sum_{i=1}^{N}\sum_{j=1}^{N}E\left\Vert \sum_{t=1}^{T}f_{t}^{\left(
1,2\right) }u_{j,t}^{\left( 2\right) }\right\Vert ^{2}\leq
C_{0}N^{2}T^{-1}\left( \ln T\right) ^{2},
\end{eqnarray*}%
having used equation (2.3) in \citet{serfling1970}, Assumption \ref{as-3}%
\textit{(i)} and Assumption \ref{as-2}\textit{(ii)}.\ From here henceforth,
the proof is the same as for the first tem in (\ref{theta-max}); also, the
proof for the third term in (\ref{theta-max}) is exactly the same, and it is
therefore omitted. Putting everything together, the lemma follows. 
\end{proof}

\subsection{Proofs of main results}

\begin{proof}[Proof of Lemma \protect\ref{maciej}]
When $d_{1}=0$, the lemma follows immediately from $B$ having full rank.
When $d_{1}=1$, the proof follows the arguments in \citet{maciejowska2010}.
Let%
\begin{equation*}
\mathcal{F}_{t}=\left( A|B\right) \left( 
\begin{array}{c}
t \\ 
\psi _{t}%
\end{array}%
\right) =C\left( 
\begin{array}{c}
t \\ 
\psi _{t}%
\end{array}%
\right) ;
\end{equation*}%
by Assumption \ref{maciejowska}\textit{(ii)}, $C$ has full rank. It is
therefore possible to re-write the expression above as%
\begin{equation*}
\mathcal{F}_{t}=P\left( D_{1}|D_{2}\right) \left( 
\begin{array}{c}
t \\ 
\psi _{t}%
\end{array}%
\right) ,
\end{equation*}%
where $D_{1}=\left[ 1,0,...,0\right] ^{\prime }$ is $r\times 1$, and $P$ and 
$D_{2}$ are $r\times r$ and have full rank. Among the possible matrices that
satisfy this representation one can consider $\left( D_{1}|D_{2}\right)
=\left( I_{r}|E\right) $, where $E=\left[ E_{1},...,E_{r}\right] $ is a
nonzero vector. The desired result follows immediately after computing 
\begin{equation*}
P^{-1}\mathcal{F}_{t}=\left( 
\begin{array}{c}
t+E_{1}\psi _{r,t} \\ 
\psi _{1,t}+E_{2}\psi _{r,t} \\ 
\psi _{2,t}+E_{3}\psi _{r,t} \\ 
. \\ 
\psi _{r-1,t}+E_{r}\psi _{r,t}%
\end{array}%
\right) .
\end{equation*}%
\end{proof}

\begin{proof}[Proof of Theorem \protect\ref{eigenvalues}]
We start with (\ref{lambda-x-trend})-(\ref{lambda-x-notrend}). Weyl's
inequality entails that, for $0\leq p\leq r_{1}$%
\begin{eqnarray*}
\nu _{1}^{\left( p\right) } &\geq &\nu ^{\left( p\right) }\left( \frac{1}{%
T^{3}}\sum_{t=1}^{T}\Lambda ^{\left( 1\right) }\widetilde{f}_{t}^{\left(
1\right) }\widetilde{f}_{t}^{\left( 1\right) \prime }\Lambda ^{\left(
1\right) \prime }\right)  \\
&&+\nu ^{\left( N\right) }\left( \frac{1}{T^{3}}\sum_{t=1}^{T}u_{t}^{%
\left( 1\right) }u_{t}^{\left( 1\right) \prime }+\frac{1}{T^{3}}%
\sum_{t=1}^{T}\Lambda ^{\left( 1\right) }\widetilde{f}_{t}^{\left( 1\right)
}u_{t}^{\left( 1\right) \prime }+\frac{1}{T^{3}}\sum_{t=1}^{T}u_{t}^{\left(
1\right) }\widetilde{f}_{t}^{\left( 1\right) \prime }\Lambda ^{\left(
1\right) \prime }\right) .
\end{eqnarray*}%
We already know that, by Lemma \ref{remainder-1}%
\begin{equation*}
\nu ^{\left( N\right) }\left( \frac{1}{T^{3}}\sum_{t=1}^{T}u_{t}^{\left(
1\right) }u_{t}^{\left( 1\right) \prime }+\frac{1}{T^{3}}\sum_{t=1}^{T}%
\Lambda ^{\left( 1\right) }\widetilde{f}_{t}^{\left( 1\right) }u_{t}^{\left(
1\right) \prime }+\frac{1}{T^{3}}\sum_{t=1}^{T}u_{t}^{\left( 1\right) }%
\widetilde{f}_{t}^{\left( 1\right) \prime }\Lambda ^{\left( 1\right) \prime
}\right) =O_{a.s.}\left( \frac{N}{\sqrt{T}}l_{N,T}\right) .
\end{equation*}%
Consider now%
\begin{eqnarray*}
&&\nu ^{\left( p\right) }\left( \frac{1}{T^{3}}\sum_{t=1}^{T}\Lambda
^{\left( 1\right) }\widetilde{f}_{t}^{\left( 1\right) }\widetilde{f}%
_{t}^{\left( 1\right) \prime }\Lambda ^{\left( 1\right) \prime }\right)  \\
&\geq &d_{1}^{2}\nu ^{\left( p\right) }\left( \frac{1}{T^{3}}%
\sum_{t=1}^{T}t^{2}\Lambda ^{\left( 1\right) \prime }\Lambda ^{\left(
1\right) }\right)  \\
&&+\nu ^{\left( N\right) }\left( 2\frac{d_{1}}{T^{3}}%
\sum_{t=1}^{T}tf_{t}^{\left( 1\right) \dag }\Lambda ^{\left( 1\right)
}\Lambda ^{\left( 1\right) \prime }+\frac{1}{T^{3}}\Lambda ^{\left( 1\right)
}\sum_{t=1}^{T}f_{t}^{\left( 1\right) \dag }f_{t}^{\left( 1\right) \dag
\prime }\Lambda ^{\left( 1\right) \prime }\right) 
\end{eqnarray*}%
We have%
\begin{equation*}
\nu ^{\left( p\right) }\left( \frac{1}{T^{3}}\sum_{t=1}^{T}t^{2}\Lambda
^{\left( 1\right) \prime }\Lambda ^{\left( 1\right) }\right) =\left( \frac{1%
}{T^{3}}\sum_{t=1}^{T}t^{2}\right) \nu ^{\left( p\right) }\left( \Lambda
^{\left( 1\right) \prime }\Lambda ^{\left( 1\right) }\right) \geq C_{0}N,
\end{equation*}%
in view of Assumption \ref{as-3}\textit{(ii)}. Consider now%
\begin{equation*}
\nu ^{\left( N\right) }\left( 2\frac{d_{1}}{T^{3}}\sum_{t=1}^{T}tf_{t}^{%
\left( 1\right) \dag }\Lambda ^{\left( 1\right) }\Lambda ^{\left( 1\right)
\prime }+\frac{1}{T^{3}}\sum_{t=1}^{T}\left( f_{t}^{\left( 1\right) \dag
}\right) ^{2}\Lambda ^{\left( 1\right) }\Lambda ^{\left( 1\right) \prime
}\right) ;
\end{equation*}%
by \citet[Example 2]{donsker1977} we have 
\begin{equation*}
\frac{1}{T^{3}}\sum_{t=1}^{T}\left( f_{t}^{\left( 1\right) \dag }\right)
^{2}=O_{a.s.}\left( \frac{\ln \ln T}{T}\right) .
\end{equation*}%
Also, by Assumption \ref{as-2}\textit{(iii)} and equation (2.3) in %
\citet{serfling1970} we have%
\begin{equation*}
E\max_{1\leq t\leq T}\left\vert \sum_{j=1}^{t}jf_{j}^{\left( 1\right) \dag
}\right\vert ^{2}=C_{0}T^{5}\left( \ln T\right) ^{2},
\end{equation*}%
so that by\ Lemma \ref{borelcantelli} we have%
\begin{equation*}
\frac{1}{T^{3}}\left\vert \sum_{t=1}^{T}tf_{t}^{\left( 1\right) \dag
}\right\vert =O_{a.s.}\left( T^{-1/2}\left( \ln T\right) ^{3/2+\epsilon
}\right) .
\end{equation*}%
The same steps as in the proofs of Lemmas \ref{remainder-1} and \ref%
{remainder} entail%
\begin{equation*}
\nu ^{\left( N\right) }\left( 2\frac{d_{1}}{T^{3}}\sum_{t=1}^{T}tf_{t}^{%
\left( 1\right) \dag }\Lambda ^{\left( 1\right) }\Lambda ^{\left( 1\right)
\prime }+\frac{1}{T^{3}}\sum_{t=1}^{T}\left( f_{t}^{\left( 1\right) \dag
}\right) ^{2}\Lambda ^{\left( 1\right) }\Lambda ^{\left( 1\right) \prime
}\right) =O_{a.s.}\left( \frac{N}{\sqrt{T}}l_{N,T}\right) .
\end{equation*}%
Putting everything together, the desired result follows. When $p>r_{1}$%
\begin{eqnarray*}
\nu _{1}^{\left( p\right) } &\leq &\nu ^{\left( p\right) }\left( \frac{1}{%
T^{3}}\sum_{t=1}^{T}\Lambda ^{\left( 1\right) }\widetilde{f}_{t}^{\left(
1\right) }\widetilde{f}_{t}^{\left( 1\right) \prime }\Lambda ^{\left(
1\right) \prime }\right)  \\
&&+\nu ^{\left( 1\right) }\left( \frac{1}{T^{3}}\sum_{t=1}^{T}u_{t}^{%
\left( 1\right) }u_{t}^{\left( 1\right) \prime }+\frac{1}{T^{3}}%
\sum_{t=1}^{T}\Lambda ^{\left( 1\right) }\widetilde{f}_{t}^{\left( 1\right)
}u_{t}^{\left( 1\right) \prime }+\frac{1}{T^{3}}\sum_{t=1}^{T}u_{t}^{\left(
1\right) }\widetilde{f}_{t}^{\left( 1\right) \prime }\Lambda ^{\left(
1\right) \prime }\right)  \\
&\leq &\nu ^{\left( 1\right) }\left( \frac{1}{T^{3}}\sum_{t=1}^{T}u_{t}^{%
\left( 1\right) }u_{t}^{\left( 1\right) \prime }+\frac{1}{T^{3}}%
\sum_{t=1}^{T}\Lambda ^{\left( 1\right) }\widetilde{f}_{t}^{\left( 1\right)
}u_{t}^{\left( 1\right) \prime }+\frac{1}{T^{3}}\sum_{t=1}^{T}u_{t}^{\left(
1\right) }\widetilde{f}_{t}^{\left( 1\right) \prime }\Lambda ^{\left(
1\right) \prime }\right) ;
\end{eqnarray*}%
Lemma \ref{remainder-1} immediately yields the desired result.

The proof of (\ref{lambda-x-large})-(\ref{lambda-x-small}) is very similar.
Whenever $1\leq p\leq r_{1}+r_{2}+\left( 1-r_{1}\right) d_{2}$, we have%
\begin{eqnarray*}
\nu _{2}^{\left( p\right) } &\geq &\nu ^{\left( p\right) }\left( \frac{1}{%
T^{2}}\sum_{t=1}^{T}\Lambda ^{\left( 1,2\right) }f_{t}^{\left( 1,2\right)
}f_{t}^{\left( 1,2\right) \prime }\Lambda ^{\left( 1,2\right) \prime
}\right)  \\
&&+\nu ^{\left( N\right) }\left( \frac{1}{T^{2}}\sum_{t=1}^{T}u_{t}^{%
\left( 2\right) }u_{t}^{\left( 2\right) \prime }+\frac{1}{T^{2}}%
\sum_{t=1}^{T}\Lambda ^{\left( 1,2\right) }f_{t}^{\left( 1,2\right)
}u_{t}^{\left( 2\right) \prime }+\frac{1}{T^{2}}\sum_{t=1}^{T}u_{t}^{\left(
2\right) }f_{t}^{\left( 1,2\right) \prime }\Lambda ^{\left( 1,2\right)
\prime }\right) .
\end{eqnarray*}%
By Lemma \ref{remainder} we have%
\begin{equation*}
\nu ^{\left( N\right) }\left( \frac{1}{T^{2}}\sum_{t=1}^{T}u_{t}^{\left(
2\right) }u_{t}^{\left( 2\right) \prime }+\frac{1}{T^{2}}\sum_{t=1}^{T}%
\Lambda ^{\left( 1,2\right) }f_{t}^{\left( 1,2\right) }u_{t}^{\left(
2\right) \prime }+\frac{1}{T^{2}}\sum_{t=1}^{T}u_{t}^{\left( 2\right)
}f_{t}^{\left( 1,2\right) \prime }\Lambda ^{\left( 1,2\right) \prime
}\right) =O_{a.s.}\left( \frac{N}{\sqrt{T}}l_{N,T}\right) .
\end{equation*}%
Also, using Theorem 7 in \citet{merikoski2004} and by Assumption \ref{as-3}%
\textit{(ii)}%
\begin{eqnarray*}
\nu ^{\left( p\right) }\left( \frac{1}{T^{2}}\sum_{t=1}^{T}\Lambda
^{\left( 1,2\right) }f_{t}^{\left( 1,2\right) }f_{t}^{\left( 1,2\right)
\prime }\Lambda ^{\left( 1,2\right) \prime }\right)  &\geq &\nu ^{\left(
p\right) }\left( \frac{1}{T^{2}}\sum_{t=1}^{T}f_{t}^{\left( 1,2\right)
}f_{t}^{\left( 1,2\right) \prime }\right) \nu ^{\left( \min \right)
}\left( \Lambda ^{\left( 1,2\right) \prime }\Lambda ^{\left( 1,2\right)
}\right)  \\
&\geq &C_{0}\frac{N}{\ln \ln T},
\end{eqnarray*}%
where the last passage follows from equation (\ref{large-2}) in Lemma \ref%
{eigenval-large}. Equation (\ref{lambda-x-large}) now follows readily.
Turning to (\ref{lambda-x-small}), whenever $p>r_{1}+r_{2}+\left(
1-r_{1}\right) d_{2}$,%
\begin{eqnarray*}
\nu _{2}^{\left( p\right) } &\leq &\nu ^{\left( p\right) }\left( \frac{1}{%
T^{2}}\sum_{t=1}^{T}\Lambda ^{\left( 1,2\right) }f_{t}^{\left( 1,2\right)
}f_{t}^{\left( 1,2\right) \prime }\Lambda ^{\left( 1,2\right) \prime
}\right)  \\
&&+\nu ^{\left( 1\right) }\left( \frac{1}{T^{2}}\sum_{t=1}^{T}u_{t}^{%
\left( 1\right) }u_{t}^{\left( 1\right) \prime }+\frac{1}{T^{2}}%
\sum_{t=1}^{T}\Lambda ^{\left( 1\right) }f_{t}^{\left( 1\right)
}u_{t}^{\left( 1\right) \prime }+\frac{1}{T^{2}}\sum_{t=1}^{T}u_{t}^{\left(
1\right) }f_{t}^{\left( 1\right) \prime }\Lambda ^{\left( 1\right) \prime
}\right)  \\
&\leq &\nu^{\left( 1\right) } \left( \frac{1}{T^{2}}\sum_{t=1}^{T}u_{t}^{\left( 1\right)
}u_{t}^{\left( 1\right) \prime }+\frac{1}{T^{2}}\sum_{t=1}^{T}\Lambda
^{\left( 1\right) }f_{t}^{\left( 1\right) }u_{t}^{\left( 1\right) \prime }+%
\frac{1}{T^{2}}\sum_{t=1}^{T}u_{t}^{\left( 1\right) }f_{t}^{\left( 1\right)
\prime }\Lambda ^{\left( 1\right) \prime }\right) ,
\end{eqnarray*}%
and Lemma \ref{remainder} immediately yields the desired result.
\end{proof}

\begin{proof}[Proof of Theorem \protect\ref{test-1}]

The proof is similar to that of related results in other papers - see e.g. %
\citet{trapani17}. We begin with (\ref{null-1}). Note that, under $%
H_{0,1}^{\left( p\right) }$, (\ref{lambda-x-trend}) and Lemma \ref{trapani2017}
entail that 
\begin{equation*}
P\left\{ \omega :\lim_{\min \left( N,T\right) \rightarrow \infty }\phi
_{1}^{\left( p\right) }\exp \left\{ -N^{1-\delta -\varepsilon }\right\}
=\infty \right\} =1,
\end{equation*}%
for every $\varepsilon >0$, and therefore we can henceforth assume that $%
\lim_{\min \left( N,T\right) \rightarrow \infty }\phi _{1}^{\left( p\right)
}=\infty $ and 
\begin{equation} \label{phidiv}
\left( \phi _{1}^{\left( p\right) }\right) ^{-1}=O\left(
\exp \left\{ -N^{1-\delta }\right\} \right).
\end{equation}
Let $E^{\ast }$ and $V^{\ast
} $ denote, respectively, expectation and variance conditional on $P^{\ast }$%
; we have, for $1 \leq j \leq R_{1}$%
\begin{equation*}
E^{\ast }\left( \zeta _{1,j}^{\left( p\right) }\left( u\right) \right)
=G_{1}\left( 0\right) \text{ and }V^{\ast }\left( \zeta _{1,j}^{\left(
p\right) }\left( u\right) \right) =G_{1}\left( 0\right) \left( 1-G_{1}\left(
0\right) \right) .
\end{equation*}%
Also%
\begin{eqnarray*}
&&\frac{1}{\sqrt{R_{1}}}\sum_{j=1}^{R_{1}}\left( \zeta _{1,j}^{\left(
p\right) }\left( u\right) -G_{1}\left( 0\right) \right) \\
&=&\frac{1}{\sqrt{R_{1}}}\sum_{j=1}^{R_{1}}\left( I\left( \xi _{1,j}^{\left(
p\right) }\leq 0\right) -G_{1}\left( 0\right) \right) +\frac{1}{\sqrt{R_{1}}}%
d_{u}\sum_{j=1}^{R_{1}}\left( G_{1}\left( u/\phi _{1}^{\left( p\right)
}\right) -G_{1}\left( 0\right) \right) \\
&&+\frac{1}{\sqrt{R_{1}}}\sum_{j=1}^{R_{1}}\left[ I\left( 0\leq \left\vert
\xi _{1,j}^{\left( p\right) }\right\vert \leq u/\phi _{1}^{\left( p\right)
}\right) -\left( G_{1}\left( u/\phi _{1}^{\left( p\right) }\right)
-G_{1}\left( 0\right) \right) d_{u}\right] ,
\end{eqnarray*}%
with $d_{u}=1$ for $u\geq 0$ and $-1$ otherwise. Letting $m_{G_{1}}$ denote
the upper bound for the density of $G_{1}$, we have%
\begin{equation*}
R_{1}^{-1}\int_{-\infty }^{\infty }\left( \sum_{j=1}^{R_{1}}\left(
G_{1}\left( u/\phi _{1}^{\left( p\right) }\right) -G_{1}\left( 0\right)
\right) \right) ^{2}dF_{1}\left( u\right) \leq m_{G_{1}}^{2}\frac{R_{1}}{%
\left( \phi _{1}^{\left( p\right) }\right) ^{2}}\int_{-\infty }^{\infty
}u^{2}dF_{1}\left( u\right) ,
\end{equation*}%
which drifts to zero under (\ref{r1}) by (\ref{phidiv}) and Assumption \ref{fg-1}. Also,
consider%
\begin{eqnarray*}
&&E^{\ast }\int_{-\infty }^{\infty }\left( \frac{1}{\sqrt{R_{1}}}%
\sum_{j=1}^{R_{1}}I\left( 0\leq \left\vert \xi _{1,j}^{\left( p\right)
}\right\vert \leq u/\phi _{1}^{\left( p\right) }\right) -\left( G_{1}\left(
u/\phi _{1}^{\left( p\right) }\right) -G_{1}\left( 0\right) \right)
d_{u}\right) ^{2}dF_{1}\left( u\right) \\
&=&E^{\ast }\int_{-\infty }^{\infty }\left( I\left( 0\leq \left\vert \xi
_{1,1}^{\left( p\right) }\right\vert \leq u/\phi _{1}^{\left( p\right)
}\right) -\left( G_{1}\left( u/\phi _{1}^{\left( p\right) }\right)
-G_{1}\left( 0\right) \right) d_{u}\right) ^{2}dF_{1}\left( u\right) \\
&=&\int_{-\infty }^{\infty }  V^{\ast }\{I\left( 0\leq \left\vert \xi
_{1,1}^{\left( p\right) }\right\vert \leq u/\phi _{1}^{\left( p\right)
}\right) \} dF_{1}\left( u\right)
\end{eqnarray*}%
by the independence of the $\xi _{1,j}^{\left( p\right) }$. Elementary
arguments yield%
\begin{eqnarray*}
V^{\ast }\{ I\left( 0\leq \xi _{1,1}^{\left( p\right) }\leq u/\phi _{1}^{\left(
p\right) }\right) \} &=&\left\vert G_{1}\left( u/\phi _{1}^{\left( p\right)
}\right) -G_{1}\left( 0\right) \right\vert \left( 1-\left\vert G_{1}\left(
u/\phi _{1}^{\left( p\right) }\right) -G_{1}\left( 0\right) \right\vert
\right) \\
&\leq &\left\vert G_{1}\left( u/\phi _{1}^{\left( p\right) }\right)
-G_{1}\left( 0\right) \right\vert \leq m_{G_{1}}\frac{\left\vert
u\right\vert }{\phi _{1}^{\left( p\right) }},
\end{eqnarray*}%
so that%
\begin{equation*}
\int_{-\infty }^{\infty } V^{\ast }\{ I\left( 0\leq \left\vert \xi
_{1,1}^{\left( p\right) }\right\vert \leq u/\phi _{1}^{\left( p\right)
}\right) \} dF_{1}\left( u\right) \rightarrow 0,
\end{equation*}%
as $\phi _{1}^{\left( p\right) }\rightarrow \infty $. Thus, by Markov
inequality, under (\ref{r1})%
\begin{eqnarray*}
&&\Theta_{1}^{\left( p \right)} = \int_{-\infty }^{\infty }\left( \frac{\sum_{j=1}^{R_{1}}\left( \zeta
_{1,j}^{\left( p\right) }\left( u\right) -G_{1}\left( 0\right) \right) }{%
\sqrt{R_{1}}\sqrt{G_{1}\left( 0\right) \left( 1-G_{1}\left( 0\right) \right) 
}}\right) ^{2}dF_{1}\left( u\right) \\
&=&\int_{-\infty }^{\infty }\left( \frac{\sum_{j=1}^{R_{1}}\left( I\left(
\xi _{1,j}^{\left( p\right) }\leq 0\right) -G_{1}\left( 0\right) \right) }{%
\sqrt{R_{1}}\sqrt{G_{1}\left( 0\right) \left( 1-G_{1}\left( 0\right) \right) 
}}\right) ^{2}dF_{1}\left( u\right) +o_{P^{\ast }}\left( 1\right) \overset{%
D^{\ast }}{\rightarrow }\chi^{2}_{1} ,
\end{eqnarray*}%
with the last passage following from the CLT for Bernoulli random variables and continuity.
This proves (\ref{null-1}).

We now turn to (\ref{alt-1}). By (\ref{lambda-x-notrend}) and Lemma \ref%
{trapani2017}, we have that, under $H_{A,1}^{\left( p\right) }$ 
\begin{equation*}
P\left\{ \omega :\lim_{\min \left( N,T\right) \rightarrow \infty }\phi
_{1}^{\left( p\right) }=1\right\} =1,
\end{equation*}%
and therefore we can henceforth assume that 
\begin{equation}
\lim_{\min \left( N,T\right) \rightarrow \infty }\phi _{1}^{\left( p\right)
}=1.  \label{lim-phi-alt}
\end{equation}%
We can write%
\begin{equation*}
\zeta _{1,j}^{\left( p\right) }\left( u\right) -G_{1}\left( 0\right) =\zeta
_{1,j}^{\left( p\right) }\left( u\right) -G_{1}\left( 0\right) \pm
G_{1}\left( u/\phi _{1}^{\left( p\right) }\right) ,
\end{equation*}%
so that%
\begin{eqnarray*}
&&E^{\ast }\int_{-\infty }^{\infty }\left( \frac{1}{\sqrt{R_{1}}}%
\sum_{j=1}^{R_{1}}\zeta _{1,j}^{\left( p\right) }\left( u\right)
-G_{1}\left( 0\right) \right) ^{2}dF_{1}\left( u\right)  \\
&=&E^{\ast }\int_{-\infty }^{\infty }\left( \frac{1}{\sqrt{R_{1}}}%
\sum_{j=1}^{R_{1}}\zeta _{1,j}^{\left( p\right) }\left( u\right)
-G_{1}\left( u/\phi _{1}^{\left( p\right) }\right) \right) ^{2}dF_{1}\left(
u\right) +R_{1}\int_{-\infty }^{\infty }\left( G_{1}\left( u/\phi
_{1}^{\left( p\right) }\right) -G_{1}\left( 0\right) \right)
^{2}dF_{1}\left( u\right)  \\
&=&\int_{-\infty }^{\infty }V^{\ast }\left( \zeta _{1,j}^{\left( p\right)
}\left( u\right) \right) dF_{1}\left( u\right) +R_{1}\int_{-\infty }^{\infty
}\left( G_{1}\left( u/\phi _{1}^{\left( p\right) }\right) -G_{1}\left(
0\right) \right) ^{2}dF_{1}\left( u\right) ,
\end{eqnarray*}%
having used again the independence of the $\zeta _{1,j}^{\left( p\right)
}\left( u\right) $. Clearly, $V^{\ast }\left( \zeta _{1,j}^{\left( p\right)
}\left( u\right) \right) <\infty $; also, as $\min \left( N,T\right)
\rightarrow \infty $, (\ref{lim-phi-alt}) yields 
\begin{equation*}
\int_{-\infty }^{\infty }\left( G_{1}\left( u/\phi _{1}^{\left( p\right)
}\right) -G_{1}\left( 0\right) \right) ^{2}dF_{1}\left( u\right)
=\int_{-\infty }^{\infty }\left( G_{1}\left( u\right) -G_{1}\left( 0\right)
\right) ^{2}dF_{1}\left( u\right) ,
\end{equation*}%
so that, finally%
\begin{equation*}
\frac{1}{R_{1}}\Theta^{\left( p \right)}_{1} = \frac{1}{R_{1}}\int_{-\infty }^{\infty }\left( \frac{\sum_{j=1}^{R_{1}}%
\left( \zeta _{1,j}^{\left( p\right) }\left( u\right) -G_{1}\left( 0\right)
\right) }{\sqrt{R_{1}}\sqrt{G_{1}\left( 0\right) \left( 1-G_{1}\left(
0\right) \right) }}\right) ^{2}dF_{1}\left( u\right) =\frac{\int_{-\infty
}^{\infty }\left( G_{1}\left( u\right) -G_{1}\left( 0\right) \right)
^{2}dF_{1}\left( u\right) }{G_{1}\left( 0\right) \left( 1-G_{1}\left(
0\right) \right) }+o\left( 1\right) .
\end{equation*}%
\end{proof}

\begin{proof}[Proof of Lemma \protect\ref{test-1-algo}]
Let $Z$\ be a $N\left( 0,1\right) $ random variable. By (\ref{null-1}),
using Bernstein concentration inequality we have that 
\begin{equation}
P^{\ast }\left( \Theta _{1}^{\left( p\right) }>c_{\alpha ,1}\right) =P^{\ast
}\left( Z^{2}>c_{\alpha ,1}\right) +o_{P^{\ast }}\left( 1\right) \leq 2\exp
\left( -\frac{1}{2}c_{\alpha ,1}\right) +o_{P^{\ast }}\left( 1\right) ,
\label{concentration}
\end{equation}%
which implies that $P^{\ast }\left( \Theta _{1}^{\left( p\right) }>c_{\alpha
,1}\right) $ drifts to zero as long as $c_{\alpha ,1}\rightarrow \infty $.
Therefore, under $H_{0,1}^{\left( 1\right) }$, there is zero probability of a
Type I error. Under $H_{A,1}^{\left( 1\right) }$, by (\ref{alt-1}) we have%
\begin{eqnarray*}
P^{\ast }\left( \Theta _{1}^{\left( p\right) }\leq c_{\alpha ,1}\right) 
&=&P^{\ast }\left[ \left( Z+C_{0}\sqrt{R_{1}}\right) ^{2}\leq c_{\alpha ,1}%
\right] +o_{P^{\ast }}\left( 1\right)  \\
&\leq&P^{\ast }\left( \left\vert Z \right\vert \leq \left\vert c_{\alpha ,1}\right\vert ^{1/2}-C_{0}%
\sqrt{R_{1}}\right) +o_{P^{\ast }}\left( 1\right)  \\
&\rightarrow&P^{\ast }\left( \left\vert Z \right\vert \leq -\infty \right)=0,
\end{eqnarray*}%
since $c_{\alpha ,1}=o\left( R_{1}\right) $. Thus, under the alternative
there is zero probability of a Type II\ error. This proves the desired
result. 
\end{proof}

\begin{proof}[Proof of Theorem \protect \ref{test-2}]
The proof is exactly the same as the proof of Theorem  \ref{test-1}.
\end{proof}

\begin{proof}[Proof of Lemma \protect\ref{test-2-algo}]
The proof is exactly the same as the proof of Theorem 3 in \citet{trapani17}.
\end{proof}
\clearpage
\section{Additional numerical results}
\subsection{The case $\bar \rho =0$}
\begin{table}[h!]
\centering
\caption{Average estimated number of factors with linear trend, $\wh{r}_1$.}\label{tab:case1b}
\vskip .2cm
\footnotesize{
\begin{tabular}{ll | ccc | ccc | ccc}
\hline
\hline
&& \multicolumn{3}{|c|}{$N=50$, $T=100$}&\multicolumn{3}{c|}{$N=100$, $T=100$}&\multicolumn{3}{c}{$N=200$, $T=100$}\\
$r_1\,$ & $r_2$ & $BT1$ & $BT2$ & $BT3$& $BT1$ & $BT2$ & $BT3$& $BT1$ & $BT2$ & $BT3$\\
\hline
0	&	0	&	0.00	&	0.00	&	0.00	&	0.00	&	0.00	&	0.00	&	0.00	&	0.00	&	0.00	\\
0	&	1	&	0.07	&	0.07	&	0.08	&	0.03	&	0.03	&	0.27	&	0.02	&	0.02	&	0.16	\\
0	&	2	&	0.00	&	0.01	&	0.00	&	0.00	&	0.00	&	0.04	&	0.00	&	0.00	&	0.02	\\
1	&	0	&	1.00	&	1.00	&	1.00	&	1.00	&	1.00	&	1.00	&	1.00	&	1.00	&	1.00	\\
1	&	1	&	1.00	&	1.00	&	1.00	&	1.00	&	1.00	&	1.00	&	1.00	&	1.00	&	1.00	\\
1	&	2	&	1.00	&	1.00	&	1.00	&	1.00	&	1.00	&	1.00	&	1.00	&	1.00	&	1.00	\\
\hline
\hline
&& \multicolumn{3}{|c|}{$N=100$, $T=200$}&\multicolumn{3}{c|}{$N=200$, $T=200$}&\multicolumn{3}{c}{$N=200$, $T=500$}\\
$r_1\,$ & $r_2$ & $BT1$ & $BT2$ & $BT3$& $BT1$ & $BT2$ & $BT3$& $BT1$ & $BT2$ & $BT3$\\
\hline
0	&	0	&	0.00	&	0.00	&	0.00	&	0.00	&	0.00	&	0.00	&	0.00	&	0.00	&	0.00	\\
0	&	1	&	0.01	&	0.01	&	0.18	&	0.00	&	0.00	&	0.14	&	0.00	&	0.00	&	0.05	\\
0	&	2	&	0.00	&	0.00	&	0.01	&	0.00	&	0.00	&	0.01	&	0.00	&	0.00	&	0.00	\\
1	&	0	&	1.00	&	1.00	&	1.00	&	1.00	&	1.00	&	1.00	&	1.00	&	1.00	&	1.00	\\
1	&	1	&	1.00	&	1.00	&	1.00	&	1.00	&	1.00	&	1.00	&	1.00	&	1.00	&	1.00	\\
1	&	2	&	1.00	&	1.00	&	1.00	&	1.00	&	1.00	&	1.00	&	1.00	&	1.00	&	1.00	\\
\hline
\hline
\end{tabular}
}
\end{table}

\begin{table}[h!]
\caption{Average estimated number of zero-mean $I(1)$ factors, $\wh{r}_2$, when $r_1=0$.}\label{tab:case2b}
\centering
\vskip .2cm
\footnotesize{
\begin{tabular}{ll |cccc | cccc | cccc  }
\hline
\hline
&& \multicolumn{4}{|c}{$N=50$, $T=100$}& \multicolumn{4}{|c}{$N=100$, $T=100$}& \multicolumn{4}{|c}{$N=200$, $T=100$}\\
$r_2\,$ & $r_3$ & $BT1$ & $BT2$ & $BT3$& $IC$& $BT1$ & $BT2$ & $BT3$& $IC$& $BT1$ & $BT2$ & $BT3$& $IC$\\
\hline
0	&	0	&	0.00	&	0.00	&	0.00	&	1.00	&	0.00	&	0.00	&	0.00	&	1.00	&	0.00	&	0.00	&	0.00	&	1.00	\\
0	&	1	&	0.00	&	0.00	&	0.00	&	1.00	&	0.00	&	0.00	&	0.00	&	1.00	&	0.00	&	0.00	&	0.00	&	1.00	\\
0	&	2	&	0.00	&	0.00	&	0.00	&	1.00	&	0.00	&	0.00	&	0.00	&	1.00	&	0.00	&	0.00	&	0.00	&	1.00	\\
1	&	0	&	1.00	&	1.00	&	1.00	&	1.00	&	1.00	&	1.00	&	1.00	&	1.00	&	1.00	&	1.00	&	1.00	&	1.00	\\
1	&	1	&	1.00	&	1.00	&	1.00	&	1.00	&	1.00	&	1.00	&	1.00	&	1.00	&	1.00	&	0.99	&	1.00	&	1.00	\\
1	&	2	&	1.00	&	1.00	&	1.00	&	1.00	&	0.99	&	0.99	&	1.00	&	1.00	&	0.99	&	0.99	&	1.00	&	1.00	\\
2	&	0	&	1.97	&	1.99	&	1.98	&	2.00	&	1.96	&	1.99	&	1.99	&	2.00	&	1.94	&	1.99	&	1.99	&	2.00	\\
2	&	1	&	1.86	&	1.87	&	1.85	&	2.00	&	1.91	&	1.98	&	1.99	&	1.99	&	1.84	&	1.93	&	1.98	&	2.00	\\
2	&	2	&	1.91	&	1.91	&	1.91	&	1.99	&	1.90	&	1.97	&	1.99	&	2.00	&	1.86	&	1.94	&	1.99	&	2.00	\\
\hline
\hline
&& \multicolumn{4}{|c}{$N=100$, $T=200$}& \multicolumn{4}{|c}{$N=200$, $T=200$}& \multicolumn{4}{|c}{$N=200$, $T=500$}\\
$r_2\,$ & $r_3$ & $BT1$ & $BT2$ & $BT3$& $IC$& $BT1$ & $BT2$ & $BT3$& $IC$& $BT1$ & $BT2$ & $BT3$& $IC$\\
\hline
0	&	0	&	0.00	&	0.00	&	0.00	&	1.00	&	0.00	&	0.00	&	0.00	&	1.00	&	0.00	&	0.00	&	0.00	&	1.00	\\
0	&	1	&	0.00	&	0.00	&	0.00	&	1.00	&	0.00	&	0.00	&	0.00	&	1.00	&	0.00	&	0.00	&	0.00	&	1.00	\\
0	&	2	&	0.00	&	0.00	&	0.00	&	1.00	&	0.00	&	0.00	&	0.00	&	1.00	&	0.00	&	0.00	&	0.00	&	1.00	\\
1	&	0	&	1.00	&	1.00	&	1.00	&	1.00	&	1.00	&	1.00	&	1.00	&	1.00	&	1.00	&	1.00	&	1.00	&	1.00	\\
1	&	1	&	1.00	&	1.00	&	1.00	&	1.00	&	1.00	&	1.00	&	1.00	&	1.00	&	1.00	&	1.00	&	1.00	&	1.00	\\
1	&	2	&	1.00	&	1.00	&	1.00	&	1.00	&	1.00	&	1.00	&	1.00	&	1.00	&	1.00	&	1.00	&	1.00	&	1.00	\\
2	&	0	&	2.00	&	2.00	&	2.00	&	2.00	&	1.98	&	2.00	&	2.00	&	2.00	&	1.99	&	2.00	&	2.00	&	2.00	\\
2	&	1	&	1.97	&	1.99	&	2.00	&	2.00	&	1.97	&	1.99	&	2.00	&	2.00	&	2.00	&	2.00	&	2.00	&	2.00	\\
2	&	2	&	1.98	&	1.99	&	2.00	&	2.00	&	1.97	&	1.98	&	2.00	&	2.00	&	1.99	&	1.99	&	2.00	&	2.00	\\
\hline
\hline
\end{tabular}
}
\end{table}

\begin{table}[h!]
\centering
\caption{Average estimated number of zero-mean $I(1)$ factors, $\wh{r}_2$, when $r_1=1$.}\label{tab:case3b}
\vskip .2cm
\footnotesize{
\begin{tabular}{ll |cccc | cccc | cccc  }
\hline
\hline
&& \multicolumn{4}{|c}{$N=50$, $T=100$}& \multicolumn{4}{|c}{$N=100$, $T=100$}& \multicolumn{4}{|c}{$N=200$, $T=100$}\\
$r_2\,$ & $r_3$ & $BT1$ & $BT2$ & $BT3$& $IC$& $BT1$ & $BT2$ & $BT3$& $IC$& $BT1$ & $BT2$ & $BT3$& $IC$\\
\hline
0	&	0	&	0.00	&	0.00	&	0.00	&		0.00	&	0.00	&	0.00	&	0.00	&		0.00	&	0.00	&	0.00	&	0.00	&		0.00	\\
0	&	1	&	0.00	&	0.00	&	0.00	&		0.00	&	0.00	&	0.00	&	0.00	&		0.00	&	0.00	&	0.00	&	0.00	&		0.00	\\
0	&	2	&	0.00	&	0.00	&	0.00	&		0.00	&	0.00	&	0.00	&	0.00	&		0.00	&	0.00	&	0.00	&	0.00	&		0.00	\\
1	&	0	&	0.99	&	1.00	&	1.00	&		1.00	&	0.96	&	0.95	&	0.94	&		1.00	&	1.00	&	0.99	&	1.00	&		1.00	\\
1	&	1	&	0.98	&	0.98	&	1.00	&		1.00	&	0.88	&	0.90	&	0.88	&		1.00	&	0.95	&	0.96	&	0.99	&		1.00	\\
1	&	2	&	0.97	&	0.95	&	0.99	&		0.99	&	0.90	&	0.90	&	0.88	&		1.00	&	0.97	&	0.96	&	0.99	&		1.00	\\
2	&	0	&	1.72	&	1.92	&	1.96	&		1.92	&	1.20	&	1.20	&	1.16	&		1.96	&	1.66	&	1.88	&	1.93	&		1.97	\\
2	&	1	&	1.29	&	1.52	&	1.77	&		1.60	&	1.01	&	0.99	&	0.99	&		1.82	&	1.20	&	1.43	&	1.72	&		1.90	\\
2	&	2	&	1.30	&	1.43	&	1.65	&		1.56	&	0.92	&	0.96	&	0.93	&		1.73	&	1.21	&	1.35	&	1.55	&		1.84	\\
\hline
\hline
&& \multicolumn{4}{|c}{$N=100$, $T=200$}& \multicolumn{4}{|c}{$N=200$, $T=200$}& \multicolumn{4}{|c}{$N=200$, $T=500$}\\
$r_2\,$ & $r_3$ & $BT1$ & $BT2$ & $BT3$& $IC$& $BT1$ & $BT2$ & $BT3$& $IC$& $BT1$ & $BT2$ & $BT3$& $IC$\\
\hline
0	&	0	&	0.00	&	0.00	&	0.00	&		0.00	&	0.00	&	0.00	&	0.00	&		0.00	&	0.00	&	0.00	&	0.00	&		0.00	\\
0	&	1	&	0.00	&	0.00	&	0.00	&		0.00	&	0.00	&	0.00	&	0.00	&		0.00	&	0.00	&	0.00	&	0.00	&		0.00	\\
0	&	2	&	0.00	&	0.00	&	0.01	&		0.00	&	0.00	&	0.00	&	0.00	&		0.00	&	0.00	&	0.00	&	0.00	&		0.00	\\
1	&	0	&	0.99	&	1.00	&	1.00	&		1.00	&	1.00	&	1.00	&	1.00	&		1.00	&	1.00	&	1.00	&	1.00	&		1.00	\\
1	&	1	&	0.99	&	1.00	&	1.00	&		1.00	&	1.00	&	1.00	&	1.00	&		1.00	&	1.00	&	1.00	&	1.00	&		1.00	\\
1	&	2	&	1.00	&	1.00	&	1.00	&		1.00	&	1.00	&	1.00	&	1.00	&		1.00	&	1.00	&	1.00	&	1.00	&		1.00	\\
2	&	0	&	1.91	&	1.99	&	1.99	&		1.98	&	1.89	&	1.97	&	1.99	&		1.99	&	1.98	&	1.99	&	1.99	&		1.99	\\
2	&	1	&	1.70	&	1.86	&	1.94	&		1.89	&	1.62	&	1.81	&	1.92	&		1.95	&	1.90	&	1.96	&	2.00	&		2.00	\\
2	&	2	&	1.67	&	1.81	&	1.91	&		1.84	&	1.61	&	1.75	&	1.86	&		1.93	&	1.88	&	1.95	&	1.98	&		1.98	\\
\hline
\hline
\end{tabular}
}
\end{table}
\clearpage
\subsection{The case $\bar \rho =0.8$}
\begin{table}[h!]
\centering
\caption{Average estimated number of factors with linear trend, $\wh{r}_1$.}\label{tab:case1c}
\vskip .2cm
\footnotesize{
\begin{tabular}{ll | ccc | ccc | ccc}
\hline
\hline
&& \multicolumn{3}{|c|}{$N=50$, $T=100$}&\multicolumn{3}{c|}{$N=100$, $T=100$}&\multicolumn{3}{c}{$N=200$, $T=100$}\\
$r_1\,$ & $r_2$ & $BT1$ & $BT2$ & $BT3$& $BT1$ & $BT2$ & $BT3$& $BT1$ & $BT2$ & $BT3$\\
\hline
0	&	0	&	0.00	&	0.00	&	0.00	&	0.00	&	0.00	&	0.00	&	0.00	&	0.00	&	0.00	\\
0	&	1	&	0.40	&	0.40	&	0.40	&	0.31	&	0.30	&	0.67	&	0.20	&	0.21	&	0.57	\\
0	&	2	&	0.30	&	0.28	&	0.31	&	0.16	&	0.15	&	0.46	&	0.07	&	0.07	&	0.34	\\
1	&	0	&	1.00	&	1.00	&	1.00	&	1.00	&	1.00	&	1.00	&	1.00	&	1.00	&	1.00	\\
1	&	1	&	1.00	&	1.00	&	1.00	&	1.00	&	1.00	&	1.00	&	0.99	&	0.99	&	1.00	\\
1	&	2	&	1.00	&	1.00	&	1.00	&	1.00	&	1.00	&	1.00	&	1.00	&	1.00	&	1.00	\\
\hline
\hline
&& \multicolumn{3}{|c|}{$N=100$, $T=200$}&\multicolumn{3}{c|}{$N=200$, $T=200$}&\multicolumn{3}{c}{$N=200$, $T=500$}\\
$r_1\,$ & $r_2$ & $BT1$ & $BT2$ & $BT3$& $BT1$ & $BT2$ & $BT3$& $BT1$ & $BT2$ & $BT3$\\
\hline
0	&	0	&	0.00	&	0.00	&	0.00	&	0.00	&	0.00	&	0.00	&	0.00	&	0.00	&	0.00	\\
0	&	1	&	0.22	&	0.21	&	0.56	&	0.17	&	0.18	&	0.44	&	0.09	&	0.09	&	0.32	\\
0	&	2	&	0.07	&	0.08	&	0.30	&	0.05	&	0.05	&	0.22	&	0.00	&	0.00	&	0.07	\\
1	&	0	&	1.00	&	1.00	&	1.00	&	1.00	&	1.00	&	1.00	&	1.00	&	1.00	&	1.00	\\
1	&	1	&	1.00	&	1.00	&	1.00	&	1.00	&	1.00	&	1.00	&	1.00	&	1.00	&	1.00	\\
1	&	2	&	1.00	&	1.00	&	1.00	&	1.00	&	1.00	&	1.00	&	1.00	&	1.00	&	1.00	\\
\hline
\hline
\end{tabular}
}
\end{table}

\begin{table}[h!]
\caption{Average estimated number of zero-mean $I(1)$ factors, $\wh{r}_2$, when $r_1=0$.}\label{tab:case2c}
\centering
\vskip .2cm
\footnotesize{
\begin{tabular}{ll |cccc | cccc | cccc  }
\hline
\hline
&& \multicolumn{4}{|c}{$N=50$, $T=100$}& \multicolumn{4}{|c}{$N=100$, $T=100$}& \multicolumn{4}{|c}{$N=200$, $T=100$}\\
$r_2\,$ & $r_3$ & $BT1$ & $BT2$ & $BT3$& $IC$& $BT1$ & $BT2$ & $BT3$& $IC$& $BT1$ & $BT2$ & $BT3$& $IC$\\
\hline
0	&	0	&	0.00	&	0.00	&	0.00	&		1.00	&	0.00	&	0.00	&	0.00	&		1.00	&	0.00	&	0.00	&	0.00	&		1.00	\\
0	&	1	&	0.00	&	0.00	&	0.00	&		1.00	&	0.00	&	0.00	&	0.00	&		1.00	&	0.00	&	0.00	&	0.00	&		1.00	\\
0	&	2	&	0.00	&	0.00	&	0.00	&		1.00	&	0.00	&	0.00	&	0.00	&		1.00	&	0.00	&	0.00	&	0.00	&		1.00	\\
1	&	0	&	1.00	&	1.00	&	1.00	&		1.00	&	1.00	&	1.00	&	1.00	&		1.00	&	1.00	&	1.00	&	1.00	&		1.00	\\
1	&	1	&	1.00	&	1.00	&	1.00	&		1.00	&	1.00	&	1.00	&	1.00	&		1.00	&	1.00	&	1.00	&	1.00	&		1.00	\\
1	&	2	&	1.00	&	1.00	&	1.00	&		1.00	&	1.00	&	1.00	&	1.00	&		1.00	&	1.00	&	1.00	&	1.00	&		1.00	\\
2	&	0	&	2.00	&	2.00	&	2.00	&		2.00	&	1.99	&	2.00	&	2.00	&		2.00	&	1.99	&	2.00	&	1.99	&		2.00	\\
2	&	1	&	2.00	&	1.99	&	1.99	&		2.00	&	1.99	&	2.00	&	2.00	&		2.00	&	1.99	&	2.00	&	2.00	&		2.00	\\
2	&	2	&	1.99	&	2.00	&	2.00	&		2.00	&	2.00	&	2.00	&	2.00	&		2.00	&	2.00	&	2.00	&	2.00	&		2.00	\\
\hline
\hline
&& \multicolumn{4}{|c}{$N=100$, $T=200$}& \multicolumn{4}{|c}{$N=200$, $T=200$}& \multicolumn{4}{|c}{$N=200$, $T=500$}\\
$r_2\,$ & $r_3$ & $BT1$ & $BT2$ & $BT3$& $IC$& $BT1$ & $BT2$ & $BT3$& $IC$& $BT1$ & $BT2$ & $BT3$& $IC$\\
\hline
0	&	0	&	0.00	&	0.00	&	0.00	&		1.00	&	0.00	&	0.00	&	0.00	&		1.00	&	0.00	&	0.00	&	0.00	&		1.00	\\
0	&	1	&	0.00	&	0.00	&	0.00	&		1.00	&	0.00	&	0.00	&	0.00	&		1.00	&	0.00	&	0.00	&	0.00	&		1.00	\\
0	&	2	&	0.00	&	0.00	&	0.00	&		1.00	&	0.00	&	0.00	&	0.00	&		1.00	&	0.00	&	0.00	&	0.00	&		1.00	\\
1	&	0	&	1.00	&	1.00	&	1.00	&		1.00	&	1.00	&	1.00	&	1.00	&		1.00	&	1.00	&	1.00	&	1.00	&		1.00	\\
1	&	1	&	1.00	&	1.00	&	1.00	&		1.00	&	1.00	&	1.00	&	1.00	&		1.00	&	1.00	&	1.00	&	1.00	&		1.00	\\
1	&	2	&	1.00	&	1.00	&	1.00	&		1.00	&	1.00	&	1.00	&	1.00	&		1.00	&	1.00	&	1.00	&	1.00	&		1.00	\\
2	&	0	&	2.00	&	2.00	&	2.00	&		2.00	&	2.00	&	2.00	&	2.00	&		2.00	&	2.00	&	2.00	&	2.00	&		2.00	\\
2	&	1	&	2.00	&	2.00	&	2.00	&		2.00	&	2.00	&	2.00	&	2.00	&		2.00	&	2.00	&	2.00	&	2.00	&		2.00	\\
2	&	2	&	2.00	&	2.00	&	2.00	&		2.00	&	2.00	&	2.00	&	2.00	&		2.00	&	2.00	&	2.00	&	2.00	&		2.00	\\
\hline
\hline
\end{tabular}
}
\end{table}

\begin{table}[h!]
\centering
\caption{Average estimated number of zero-mean $I(1)$ factors, $\wh{r}_2$, when $r_1=1$.}\label{tab:case3c}
\vskip .2cm
\footnotesize{
\begin{tabular}{ll |cccc | cccc | cccc  }
\hline
\hline
&& \multicolumn{4}{|c}{$N=50$, $T=100$}& \multicolumn{4}{|c}{$N=100$, $T=100$}& \multicolumn{4}{|c}{$N=200$, $T=100$}\\
$r_2\,$ & $r_3$ & $BT1$ & $BT2$ & $BT3$& $IC$& $BT1$ & $BT2$ & $BT3$& $IC$& $BT1$ & $BT2$ & $BT3$& $IC$\\
\hline
0	&	0	&	0.00	&	0.00	&	0.00	&		0.00	&	0.00	&	0.00	&	0.00	&		0.00	&	0.00	&	0.00	&	0.00	&		0.00	\\
0	&	1	&	0.00	&	0.00	&	0.00	&		0.00	&	0.00	&	0.00	&	0.00	&		0.00	&	0.00	&	0.00	&	0.00	&		0.00	\\
0	&	2	&	0.00	&	0.00	&	0.00	&		0.00	&	0.00	&	0.00	&	0.00	&		0.00	&	0.00	&	0.00	&	0.00	&		0.00	\\
1	&	0	&	1.00	&	1.00	&	1.00	&		1.00	&	1.00	&	1.00	&	1.00	&		1.00	&	1.00	&	1.00	&	1.00	&		1.00	\\
1	&	1	&	1.00	&	1.00	&	1.00	&		1.00	&	1.00	&	1.00	&	1.00	&		1.00	&	1.00	&	1.00	&	1.00	&		1.00	\\
1	&	2	&	1.00	&	0.99	&	1.00	&		1.00	&	1.00	&	1.00	&	1.00	&		1.00	&	1.00	&	1.00	&	0.99	&		1.00	\\
2	&	0	&	1.96	&	1.99	&	2.00	&		1.99	&	1.96	&	1.98	&	1.99	&		2.00	&	1.93	&	1.97	&	1.99	&		1.99	\\
2	&	1	&	1.85	&	1.91	&	1.98	&		1.94	&	1.84	&	1.94	&	1.98	&		1.97	&	1.82	&	1.93	&	1.97	&		1.99	\\
2	&	2	&	1.84	&	1.91	&	1.94	&		1.93	&	1.89	&	1.93	&	1.96	&		1.96	&	1.81	&	1.89	&	1.95	&		1.98	\\
\hline
\hline
&& \multicolumn{4}{|c}{$N=100$, $T=200$}& \multicolumn{4}{|c}{$N=200$, $T=200$}& \multicolumn{4}{|c}{$N=200$, $T=500$}\\
$r_2\,$ & $r_3$ & $BT1$ & $BT2$ & $BT3$& $IC$& $BT1$ & $BT2$ & $BT3$& $IC$& $BT1$ & $BT2$ & $BT3$& $IC$\\
\hline
0	&	0	&	0.00	&	0.00	&	0.00	&		0.00	&	0.00	&	0.00	&	0.00	&		0.00	&	0.00	&	0.00	&	0.00	&		0.00	\\
0	&	1	&	0.00	&	0.00	&	0.00	&		0.00	&	0.00	&	0.00	&	0.00	&		0.00	&	0.00	&	0.00	&	0.00	&		0.00	\\
0	&	2	&	0.00	&	0.00	&	-0.01	&		0.00	&	0.00	&	0.00	&	0.00	&		0.00	&	0.00	&	0.00	&	0.00	&		0.00	\\
1	&	0	&	0.99	&	1.00	&	1.00	&		1.00	&	1.00	&	1.00	&	1.00	&		1.00	&	1.00	&	1.00	&	1.00	&		1.00	\\
1	&	1	&	1.00	&	1.00	&	1.00	&		1.00	&	1.00	&	1.00	&	1.00	&		1.00	&	1.00	&	1.00	&	1.00	&		1.00	\\
1	&	2	&	1.00	&	1.00	&	1.00	&		1.00	&	1.00	&	1.00	&	1.00	&		1.00	&	1.00	&	1.00	&	1.00	&		1.00	\\
2	&	0	&	2.00	&	2.00	&	1.99	&		2.00	&	1.98	&	2.00	&	2.00	&		2.00	&	2.00	&	2.00	&	2.00	&		2.00	\\
2	&	1	&	1.98	&	1.99	&	2.00	&		2.00	&	1.98	&	1.99	&	2.00	&		2.00	&	2.00	&	1.99	&	2.00	&		2.00	\\
2	&	2	&	1.98	&	1.99	&	2.00	&		2.00	&	1.97	&	1.98	&	1.99	&		2.00	&	1.99	&	2.00	&	1.99	&		2.00	\\
\hline
\hline
\end{tabular}
}
\end{table}
\clearpage
\subsection{Robustness analysis}

\begin{table}[h!]
\caption{Fraction of wrong detections when estimating $r_1$ }\label{tab:rob1}
\vskip -.5cm
\phantom{Table 13: Fraction of wrong }  $N=200$, $T=200$
\centering
\vskip .2cm
\footnotesize{
\begin{tabular}{lll |p{.09\textwidth}p{.09\textwidth}p{.09\textwidth} | p{.09\textwidth}p{.09\textwidth}p{.09\textwidth}}
\hline
\hline
&&&\multicolumn{3}{c|}{$R_1=N$, $u= \sqrt{2}$, $\delta^*=10^{-5}$}&\multicolumn{3}{c}{$R_1=N$, $u= \sqrt{2}$, $\delta^*=10^{-1}$}\\
$r_1$ & $r_2\,$ & $\wh{r}_1$ & $BT1$ & $BT2$ & $BT3$ & $BT1$ & $BT2$ & $BT3$\\
\hline
1&	0&	0&		0.000&	0.000&	0.000&		0.000&	0.000&	0.000\\
1&	1&	0&		0.000&	0.000&	0.000&		0.000&	0.000&	0.000\\
1&	2&	0&		0.000&	0.000&	0.002&		0.000&	0.002&	0.002\\
0&	0&	1&		0.000&	0.000&	0.000&		0.000&	0.000&	0.000\\
0&	1&	1&		0.022&	0.024&	0.252&		0.002&	0.002&	0.104\\
0&	2&	1&		0.000&	0.000&	0.026&		0.000&	0.000&	0.002\\
\hline
&&&\multicolumn{3}{c|}{$R_1=N$, $u=5$, $\delta^*=10^{-5}$}&\multicolumn{3}{c}{$R_1=N$, $u= 5$, $\delta^*=10^{-1}$}\\
$r_1$ & $r_2\,$ & $\wh{r}_1$ & $BT1$ & $BT2$ & $BT3$ & $BT1$ & $BT2$ & $BT3$\\
\hline
1&	0&	0&		0.000&	0.000&	0.000&		0.000&	0.000&	0.000\\
1&	1&	0&		0.000&	0.000&	0.000&		0.398&	0.412&	0.000\\
1&	2&	0&		0.000&	0.002&	0.002&		0.412&	0.388&	0.002\\
0&	0&	1&		0.000&	0.000&	0.000&		0.000&	0.000&	0.000\\
0&	1&	1&		0.000&	0.000&	0.086&		0.000&	0.000&	0.024\\
0&	2&	1&		0.000&	0.000&	0.000&		0.000&	0.000&	0.000\\
\hline
&&&\multicolumn{3}{c|}{$R_1=\lfloor N/2\rfloor$, $u=\sqrt{2}$, $\delta^*=10^{-5}$}&\multicolumn{3}{c}{$R_1=\lfloor N/2\rfloor$, $u= \sqrt 2$, $\delta^*=10^{-1}$}\\
$r_1$ & $r_2\,$ & $\wh{r}_1$ & $BT1$ & $BT2$ & $BT3$ & $BT1$ & $BT2$ & $BT3$\\
\hline
1&	0&	0&		0.000&	0.000&	0.000&		0.000&	0.000&	0.000\\
1&	1&	0&		0.000&	0.002&	0.000&		0.000&	0.000&	0.000\\
1&	2&	0&		0.000&	0.000&	0.000&		0.000&	0.002&	0.000\\
0&	0&	1&		0.000&	0.000&	0.000&		0.000&	0.000&	0.000\\
0&	1&	1&		0.068&	0.062&	0.326&		0.018&	0.012&	0.192\\
0&	2&	1&		0.002&	0.004&	0.050&		0.000&	0.000&	0.010\\
\hline
&&&\multicolumn{3}{c|}{$R_1=\lfloor N/2\rfloor$, $u=5$, $\delta^*=10^{-5}$}&\multicolumn{3}{c}{$R_1=\lfloor N/2\rfloor$, $u= 5$, $\delta^*=10^{-1}$}\\
$r_1$ & $r_2\,$ & $\wh{r}_1$ & $BT1$ & $BT2$ & $BT3$ & $BT1$ & $BT2$ & $BT3$\\
\hline
1&	0&	0&		0.000&	0.000&	0.000&		0.000&	0.000&	0.000\\
1&	1&	0&		0.000&	0.000&	0.000&		0.134&	0.148&	0.000\\
1&	2&	0&		0.000&	0.002&	0.000&		0.124&	0.162&	0.000\\
0&	0&	1&		0.000&	0.000&	0.000&		0.000&	0.000&	0.000\\
0&	1&	1&		0.006&	0.004&	0.128&		0.000&	0.000&	0.038\\
0&	2&	1&		0.000&	0.000&	0.008&		0.000&	0.000&	0.000\\
\hline
\hline
\end{tabular}
}
\end{table}

\begin{table}[h!]
\caption{Fraction of wrong detections when estimating $r_1$ }\label{tab:rob2}
\vskip -.5cm
\phantom{Table 13: Fraction of wrong }  $N=200$, $T=500$
\centering
\vskip .2cm
\footnotesize{
\begin{tabular}{lll |p{.09\textwidth}p{.09\textwidth}p{.09\textwidth} | p{.09\textwidth}p{.09\textwidth}p{.09\textwidth}}
\hline
\hline
&&&\multicolumn{3}{c|}{$R_1=N$, $u= \sqrt{2}$, $\delta^*=10^{-5}$}&\multicolumn{3}{c}{$R_1=N$, $u= \sqrt{2}$, $\delta^*=10^{-1}$}\\
$r_1$ & $r_2\,$ & $\wh{r}_1$ & $BT1$ & $BT2$ & $BT3$ & $BT1$ & $BT2$ & $BT3$\\
\hline
1	&	0	&	0	&	0.000	&	0.000	&	0.002	&	0.000	&	0.000	&	0.000	\\
1	&	1	&	0	&	0.000	&	0.000	&	0.002	&	0.000	&	0.002	&	0.000	\\
1	&	2	&	0	&	0.000	&	0.002	&	0.000	&	0.000	&	0.000	&	0.000	\\
0	&	0	&	1	&	0.000	&	0.000	&	0.000	&	0.000	&	0.000	&	0.000	\\
0	&	1	&	1	&	0.008	&	0.008	&	0.136	&	0.000	&	0.000	&	0.046	\\
0	&	2	&	1	&	0.000	&	0.000	&	0.008	&	0.000	&	0.000	&	0.000	\\
\hline
&&&\multicolumn{3}{c|}{$R_1=N$, $u=5$, $\delta^*=10^{-5}$}&\multicolumn{3}{c}{$R_1=N$, $u= 5$, $\delta^*=10^{-1}$}\\
$r_1$ & $r_2\,$ & $\wh{r}_1$ & $BT1$ & $BT2$ & $BT3$ & $BT1$ & $BT2$ & $BT3$\\
\hline
1	&	0	&	0	&	0.000	&	0.000	&	0.002	&	0.000	&	0.000	&	0.002	\\
1	&	1	&	0	&	0.000	&	0.000	&	0.002	&	0.000	&	0.000	&	0.002	\\
1	&	2	&	0	&	0.000	&	0.002	&	0.000	&	0.000	&	0.000	&	0.000	\\
0	&	0	&	1	&	0.000	&	0.000	&	0.000	&	0.000	&	0.000	&	0.000	\\
0	&	1	&	1	&	0.000	&	0.000	&	0.032	&	0.000	&	0.000	&	0.006	\\
0	&	2	&	1	&	0.000	&	0.000	&	0.000	&	0.000	&	0.000	&	0.000	\\
\hline
&&&\multicolumn{3}{c|}{$R_1=\lfloor N/2\rfloor$, $u=\sqrt{2}$, $\delta^*=10^{-5}$}&\multicolumn{3}{c}{$R_1=\lfloor N/2\rfloor$, $u= \sqrt 2$, $\delta^*=10^{-1}$}\\
$r_1$ & $r_2\,$ & $\wh{r}_1$ & $BT1$ & $BT2$ & $BT3$ & $BT1$ & $BT2$ & $BT3$\\
\hline
1	&	0	&	0	&	0.000	&	0.000	&	0.000	&	0.000	&	0.000	&	0.000	\\
1	&	1	&	0	&	0.000	&	0.000	&	0.000	&	0.000	&	0.002	&	0.000	\\
1	&	2	&	0	&	0.000	&	0.000	&	0.000	&	0.000	&	0.000	&	0.000	\\
0	&	0	&	1	&	0.000	&	0.000	&	0.000	&	0.000	&	0.000	&	0.000	\\
0	&	1	&	1	&	0.020	&	0.028	&	0.198	&	0.006	&	0.002	&	0.088	\\
0	&	2	&	1	&	0.000	&	0.000	&	0.008	&	0.000	&	0.000	&	0.000	\\
\hline
&&&\multicolumn{3}{c|}{$R_1=\lfloor N/2\rfloor$, $u=5$, $\delta^*=10^{-5}$}&\multicolumn{3}{c}{$R_1=\lfloor N/2\rfloor$, $u= 5$, $\delta^*=10^{-1}$}\\
$r_1$ & $r_2\,$ & $\wh{r}_1$ & $BT1$ & $BT2$ & $BT3$ & $BT1$ & $BT2$ & $BT3$\\
\hline
1	&	0	&	0	&	0.000	&	0.000	&	0.000	&	0.000	&	0.000	&	0.000	\\
1	&	1	&	0	&	0.000	&	0.000	&	0.000	&	0.000	&	0.002	&	0.000	\\
1	&	2	&	0	&	0.000	&	0.000	&	0.000	&	0.000	&	0.000	&	0.000	\\
0	&	0	&	1	&	0.000	&	0.000	&	0.000	&	0.000	&	0.000	&	0.000	\\
0	&	1	&	1	&	0.002	&	0.002	&	0.052	&	0.000	&	0.000	&	0.016	\\
0	&	2	&	1	&	0.000	&	0.000	&	0.000	&	0.000	&	0.000	&	0.000	\\
\hline
\hline
\end{tabular}
}
\end{table}

\begin{table}[h!]
\caption{Fraction of wrong detections when estimating $r_2$ }\label{tab:rob3}
\vskip -.5cm
\phantom{Table 13: Fraction of wrong }  $N=200$, $T=200$,  $r_1=0$ - part 1
\centering
\vskip .2cm
\footnotesize{
\begin{tabular}{lll |p{.09\textwidth}p{.09\textwidth}p{.09\textwidth} | p{.09\textwidth}p{.09\textwidth}p{.09\textwidth}}
\hline
\hline
&&&\multicolumn{3}{c|}{$R_2=N$ ($p=1$), $R_2=\lfloor N/3\rfloor$ ($p>1$) }&\multicolumn{3}{c}{$R_2=N$ ($p=1$), $R_2=\lfloor N/3\rfloor$ ($p>1$) }\\
&&&\multicolumn{3}{c|}{$u= \sqrt{2}$, $\delta^*=10^{-5}$}&\multicolumn{3}{c}{$u= \sqrt{2}$, $\delta^*=10^{-1}$}\\
$r_2$ & $r_3\,$ & $\wh{r}_2$ & $BT1$ & $BT2$ & $BT3$ & $BT1$ & $BT2$ & $BT3$\\
\hline
1	&	0	&	$<r_2$	&	0.000	&	0.000	&	0.000	&	0.000	&	0.000	&	0.000	\\
1	&	1	&	$<r_2$	&	0.000	&	0.002	&	0.000	&	0.002	&	0.000	&	0.000	\\
1	&	2	&	$<r_2$	&	0.000	&	0.000	&	0.000	&	0.000	&	0.000	&	0.000	\\
2	&	0	&	$<r_2$	&	0.002	&	0.002	&	0.002	&	0.024	&	0.002	&	0.002	\\
2	&	1	&	$<r_2$	&	0.014	&	0.004	&	0.004	&	0.040	&	0.012	&	0.006	\\
2	&	2	&	$<r_2$	&	0.002	&	0.002	&	0.000	&	0.020	&	0.006	&	0.002	\\
\hline
&&&\multicolumn{3}{c|}{$R_2=N$ ($p=1$), $R_2=\lfloor N/3\rfloor$ ($p>1$) }&\multicolumn{3}{c}{$R_2=N$ ($p=1$), $R_2=\lfloor N/3\rfloor$ ($p>1$) }\\
&&&\multicolumn{3}{c|}{$u= 5$, $\delta^*=10^{-5}$}&\multicolumn{3}{c}{$u=5$, $\delta^*=10^{-1}$}\\
$r_2$ & $r_3\,$ & $\wh{r}_2$ & $BT1$ & $BT2$ & $BT3$ & $BT1$ & $BT2$ & $BT3$\\
\hline
1	&	0	&	$<r_2$	&	0.000	&	0.000	&	0.000	&	0.002	&	0.002	&	0.000	\\
1	&	1	&	$<r_2$	&	0.002	&	0.002	&	0.000	&	0.018	&	0.018	&	0.002	\\
1	&	2	&	$<r_2$	&	0.000	&	0.000	&	0.000	&	0.006	&	0.008	&	0.000	\\
2	&	0	&	$<r_2$	&	0.042	&	0.002	&	0.002	&	0.116	&	0.020	&	0.014	\\
2	&	1	&	$<r_2$	&	0.132	&	0.032	&	0.012	&	0.406	&	0.176	&	0.042	\\
2	&	2	&	$<r_2$	&	0.072	&	0.024	&	0.008	&	0.342	&	0.164	&	0.076	\\
\hline
&&&\multicolumn{3}{c|}{$R_2=\lfloor N/2\rfloor$ ($p=1$), $R_2=\lfloor N/3\rfloor$ ($p>1$) }&\multicolumn{3}{c}{$R_2=\lfloor N/2\rfloor$ ($p=1$), $R_2=\lfloor N/3\rfloor$ ($p>1$) }\\
&&&\multicolumn{3}{c|}{$u=  \sqrt{2}$, $\delta^*=10^{-5}$}&\multicolumn{3}{c}{$u= \sqrt{2}$, $\delta^*=10^{-1}$}\\
$r_2$ & $r_3\,$ & $\wh{r}_2$ & $BT1$ & $BT2$ & $BT3$ & $BT1$ & $BT2$ & $BT3$\\
\hline
1	&	0	&	$<r_2$	&	0.000	&	0.000	&	0.000	&	0.000	&	0.000	&	0.000	\\
1	&	1	&	$<r_2$	&	0.000	&	0.000	&	0.000	&	0.000	&	0.000	&	0.000	\\
1	&	2	&	$<r_2$	&	0.000	&	0.002	&	0.000	&	0.000	&	0.002	&	0.000	\\
2	&	0	&	$<r_2$	&	0.006	&	0.000	&	0.000	&	0.020	&	0.000	&	0.000	\\
2	&	1	&	$<r_2$	&	0.006	&	0.002	&	0.002	&	0.028	&	0.008	&	0.002	\\
2	&	2	&	$<r_2$	&	0.004	&	0.002	&	0.004	&	0.034	&	0.016	&	0.006	\\
\hline
&&&\multicolumn{3}{c|}{$R_2=\lfloor N/2\rfloor$ ($p=1$), $R_2=\lfloor N/3\rfloor$ ($p>1$) }&\multicolumn{3}{c}{$R_2=\lfloor N/2\rfloor$ ($p=1$), $R_2=\lfloor N/3\rfloor$ ($p>1$) }\\
&&&\multicolumn{3}{c|}{$u=  5$, $\delta^*=10^{-5}$}&\multicolumn{3}{c}{$u=5$, $\delta^*=10^{-1}$}\\
$r_2$ & $r_3\,$ & $\wh{r}_2$ & $BT1$ & $BT2$ & $BT3$ & $BT1$ & $BT2$ & $BT3$\\
\hline
1	&	0	&	$<r_2$	&	0.000	&	0.000	&	0.000	&	0.000	&	0.000	&	0.000	\\
1	&	1	&	$<r_2$	&	0.004	&	0.002	&	0.000	&	0.024	&	0.022	&	0.004	\\
1	&	2	&	$<r_2$	&	0.000	&	0.002	&	0.000	&	0.010	&	0.010	&	0.000	\\
2	&	0	&	$<r_2$	&	0.034	&	0.004	&	0.002	&	0.096	&	0.018	&	0.014	\\
2	&	1	&	$<r_2$	&	0.100	&	0.022	&	0.006	&	0.382	&	0.134	&	0.022	\\
2	&	2	&	$<r_2$	&	0.110	&	0.052	&	0.018	&	0.350	&	0.198	&	0.098	\\
\hline
\hline
\end{tabular}
}
\end{table}

\begin{table}[h!]
\caption{Fraction of wrong detections when estimating $r_2$ }\label{tab:rob4}
\vskip -.5cm
\phantom{Table 13: Fraction of wrong }  $N=200$, $T=200$,  $r_1=0$ - part 2
\centering
\vskip .2cm
\footnotesize{
\begin{tabular}{lll |p{.09\textwidth}p{.09\textwidth}p{.09\textwidth} | p{.09\textwidth}p{.09\textwidth}p{.09\textwidth}}
\hline
\hline
&&&\multicolumn{3}{c|}{$R_2=N$ ($p=1$), $R_2=N$ ($p>1$) }&\multicolumn{3}{c}{$R_2=N$ ($p=1$), $R_2=N$ ($p>1$) }\\
&&&\multicolumn{3}{c|}{$u=  \sqrt 2$, $\delta^*=10^{-5}$}&\multicolumn{3}{c}{$u=\sqrt 2$, $\delta^*=10^{-1}$}\\
$r_2$ & $r_3\,$ & $\wh{r}_2$ & $BT1$ & $BT2$ & $BT3$ & $BT1$ & $BT2$ & $BT3$\\
\hline
1	&	0	&	$<r_2$	&	0.000	&	0.000	&	0.000	&	0.000	&	0.000	&	0.000	\\
1	&	1	&	$<r_2$	&	0.000	&	0.000	&	0.000	&	0.000	&	0.000	&	0.000	\\
1	&	2	&	$<r_2$	&	0.000	&	0.000	&	0.000	&	0.000	&	0.000	&	0.000	\\
2	&	0	&	$<r_2$	&	0.028	&	0.000	&	0.000	&	0.052	&	0.014	&	0.004	\\
2	&	1	&	$<r_2$	&	0.030	&	0.014	&	0.002	&	0.140	&	0.044	&	0.012	\\
2	&	2	&	$<r_2$	&	0.022	&	0.002	&	0.000	&	0.136	&	0.054	&	0.016	\\
\hline
&&&\multicolumn{3}{c|}{$R_2=N$ ($p=1$), $R_2=N$ ($p>1$) }&\multicolumn{3}{c}{$R_2=N$ ($p=1$), $R_2=N$ ($p>1$) }\\
&&&\multicolumn{3}{c|}{$u= 5$, $\delta^*=10^{-5}$}&\multicolumn{3}{c}{$u=5$, $\delta^*=10^{-1}$}\\
$r_2$ & $r_3\,$ & $\wh{r}_2$ & $BT1$ & $BT2$ & $BT3$ & $BT1$ & $BT2$ & $BT3$\\
\hline
1	&	0	&	$<r_2$	&	0.000	&	0.000	&	0.000	&	0.000	&	0.000	&	0.000	\\
1	&	1	&	$<r_2$	&	0.000	&	0.000	&	0.000	&	0.010	&	0.012	&	0.004	\\
1	&	2	&	$<r_2$	&	0.000	&	0.000	&	0.000	&	0.026	&	0.026	&	0.000	\\
2	&	0	&	$<r_2$	&	0.062	&	0.016	&	0.008	&	0.178	&	0.034	&	0.026	\\
2	&	1	&	$<r_2$	&	0.172	&	0.048	&	0.014	&	0.512	&	0.218	&	0.060	\\
2	&	2	&	$<r_2$	&	0.172	&	0.076	&	0.018	&	0.522	&	0.312	&	0.152	\\
\hline
&&&\multicolumn{3}{c|}{$R_2=\lfloor N/2\rfloor$ ($p=1$), $R_2=\lfloor N/2\rfloor$ ($p>1$) }&\multicolumn{3}{c}{$R_2=\lfloor N/2\rfloor$ ($p=1$), $R_2=\lfloor N/2\rfloor$ ($p>1$) }\\
&&&\multicolumn{3}{c|}{$u=  \sqrt 2$, $\delta^*=10^{-5}$}&\multicolumn{3}{c}{$u= \sqrt 2$, $\delta^*=10^{-1}$}\\
$r_2$ & $r_3\,$ & $\wh{r}_2$ & $BT1$ & $BT2$ & $BT3$ & $BT1$ & $BT2$ & $BT3$\\
\hline
1	&	0	&	$<r_2$	&	0.000	&	0.000	&	0.000	&	0.000	&	0.000	&	0.000	\\
1	&	1	&	$<r_2$	&	0.000	&	0.000	&	0.000	&	0.000	&	0.000	&	0.000	\\
1	&	2	&	$<r_2$	&	0.000	&	0.000	&	0.000	&	0.000	&	0.000	&	0.000	\\
2	&	0	&	$<r_2$	&	0.006	&	0.000	&	0.000	&	0.022	&	0.002	&	0.000	\\
2	&	1	&	$<r_2$	&	0.016	&	0.004	&	0.002	&	0.082	&	0.020	&	0.004	\\
2	&	2	&	$<r_2$	&	0.010	&	0.010	&	0.000	&	0.072	&	0.032	&	0.008	\\
\hline
&&&\multicolumn{3}{c|}{$R_2=\lfloor N/2\rfloor$ ($p=1$), $R_2=\lfloor N/2\rfloor$ ($p>1$) }&\multicolumn{3}{c}{$R_2=\lfloor N/2\rfloor$ ($p=1$), $R_2=\lfloor N/2\rfloor$ ($p>1$) }\\
&&&\multicolumn{3}{c|}{$u=  5$, $\delta^*=10^{-5}$}&\multicolumn{3}{c}{$u= 5$, $\delta^*=10^{-1}$}\\
$r_2$ & $r_3\,$ & $\wh{r}_2$ & $BT1$ & $BT2$ & $BT3$ & $BT1$ & $BT2$ & $BT3$\\
\hline
1	&	0	&	$<r_2$	&	0.000	&	0.000	&	0.000	&	0.000	&	0.000	&	0.000	\\
1	&	1	&	$<r_2$	&	0.000	&	0.000	&	0.000	&	0.008	&	0.008	&	0.000	\\
1	&	2	&	$<r_2$	&	0.000	&	0.000	&	0.000	&	0.004	&	0.004	&	0.000	\\
2	&	0	&	$<r_2$	&	0.032	&	0.006	&	0.002	&	0.106	&	0.016	&	0.010	\\
2	&	1	&	$<r_2$	&	0.138	&	0.036	&	0.008	&	0.428	&	0.186	&	0.048	\\
2	&	2	&	$<r_2$	&	0.122	&	0.058	&	0.022	&	0.418	&	0.240	&	0.106	\\
\hline
\hline
\end{tabular}
}
\end{table}

\begin{table}[h!]
\caption{Fraction of wrong detections when estimating $r_2$ }\label{tab:rob5}
\vskip -.5cm
\phantom{Table 13: Fraction of wrong }  $N=200$, $T=500$,  $r_1=0$ - part 1
\centering
\vskip .2cm
\footnotesize{
\begin{tabular}{lll |p{.09\textwidth}p{.09\textwidth}p{.09\textwidth} | p{.09\textwidth}p{.09\textwidth}p{.09\textwidth}}
\hline
\hline
&&&\multicolumn{3}{c|}{$R_2=N$ ($p=1$), $R_2=\lfloor N/3\rfloor$ ($p>1$) }&\multicolumn{3}{c}{$R_2=N$ ($p=1$), $R_2=\lfloor N/3\rfloor$ ($p>1$) }\\
&&&\multicolumn{3}{c|}{$u= \sqrt{2}$, $\delta^*=10^{-5}$}&\multicolumn{3}{c}{$u= \sqrt{2}$, $\delta^*=10^{-1}$}\\
$r_2$ & $r_3\,$ & $\wh{r}_2$ & $BT1$ & $BT2$ & $BT3$ & $BT1$ & $BT2$ & $BT3$\\
\hline
1	&	0	&	$<r_2$	&	0.000	&	0.000	&	0.002	&	0.000	&	0.000	&	0.002	\\
1	&	1	&	$<r_2$	&	0.002	&	0.002	&	0.000	&	0.002	&	0.004	&	0.000	\\
1	&	2	&	$<r_2$	&	0.000	&	0.000	&	0.000	&	0.000	&	0.000	&	0.000	\\
2	&	0	&	$<r_2$	&	0.006	&	0.000	&	0.000	&	0.008	&	0.000	&	0.002	\\
2	&	1	&	$<r_2$	&	0.000	&	0.000	&	0.000	&	0.008	&	0.000	&	0.000	\\
2	&	2	&	$<r_2$	&	0.002	&	0.004	&	0.002	&	0.010	&	0.006	&	0.002	\\
\hline
&&&\multicolumn{3}{c|}{$R_2=N$ ($p=1$), $R_2=\lfloor N/3\rfloor$ ($p>1$) }&\multicolumn{3}{c}{$R_2=N$ ($p=1$), $R_2=\lfloor N/3\rfloor$ ($p>1$) }\\
&&&\multicolumn{3}{c|}{$u= 5$, $\delta^*=10^{-5}$}&\multicolumn{3}{c}{$u=5$, $\delta^*=10^{-1}$}\\
$r_2$ & $r_3\,$ & $\wh{r}_2$ & $BT1$ & $BT2$ & $BT3$ & $BT1$ & $BT2$ & $BT3$\\
\hline
1	&	0	&	$<r_2$	&	0.000	&	0.000	&	0.002	&	0.000	&	0.000	&	0.002	\\
1	&	1	&	$<r_2$	&	0.002	&	0.004	&	0.000	&	0.004	&	0.004	&	0.002	\\
1	&	2	&	$<r_2$	&	0.000	&	0.000	&	0.000	&	0.000	&	0.000	&	0.000	\\
2	&	0	&	$<r_2$	&	0.014	&	0.004	&	0.002	&	0.032	&	0.006	&	0.006	\\
2	&	1	&	$<r_2$	&	0.026	&	0.004	&	0.000	&	0.124	&	0.036	&	0.004	\\
2	&	2	&	$<r_2$	&	0.022	&	0.012	&	0.004	&	0.094	&	0.054	&	0.014	\\
\hline
&&&\multicolumn{3}{c|}{$R_2=\lfloor N/2\rfloor$ ($p=1$), $R_2=\lfloor N/3\rfloor$ ($p>1$) }&\multicolumn{3}{c}{$R_2=\lfloor N/2\rfloor$ ($p=1$), $R_2=\lfloor N/3\rfloor$ ($p>1$) }\\
&&&\multicolumn{3}{c|}{$u=  \sqrt{2}$, $\delta^*=10^{-5}$}&\multicolumn{3}{c}{$u= \sqrt{2}$, $\delta^*=10^{-1}$}\\
$r_2$ & $r_3\,$ & $\wh{r}_2$ & $BT1$ & $BT2$ & $BT3$ & $BT1$ & $BT2$ & $BT3$\\
\hline
1	&	0	&	$<r_2$	&	0.000	&	0.000	&	0.000	&	0.000	&	0.000	&	0.000	\\
1	&	1	&	$<r_2$	&	0.000	&	0.000	&	0.000	&	0.000	&	0.000	&	0.000	\\
1	&	2	&	$<r_2$	&	0.000	&	0.002	&	0.000	&	0.000	&	0.002	&	0.000	\\
2	&	0	&	$<r_2$	&	0.002	&	0.000	&	0.000	&	0.002	&	0.000	&	0.000	\\
2	&	1	&	$<r_2$	&	0.004	&	0.000	&	0.000	&	0.008	&	0.004	&	0.000	\\
2	&	2	&	$<r_2$	&	0.000	&	0.000	&	0.002	&	0.002	&	0.000	&	0.002	\\
\hline
&&&\multicolumn{3}{c|}{$R_2=\lfloor N/2\rfloor$ ($p=1$), $R_2=\lfloor N/3\rfloor$ ($p>1$) }&\multicolumn{3}{c}{$R_2=\lfloor N/2\rfloor$ ($p=1$), $R_2=\lfloor N/3\rfloor$ ($p>1$) }\\
&&&\multicolumn{3}{c|}{$u=  5$, $\delta^*=10^{-5}$}&\multicolumn{3}{c}{$u=5$, $\delta^*=10^{-1}$}\\
$r_2$ & $r_3\,$ & $\wh{r}_2$ & $BT1$ & $BT2$ & $BT3$ & $BT1$ & $BT2$ & $BT3$\\
\hline
1	&	0	&	$<r_2$	&	0.000	&	0.000	&	0.000	&	0.000	&	0.000	&	0.000	\\
1	&	1	&	$<r_2$	&	0.000	&	0.000	&	0.000	&	0.000	&	0.000	&	0.000	\\
1	&	2	&	$<r_2$	&	0.000	&	0.002	&	0.000	&	0.000	&	0.002	&	0.000	\\
2	&	0	&	$<r_2$	&	0.010	&	0.000	&	0.000	&	0.034	&	0.002	&	0.000	\\
2	&	1	&	$<r_2$	&	0.024	&	0.004	&	0.000	&	0.094	&	0.030	&	0.004	\\
2	&	2	&	$<r_2$	&	0.012	&	0.000	&	0.002	&	0.088	&	0.030	&	0.008	\\
\hline
\hline
\end{tabular}
}
\end{table}

\begin{table}[h!]
\caption{Fraction of wrong detections when estimating $r_2$ }\label{tab:rob6}
\vskip -.5cm
\phantom{Table 13: Fraction of wrong }  $N=200$, $T=500$,  $r_1=0$ - part 2
\centering
\vskip .2cm
\footnotesize{
\begin{tabular}{lll |p{.09\textwidth}p{.09\textwidth}p{.09\textwidth} | p{.09\textwidth}p{.09\textwidth}p{.09\textwidth}}
\hline
\hline
&&&\multicolumn{3}{c|}{$R_2=N$ ($p=1$), $R_2=N$ ($p>1$) }&\multicolumn{3}{c}{$R_2=N$ ($p=1$), $R_2=N$ ($p>1$) }\\
&&&\multicolumn{3}{c|}{$u=  \sqrt 2$, $\delta^*=10^{-5}$}&\multicolumn{3}{c}{$u=\sqrt 2$, $\delta^*=10^{-1}$}\\
$r_2$ & $r_3\,$ & $\wh{r}_2$ & $BT1$ & $BT2$ & $BT3$ & $BT1$ & $BT2$ & $BT3$\\
\hline
1	&	0	&	$<r_2$	&	0.000	&	0.000	&	0.000	&	0.000	&	0.000	&	0.000	\\
1	&	1	&	$<r_2$	&	0.000	&	0.000	&	0.000	&	0.000	&	0.000	&	0.000	\\
1	&	2	&	$<r_2$	&	0.000	&	0.000	&	0.002	&	0.000	&	0.000	&	0.002	\\
2	&	0	&	$<r_2$	&	0.004	&	0.000	&	0.000	&	0.018	&	0.000	&	0.000	\\
2	&	1	&	$<r_2$	&	0.008	&	0.002	&	0.000	&	0.020	&	0.008	&	0.000	\\
2	&	2	&	$<r_2$	&	0.004	&	0.000	&	0.000	&	0.034	&	0.008	&	0.000	\\
\hline
&&&\multicolumn{3}{c|}{$R_2=N$ ($p=1$), $R_2=N$ ($p>1$) }&\multicolumn{3}{c}{$R_2=N$ ($p=1$), $R_2=N$ ($p>1$) }\\
&&&\multicolumn{3}{c|}{$u= 5$, $\delta^*=10^{-5}$}&\multicolumn{3}{c}{$u=5$, $\delta^*=10^{-1}$}\\
$r_2$ & $r_3\,$ & $\wh{r}_2$ & $BT1$ & $BT2$ & $BT3$ & $BT1$ & $BT2$ & $BT3$\\
\hline
1	&	0	&	$<r_2$	&	0.000	&	0.000	&	0.000	&	0.000	&	0.000	&	0.000	\\
1	&	1	&	$<r_2$	&	0.000	&	0.000	&	0.000	&	0.002	&	0.002	&	0.000	\\
1	&	2	&	$<r_2$	&	0.000	&	0.000	&	0.002	&	0.000	&	0.002	&	0.002	\\
2	&	0	&	$<r_2$	&	0.022	&	0.000	&	0.000	&	0.054	&	0.006	&	0.000	\\
2	&	1	&	$<r_2$	&	0.030	&	0.012	&	0.002	&	0.152	&	0.044	&	0.012	\\
2	&	2	&	$<r_2$	&	0.038	&	0.010	&	0.002	&	0.160	&	0.074	&	0.026	\\
\hline
&&&\multicolumn{3}{c|}{$R_2=\lfloor N/2\rfloor$ ($p=1$), $R_2=\lfloor N/2\rfloor$ ($p>1$) }&\multicolumn{3}{c}{$R_2=\lfloor N/2\rfloor$ ($p=1$), $R_2=\lfloor N/2\rfloor$ ($p>1$) }\\
&&&\multicolumn{3}{c|}{$u=  \sqrt 2$, $\delta^*=10^{-5}$}&\multicolumn{3}{c}{$u= \sqrt 2$, $\delta^*=10^{-1}$}\\
$r_2$ & $r_3\,$ & $\wh{r}_2$ & $BT1$ & $BT2$ & $BT3$ & $BT1$ & $BT2$ & $BT3$\\
\hline
1	&	0	&	$<r_2$	&	0.002	&	0.000	&	0.000	&	0.002	&	0.000	&	0.000	\\
1	&	1	&	$<r_2$	&	0.000	&	0.002	&	0.000	&	0.000	&	0.002	&	0.000	\\
1	&	2	&	$<r_2$	&	0.000	&	0.000	&	0.000	&	0.000	&	0.000	&	0.000	\\
2	&	0	&	$<r_2$	&	0.002	&	0.000	&	0.000	&	0.010	&	0.000	&	0.002	\\
2	&	1	&	$<r_2$	&	0.004	&	0.004	&	0.002	&	0.012	&	0.004	&	0.002	\\
2	&	2	&	$<r_2$	&	0.000	&	0.000	&	0.000	&	0.006	&	0.002	&	0.000	\\
\hline
&&&\multicolumn{3}{c|}{$R_2=\lfloor N/2\rfloor$ ($p=1$), $R_2=\lfloor N/2\rfloor$ ($p>1$) }&\multicolumn{3}{c}{$R_2=\lfloor N/2\rfloor$ ($p=1$), $R_2=\lfloor N/2\rfloor$ ($p>1$) }\\
&&&\multicolumn{3}{c|}{$u=  5$, $\delta^*=10^{-5}$}&\multicolumn{3}{c}{$u= 5$, $\delta^*=10^{-1}$}\\
$r_2$ & $r_3\,$ & $\wh{r}_2$ & $BT1$ & $BT2$ & $BT3$ & $BT1$ & $BT2$ & $BT3$\\
\hline
1	&	0	&	$<r_2$	&	0.002	&	0.000	&	0.000	&	0.002	&	0.000	&	0.000	\\
1	&	1	&	$<r_2$	&	0.000	&	0.002	&	0.000	&	0.000	&	0.002	&	0.000	\\
1	&	2	&	$<r_2$	&	0.000	&	0.000	&	0.000	&	0.000	&	0.000	&	0.000	\\
2	&	0	&	$<r_2$	&	0.014	&	0.002	&	0.002	&	0.032	&	0.008	&	0.006	\\
2	&	1	&	$<r_2$	&	0.018	&	0.006	&	0.006	&	0.122	&	0.038	&	0.010	\\
2	&	2	&	$<r_2$	&	0.010	&	0.004	&	0.002	&	0.116	&	0.040	&	0.012	\\
\hline
\hline
\end{tabular}
}
\end{table}

\end{document}